\definecolor{tumblue}{RGB}{0,101,189}
\tikzset{%
  vertex/.style={draw=black,circle,fill=black,text width=3pt,inner sep=0pt,outer sep=1pt},%
  arc/.style={-{Latex}}
}
\def\EMAIL#1{\href{mailto:#1}{#1}}
\theoremstyle{definition}
\newtheorem{definition}{Definition}
\theoremstyle{plain}
\newtheorem{lemma}[definition]{Lemma}
\newtheorem{theorem}[definition]{Theorem}
\newtheorem{proposition}[definition]{Proposition}
\newtheorem{corollary}[definition]{Corollary}
\theoremstyle{remark}
\newtheorem{example}[definition]{Example}
\Crefname{definition}{Definition}{Definitions}
\crefname{definition}{definition}{definitions}
\Crefname{lemma}{Lemma}{Lemmas}
\crefname{lemma}{Lemma}{Lemmas}
\Crefname{theorem}{Theorem}{Theorems}
\crefname{theorem}{Theorem}{Theorems}
\Crefname{proposition}{Proposition}{Propositions}
\crefname{proposition}{Proposition}{Propositions}
\Crefname{corollary}{Corollary}{Corollaries}
\crefname{corollary}{Corollary}{Corollaries}
\Crefname{observation}{Observation}{Observations}
\crefname{observation}{Observation}{Observations}
\Crefname{remark}{Remark}{Remarks}
\crefname{remark}{Remark}{Remarks}
\Crefname{algocf}{Algorithm}{Algorithms}
\crefname{algocf}{Algorithm}{Algorithms}
\Crefname{algorithm}{Algorithm}{Algorithms}
\crefname{algorithm}{Algorithm}{Algorithms}
\Crefname{problem}{Problem}{Problems}
\crefname{problem}{Problem}{Problems}
\Crefname{figure}{Figure}{Figures}
\crefname{figure}{Figure}{Figures}
\let\Rn\relax
\let\d\relax
\ProvideDocumentCommand\N{}{\mathbb{N}}
\ProvideDocumentCommand\Rnn{}{\mathbb{R}_{< 0}}
\ProvideDocumentCommand\Rn{}{\mathbb{R}_{\le 0}}
\ProvideDocumentCommand\R{}{\mathbb{R}}
\ProvideDocumentCommand\Rp{}{\mathbb{R}_{\ge 0}}
\ProvideDocumentCommand\Rpp{}{\mathbb{R}_{> 0}}
\ProvideDocumentCommand\one{}{\mathbbm{1}}
\ProvideDocumentCommand\Nu{}{N}
\ProvideDocumentCommand\d{}{\,\text{d}}
\ProvideDocumentCommand\doubleprime{}{{\prime\mkern-1mu\prime}}
\newcommand*\boldcdot{\mathpalette\boldcdot@{.75}}
\newcommand*\boldcdot@[2]{\mathbin{\vcenter{\hbox{\scalebox{#2}{$\m@th#1\bullet$}}}}}
\ProvideDocumentCommand\series{}{\mathbin{*}}
\ProvideDocumentCommand\parallel{}{\mathbin{\|}}
\ProvideDocumentCommand\cupdot{}{\mathbin{\mathaccent\cdot\cup}}
\DeclareMathOperator*{\essinf}{ess\,inf}
\DeclareMathOperator*{\esssup}{ess\,sup}
\DeclareMathOperator{\dom}{dom}
\DeclareMathOperator{\graph}{graph}
\DeclareMathOperator{\hyp}{hyp}
\DeclareMathOperator{\diag}{diag}
\DeclareMathOperator{\Id}{Id}
\ProvideDocumentCommand\ubar{m}{\underaccent{\bar}{#1}}
\NewDocumentCommand\xDeclarePairedDelimiter{mmm}
 {%
  \NewDocumentCommand#1{som}{%
   \IfNoValueTF{##2}
    {\IfBooleanTF{##1}{#2##3#3}{\mleft#2##3\mright#3}}
    {\mathopen{##2#2}##3\mathclose{##2#3}}%
  }%
 }
\xDeclarePairedDelimiter\floor{\lfloor}{\rfloor}
\xDeclarePairedDelimiter\ceil{\lceil}{\rceil}
\xDeclarePairedDelimiter\set{\lbrace}{\rbrace}
\xDeclarePairedDelimiter\abs{|}{|}
\xDeclarePairedDelimiter\card{|}{|}
\xDeclarePairedDelimiter\norm{\|}{\|}
\xDeclarePairedDelimiter\pos{[}{]_+}
\let\Det\det
\let\det\relax
\ProvideDocumentCommand\det{som}{%
  \Det\IfBooleanTF{#1}{#3}{\IfNoValueTF{#2}{\mleft(#3\mright)}{\mathopen{#2(}#3\mathclose{#2)}}}%
}%
\ProvideDocumentCommand\inv{m}{#1^{\raisebox{0.3ex}{$\scriptscriptstyle-\!1$}}}
\begin{document}

    \renewcommand{\arraystretch}{1.5}

    \title{%
      Computation of Dynamic Equilibria\\in Series-Parallel Networks%
      \thanks{This work has been supported by the Alexander von Humboldt Foundation with funds from the German Federal Ministry of Education and Research (BMBF).}%
    }

    \author{%
      Marcus Kaiser%
      \footnote{Operations Research, Technische Universität München, \EMAIL{marcus.kaiser@tum.de}}%
    }

    \maketitle

    \begin{abstract}
        We consider dynamic equilibria for flows over time under the fluid queuing model.
        In this model, queues on the links of a network take care of flow propagation.
        Flow enters the network at a single source and leaves at a single sink.
        In a dynamic equilibrium, every infinitesimally small flow particle reaches the sink as early as possible given the pattern of the rest of the flow.
        While this model has been examined for many decades, progress has been relatively recent.
        In particular, the derivatives of dynamic equilibria have been characterized as thin flows with resetting, which allowed for more structural results.

        Our two main results are based on the formulation of thin flows with resetting as linear complementarity problem and its analysis.
        We present a constructive proof of existence for dynamic equilibria if the inflow rate is right-monotone.
        The complexity of computing thin flows with resetting, which occurs as a subproblem in this method, is still open.
        We settle it for the class of two-terminal series-parallel networks by giving a recursive algorithm that solves the problem for all flow values simultaneously in polynomial time.
    \end{abstract}

    \section{Introduction}
    \label{section:introduction}

    Systems with large temporal fluctuation, as it emerges for example in road traffic, cannot be adequately captured by models based on static flows.
    This shortcoming is already addressed by \citet*{FF1962}, who look at flow that travels through a network at a (finite) speed.
    In their basic model for \emph{flows over time}, the network is defined on a directed graph.
    Each arc represents a link with some constant delay and capacity.
    Many optimization problems in this model are well-understood.
    \citet*{S2009} gives a good overview in his survey.

    The \emph{fluid queuing model}, as considered already by \citet*{V1969} in the context of transportation investments, builds upon this model and allows to study flow which exhibits selfish behavior.
    Each link in the network is equipped with a fluid queue at its entrance.
    If the capacity constraint for a link is violated, the excess flow is collected in its queue.
    Consequently, waiting in the queue imposes additional delay for flow to traverse a link.
    For a given inflow rate at the source, each infinitesimally small flow particle is considered a player in a \emph{routing game over time} who tries to reach the sink in the shortest possible time.
    Under this assumption, a \emph{dynamic equilibrium} describes a state of the system in which no player has an incentive to deviate.
    Each player is assumed to have full information.
    Hence, they anticipate the behavior of other flow particles and the state of the fluid queues at the respective moment they would reach the queues.
    As the queues respect the first-in-first-out principle, there cannot be overtaking of particles in a dynamic equilibrium.
    Therefore, the shortest path for a particle entering at the source is determined by all the flow which has entered before.

    \paragraph{Related work.}

    The problem at hand belongs to the class of \emph{dynamic traffic assignment problems}, as introduced by \citet*{MN1978a,MN1978b}.
    An overview of the topic is available by \citet*{PZ2001}.
    Traffic planners tend to base their decisions on complex simulations.
    The theory regarding the existence and computation of dynamic equilibria, however, has not caught up on the complexity of the models in use (see \citet*{CHOP2018}).

    Existence of dynamic equilibria for a class of models related to the one described above was shown by \citet*{ZM2000}.
    The work by \citet*{MW2010} concerns a general framework for dynamic congestion games.
    Their results regarding the existence of equilibria are the first to apply to the model as defined above.
    The involved methods, however, are non-constructive.
    The basis for a constructive proof of existence of dynamic equilibria and further insights into their nature is provided by \citet*{KS2011}.
    These authors establish a specific structure of the derivatives of dynamic equilibria which they called \emph{thin flows with resetting}.
    Based on this, they suggest a method to compute a dynamic equilibrium for constant inflow rates by integrating over thin flows with resetting.
    The integration is done in phases during which the derivative stays constant.
    One such step is called \emph{$\alpha$-extension}.

    \citet*{CCL2015} refine the notion of thin flows with resetting to \emph{normalized thin flows with resetting} and prove their existence and the uniqueness of the associated labels.
    This turns the $\alpha$-extension into a constructive proof of existence of dynamic equilibria for piecewise constant inflow rate.
    The authors augment this by also showing existence for inflow rates in $L^p(0, T)$, where $1 < p < \infty,\,T \ge 0,$ via variational inequalities (also for multiple origin-destination pairs).
    \citet*{SS2018} obtain an algorithmic approach for a multiple-source multiple-sink setting, in which the inflow rates at the sources are constant and the inflow of every source is routed to the sinks proportionally with respect to a global pattern.

    The model allows notions of a \emph{Price of Anarchy} with respect to various objectives for the social optimum.
    First results are given by \citet*{KS2011}.
    \citet*{BFA2015} analyze the Price of Anarchy in a Stackelberg game.
    By reducing the capacities of the links, a leader can ensure that a fixed amount of flow in a dynamic equilibrium reaches the sink within a factor of $\tfrac{e}{e - 1}$ compared to the time it would take in a social optimum of the original network.
    Further progress on this Price of Anarchy was made by \citet*{CCO2019}, who prove a bound of $\tfrac{e}{e - 1}$ under weaker assumptions, which get dispensable if a certain \emph{monotonicity conjecture} holds.

    \citet*{CCO2017} consider the long-term behavior of dynamic equilibria.
    They prove that for constant inflow rates the queues stay constant after a finite amount of time (given a necessary condition).
    At the same time, they give a small series-parallel network capturing distinctive phenomena of dynamic equilibria.

    Similar methods were successfully used on related models.
    \citet*{SVK2019} investigate a variant of the fluid queuing model in which the total amount of flow on a link at any point in time is bounded.
    If links fill up, congestion is propagated backwards across vertices to their ingoing links, which is called \emph{spillback}.
    \citet*{GH2019}, on the other hand, achieve existence results for a different concept of equilibria in the fluid queuing model which does not assume that players have full information.

    \paragraph{Our contribution.}

    We linearize a known non-linear complementarity problem by \citet*{CCL2011} and use it to examine normalized thin flows with resetting and their dependency on the flow value (which is determined by the inflow rate).
    This enables us to generalize the $\alpha$-extension to a larger class of inflow rates, namely right-monotone, locally integrable functions.
    Here, a function is \emph{right-monotone} if each point is the lower end point of a closed interval on which the function is monotone.
    We also use our new insights into the properties of normalized thin flows with resetting to tackle their computation.
    This subproblem in the $\alpha$-extension is known to be in PPAD~(see~\cite{P1994}), and it remains unclear whether it lies in P.
    We settle the computational complexity for the class of two-terminal series-parallel networks (see~\cite{BLS1999}) by giving a polynomial-time algorithm that solves the problem for all flow values simultaneously.

    \paragraph{Structure of this paper.}

    This paper is organized as follows.
    \Cref{section:model} introduces the fluid queuing model in a more formal manner.
    The most important known results for our purposes are stated.
    This is continued in \cref{section:equilibrium,section:normalizedthinflow} in which the notion of a dynamic equilibrium and normalized thin flows are defined, and known characterizations are given.
    In \cref{section:normalizedthinflow}, we further examine normalized thin flows with resetting parametrized by the flow value, which is needed in the subsequent two sections.
    \Cref{section:evolution} generalizes the known constructive proof of the existence of dynamic equilibria to be able to cope with right-monotone, locally integrable inflow rates.
    In \cref{section:seriesparallel}, we consider the computation of normalized thin flows with resetting.
    We prove that this computation can be done efficiently for two-terminal series-parallel networks.
    \cref{section:conclusion} finishes with some concluding remarks.

    \section{The model}
    \label{section:model}

    Throughout this paper, we consider a network which is given by a directed graph $G = (V, A)$, positive arc capacities $\nu \in \Rpp^A$, and non-negative delays $\tau \in \Rp^A$.
    We assume that there is no (directed) cycle $C \subseteq A$ in $G$ with zero total transit time, i.e.,  $\sum_{a \in C} \tau_a = 0$.
    For a set of vertices $U \subseteq V$, we denote the set of incoming and outgoing arcs by $\delta^-(U)$ and $\delta^+(U)$, respectively.
    Further, $\delta(U) := \delta^-(U) \cup \delta^+(U)$ are all arcs crossing the cut $U$.
    If $U$ consists only of a single vertex $u \in V$, we write $\delta^-(u)$ and $\delta^+(u)$ for $\delta^-(U)$ and $\delta^+(U)$.
    Closely related to that, $N^-(u) := \set{v \in V\colon (v, u) \in A}$ and $N^+(u) := \set{w \in V\colon (u, w) \in A}$ denote the sets of in-neighbors and out-neighbors of $u$, respectively.

    Let $s \in V$ be the source and $t \in V$ be the sink of a single commodity.
    Without loss of generality, we can assume that there is a (directed) $s$--$v$-path in $G$ for every vertex $v \in V$ (by removing other vertices).
    Further, the rate at which flow enters the network at the source~$s$ at any point in time is described by a non-negative function $\nu_0 \in L^{1}_{\text{loc}}(\R)$ that vanishes almost everywhere on $\Rnn$.
    Here, $L^{1}_{\text{loc}}(\R)$ denotes the set of \emph{locally (Lebesgue-)integrable functions} on $\R$. (As usual, we identify functions which are equal almost everywhere.)
    Let~$\Nu_0\colon \R \to \Rp$ denote the cumulative inflow, i.e., $\Nu_0(\vartheta) = \int_0^\vartheta \nu_0(\theta) \d\theta$ for every $\vartheta \in \R$.
    Then, $\Nu_0$ is \emph{locally absolutely continuous}, i.e., it is absolutely continuous on every bounded interval.

    \medskip
    A \emph{flow over time} in the network is defined by two functions $f^+, f^-\colon \R \to \Rp^A$ which vanish almost everywhere on $\Rnn$.
    For any $a \in A$, we denote by $f^+_a$ the function that maps $\vartheta \in \R$ to the entry $a$ of $f^+(\vartheta)$, i.e., $f^+_a(\vartheta) = \big( f^+(\vartheta) \big)_a$.
    An analogous notation is used for $f^-$ and other vector-valued functions.
    For $a \in A$, the functions $f^+_a$ and $f^-_a$ have to be locally integrable and represent the inflow and outflow rates of arcs, respectively, in dependency on the time.
    Define the cumulative flow functions $F^+, F^-\colon \R \to \Rp^A$ by
    \begin{equation*}
        F^+_a(\vartheta) := \int_0^\vartheta f^+_a(\theta) \d\theta
        \quad\text{and}\quad
        F^-_a(\vartheta) := \int_0^\vartheta f^-_a(\theta) \d\theta
        \quad\text{for all } a \in A \text{ and } \vartheta \in \R.
    \end{equation*}

    In the following, we describe the dynamics of the fluid queuing model, i.e., the conditions a flow over time has to satisfy in order to obey the model.
    Each link represented by an arc $a \in A$ has a \emph{fluid queue} at its entrance.
    The inflow and the outflow rate of $a$ are related by the dynamics of that queue, see \cref{fig:link}.
    Flow that enters the link first has to pass through the queue before it may traverse it.
    The capacity of the arc limits the rate at which flow can leave the queue.
    After leaving, the flow takes exactly $\tau_a$ units of time to traverse the link.
    We assume that queues initially are empty and \emph{operate at capacity}, i.e., as much flow as the capacity permits leaves the queue.
    Let $z_a\colon \R \to \Rp$ be the function that describes the cumulative flow which resides in the queue of $a$ in dependency on the time.
    Therefore, $z_a$ is the unique locally absolutely continuous solution (see, e.g., \citet*[p. 106]{F1988}) to
    \begin{equation*}
        \tag{QD}
        \label{eq:queuedynamics}
        z_a \equiv 0 \text{ on } \R_{\le 0}
        \quad
        \text{ and }
        \quad
        \tfrac{\text{d} z_a}{\text{d}\vartheta}(\vartheta)
        =
        \begin{cases}
            \pos[\big]{ f^+_a(\vartheta) - \nu_a } &\text{if } z_a(\vartheta) \le 0 \\
            \phantom{\big(} f^+_a(\vartheta) - \nu_a \phantom{\big)} &\text{if } z_a(\vartheta) > 0
        \end{cases}
        \quad
        \text{for a.e.\ }\vartheta \ge 0 .
    \end{equation*}
    As flow takes exactly $\tau_a$ units of time after leaving the queue to traverse $a$, we require for all times~$\vartheta \in \R$ that $z_a(\vartheta) = F^+_a(\vartheta) - F^-_a(\vartheta + \tau_a)$ holds.
    Together with \eqref{eq:queuedynamics} this determines the outflow rate of $a \in A$ as
    \begin{equation*}
        f^-_a(\vartheta + \tau_a) = f^+_a(\vartheta) - \tfrac{\text{d} z_a}{\text{d} \vartheta} (\vartheta)
        =
        \begin{cases}
            \min \set[\big]{ f^+_a(\vartheta), \nu_a } &\text{if } z_a(\vartheta) = 0 \\
            \nu_a &\text{if } z_a(\vartheta) > 0
        \end{cases} \quad\text{for a.e.\ } \vartheta \in \R.
    \end{equation*}
    Finally, we impose strict flow conservation at any vertex  $v \in V \setminus \set{t}$ but the sink, i.e.,
    \begin{equation*}
        \sum_{\mathclap{a \in \delta^+(v)}} f^+_a(\vartheta) - \sum_{\mathclap{a \in \delta^-(v)}} f^-_a(\vartheta)
        =
        \begin{cases}
            \nu_0(\vartheta) &\text{if } v = s \\
            0 &\text{if } v \in V \setminus \set{s, t}
        \end{cases}
        \quad
        \text{for a.e.\ } \vartheta \in \R .
    \end{equation*}
    Any flow over time which obeys the queuing dynamics and strict flow conservation is called \emph{feasible} in the following.

    \begin{figure}[t]
        \centering
        \begin{tikzpicture}[x=16pt,y=4pt]
            \fill[lightgray] (3,1) rectangle (5,-1);
            \shade[left color=lightgray, right color=white] (4.5,1) rectangle (5,-1);
            \shade[left color=white, right color=lightgray!50!] (6,1) rectangle (6.75,-1);
            \shade[left color=lightgray!50!, right color=white] (6.75,1) rectangle (8,-1);
            \fill[lightgray!75!] (10,1) rectangle (12,-1);
            \shade[left color=white, right color=lightgray!75!] (9,1) rectangle (11,-1);
            \fill[lightgray] (0,3) rectangle (4,-3) node[black,pos=.5] {$z_a(\vartheta)$};
            \draw (0, 3) -- (4, 3) -- (4, 1) -- (12, 1);
            \draw (0,-3) -- (4,-3) -- (4,-1) -- (12,-1);

            \draw[->,>=stealth] (0,-4) -- node[below]{\small$f_a^+(\vartheta)$} (1,-4);
            \draw[->,>=stealth] (11,-4) -- node[below]{\small$f_a^-(\vartheta)$} (12,-4);

            \draw [decorate,decoration={brace,amplitude=2pt,raise=4pt}] (0,3) -- ( 4,3) node[black,midway,yshift=12pt] {\small$q_a(\vartheta)$};
            \draw [decorate,decoration={brace,amplitude=2pt,raise=4pt}] (4.05,3) -- (12,3) node[black,midway,yshift=12pt] {\small$\tau_a$};
            \draw [decorate,decoration={brace,amplitude=2pt,raise=2pt}] (12,1) -- (12,-1) node[black,midway,xshift=12pt] {\small$\nu_a$};

        \end{tikzpicture}
        \caption{A link represented by $a \in A$ with transit time $\tau_a$, capacity $\nu_a$, inflow rate $f^+_a$, outflow rate $f^-_a$, queue length $z_a$, and queuing delay $q_a$.}
        \label{fig:link}
    \end{figure}
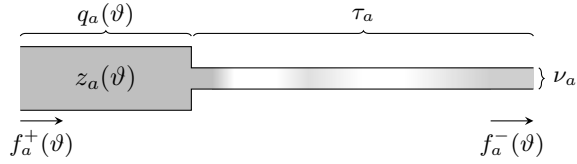

    The above defines the model from a cumulative point of view.
    Whilst it describes feasible behavior of flow in its entirety, it does not keep track of single flow particles.
    In the routing game that we are about to consider, each infinitesimally small flow particle is considered a player and, therefore, must be trackable.
    In order to allow that, the first-in-first-out principle is imposed on the queues.
    Doing so, the \emph{queuing delay} on an arc $a \in A$ which is experienced by flow entering it at time $\vartheta \in \R$ is defined as $q_a(\vartheta) := \min \set{q \ge 0\colon F^-_a(\vartheta + \tau_a + q) \ge F^+_a(\vartheta) }$.
    For queues operating at capacity, this evaluates to $q_a(\vartheta) = \tfrac{z_a(\vartheta)}{\nu_a}$.
    Therefore, a flow particle which enters an arc $a \in A$ at time~$\vartheta$ leaves it exactly at time~$T_a(\vartheta) := \vartheta + q_a(\vartheta) + \tau_a$.
    Based on strict flow conservation, flow that enters a path $P = (a_1, \ldots, a_{\card{P}}) \in A^{\card{P}}$ at time $\vartheta$ reaches its end exactly at time $\ell^P (\vartheta) := T_{a_{\card{P}}} \circ T_{a_{\card{P} - 1}} \circ \ldots \circ T_{a_1} (\vartheta)$.
    Let $\mathcal{P}_{sv}$ denote the set of all $s$--$v$-paths in~$G$.
    The \emph{earliest time} that a particle, which leaves the source at time $\vartheta \in \R$, can reach $v$ is $\ell_v(\vartheta) := \min_{P \in \mathcal{P}_{sv}} \ell^P(\vartheta)$.
    Every path attaining the minimum is called a \emph{dynamic shortest $s$--$v$-path} relative to time $\vartheta$.

    Switching from this path-based definition to an arc-based view, $\ell\colon \R \to \Rp^V$ is determined by the  Bellman equations
    \begin{equation*}
        \tag{BE}
        \label{eq:bellmanequation}
        \ell_s(\vartheta) = \vartheta
        \qquad \text{ and } \qquad
        \ell_w(\vartheta) = \min_{\mathclap{\substack{a \in \delta^-(w) \\ a = (v, w)}}} T_a \big( \ell_v(\vartheta) \big) \quad \text{ for all } w \in V \setminus \set{s} \text{ and } \vartheta \in \R .
    \end{equation*}
    The arcs for which the minima in \eqref{eq:bellmanequation} are attained are called \emph{active} relative to time $\vartheta$.
    An $s$--$v$-path in $G$ that uses only active arcs relative to $\vartheta$ is also a dynamic shortest path relative to $\vartheta$.
    Another type of arcs plays an important role in the model.
    An arc $(v, w) \in A$ with non-zero queue length at time $\ell_v(\vartheta)$ is called \emph{resetting} relative to time $\vartheta$.
    Note that $q, T$, and $\ell$ depend on the flow over time $(f^+, f^-)$.

    \section{Dynamic equilibria}
    \label{section:equilibrium}

    Interpreting each infinitesimally small flow particle in an $s$--$t$-flow over time as a player who wants to minimize her travel time from $s$ to $t$ defines a routing game over time.
    A dynamic equilibrium is a flow over time in which no player can strictly decrease her travel time by deviating on her own.
    A flow particle that starts at $s$ at time $\vartheta \in \R$ only uses active arcs relative to $\vartheta$ and, therefore, reaches every vertex $v$ on its path from $s$ to $t$ as early as possible at time~$\ell_v(\vartheta)$.
    That way, no particle starting at $s$ after time $\vartheta$ can overtake and delay it.

    \begin{definition}[Dynamic equilibrium; cf.\ \cite{CCL2015}]
        \label{definition:dynamicequilibrium}
        A feasible flow over time $(f^+, f^-)$ is a \emph{dynamic equilibrium} if for each arc $a = (v, w) \in A$ the inflow rate $f^+_a$ vanishes almost everywhere on the set~$\set{ \ell_v(\vartheta)\colon a \text{ inactive relative to } \vartheta \in \R }$.
    \end{definition}

    \citet{CCL2015} show that dynamic equilibria can be characterized in terms of cumulative flows and the earliest times function as stated in the next lemma.
    \begin{lemma}[Characterization of dynamic equilibria]
        \label{lem:dynamicequilibrium:characterization}
        A feasible flow over time $(f^+, f^-)$ is a dynamic equilibrium if and only if
        $F^+_a\big(\ell_v(\vartheta)\big) = F^-_a\big(\ell_w(\vartheta)\big)$ for all $a = (v, w) \in A$ and $\vartheta \in \R$.
    \end{lemma}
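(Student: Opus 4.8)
The plan is to analyse, for every arc $a=(v,w)$, the two non-decreasing functions $\vartheta\mapsto F^+_a(\ell_v(\vartheta))$ and $\vartheta\mapsto F^-_a(\ell_w(\vartheta))$ and to show that their difference $d_a(\vartheta):=F^+_a(\ell_v(\vartheta))-F^-_a(\ell_w(\vartheta))$ vanishes identically precisely in the equilibrium case. First I would record the identity $F^-_a(T_a(\xi))=F^+_a(\xi)$ for every arc $a$ and every $\xi\in\R$: since the queue operates at capacity, $q_a(\xi)=\min\{q\ge0:F^-_a(\xi+\tau_a+q)\ge F^+_a(\xi)\}$ gives ``$\ge$'', while $z_a(\xi)=F^+_a(\xi)-F^-_a(\xi+\tau_a)\ge0$ together with continuity of $F^-_a$ and minimality of $q_a(\xi)$ gives ``$\le$''. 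Feeding this into the Bellman equations — which give $\ell_w(\vartheta)\le T_a(\ell_v(\vartheta))$, with equality iff $a$ is active relative to $\vartheta$ — and using monotonicity of $F^-_a$ shows $d_a\ge0$ everywhere and $d_a(\vartheta)=0$ whenever $a$ is active relative to $\vartheta$. I would also note that $d_a$ is continuous and that $I_a:=\{\vartheta:a\text{ inactive relative to }\vartheta\}$ is open, being the preimage of $(0,\infty)$ under the continuous map $\vartheta\mapsto T_a(\ell_v(\vartheta))-\ell_w(\vartheta)$ (using continuity of $\ell$ and of $z_a$). It then remains to understand $d_a$ on $I_a$.

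For the ``only if'' direction I would assume $(f^+,f^-)$ is a dynamic equilibrium, fix $a=(v,w)$, and decompose $I_a$ into its countably many connected components $J$. For each $J$ the set $\{\ell_v(\vartheta):\vartheta\in J\}$ is an interval contained in $\{\ell_v(\vartheta):a\text{ inactive relative to }\vartheta\}$, so by definition of equilibrium $f^+_a=0$ almost everywhere on it; since $F^+_a$ is absolutely continuous, $\vartheta\mapsto F^+_a(\ell_v(\vartheta))$ is then constant on $J$, and hence $d_a$ is non-increasing on $J$. At the left end of $J$ the function $d_a$ vanishes: either the finite left endpoint lies outside $I_a$, where $a$ is active, or $J$ is unbounded to the left, in which case $d_a(\vartheta)=0$ for all sufficiently negative $\vartheta$ because the queues are then empty and $F^\pm_a$ vanish on $(-\infty,0]$. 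A non-negative, non-increasing function that is zero at its left end is identically zero, so $d_a\equiv0$ on every $J$, hence on all of $\R$.

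For the ``if'' direction I would assume $F^+_a(\ell_v(\vartheta))=F^-_a(\ell_w(\vartheta))$ for all arcs and all $\vartheta$, fix $a=(v,w)$ and $\vartheta\in I_a$, so $\ell_w(\vartheta)<T_a(\ell_v(\vartheta))$. The identity above gives $F^-_a(\ell_w(\vartheta))=F^+_a(\ell_v(\vartheta))=F^-_a(T_a(\ell_v(\vartheta)))$, so $F^-_a$ is constant on $[\ell_w(\vartheta),T_a(\ell_v(\vartheta))]$. By continuity of $\ell$ this persists for nearby times, which — via a connectedness argument along each component $J$ of $I_a$ — forces $\vartheta\mapsto F^+_a(\ell_v(\vartheta))$ to be constant on $J$, hence $F^+_a$ to be constant on the interval $\{\ell_v(\vartheta):\vartheta\in J\}$, hence $f^+_a=0$ almost everywhere there. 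Taking the union over all components would give that $f^+_a$ vanishes almost everywhere on $\{\ell_v(\vartheta):a\text{ inactive relative to }\vartheta\}$, i.e.\ that $(f^+,f^-)$ is a dynamic equilibrium.

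The step I expect to require the most care is the very first one — turning the first-in-first-out/operate-at-capacity description of the queue into the clean identity $F^-_a(T_a(\xi))=F^+_a(\xi)$. After that the argument is a chain of monotonicity and continuity observations, and the only subtle points are the behaviour of $\ell$ and $F^\pm_a$ as $\vartheta\to-\infty$ and the fact that $\ell_v$ need not be injective, which is why I would argue throughout with images of intervals rather than with a substitution $\xi=\ell_v(\vartheta)$.
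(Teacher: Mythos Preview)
The paper does not give its own proof of this lemma; it attributes the result to \citet{CCL2015} and states it without proof. Your proposal is therefore not competing against anything in the paper itself.

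That said, your argument is correct and is essentially the standard proof one finds in \citet{CCL2015}. The key identity $F^-_a(T_a(\xi))=F^+_a(\xi)$, the inequality $d_a\ge 0$ with equality on active arcs, the openness of $I_a$, and the component-wise analysis (using that $F^+_a\circ\ell_v$ is constant on each component in the equilibrium case, while $F^-_a\circ\ell_w$ is non-decreasing) all appear there in the same form. Your treatment of the left endpoint of a component, including the unbounded case via $\ell_v(\vartheta)\le 0$ for sufficiently negative $\vartheta$, is handled correctly. In the ``if'' direction, the local step you sketch --- that for $\vartheta<\vartheta'$ close enough in a component $J$ one has $\ell_w(\vartheta')\le T_a(\ell_v(\vartheta))$ and hence $F^+_a(\ell_v(\vartheta'))=F^-_a(\ell_w(\vartheta'))\le F^-_a(T_a(\ell_v(\vartheta)))=F^+_a(\ell_v(\vartheta))$ --- is exactly the right way to make the connectedness argument precise; you should spell this out rather than leave it as ``persists for nearby times''. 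The only background fact you use without comment is that $\ell_w$ is non-decreasing (needed for $F^-_a\circ\ell_w$ to be non-decreasing); this follows because each $T_a$ is non-decreasing by FIFO, hence so is every $\ell^P$ and their minimum.
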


    For a flow over time with earliest times function $\ell$ and a time $\vartheta \in \R$, we define the two sets of arcs
    $A^\prime_{\vartheta} := \set[\big]{ a = (v, w) \in A\colon \ell_w(\vartheta) \ge \ell_v(\vartheta) + \tau_a }$
    and
    $A^*_{\vartheta} := \set[\big]{ a = (v, w) \in A\colon \ell_w(\vartheta) > \ell_v(\vartheta) + \tau_a }$.
    The following descriptions of these arc sets by \citet{CCL2015} give more intuition of their meaning.
    If $\ell$ is the earliest times function of a dynamic equilibrium, $A^\prime_{\vartheta}$ and $A^*_{\vartheta}$ are exactly the sets of active arcs $\set[\big]{a = (v, w) \in A\colon \ell_w(\vartheta) = T_a \big( \ell_v(\vartheta) \big) }$ relative to $\vartheta$ and resetting arcs $\set[\big]{a = (v, w) \in A\colon z_a\big( \ell_v(\vartheta) \big) > 0}$ relative to $\vartheta$, respectively.
    $A^\prime_{\vartheta}$ and $A^*_{\vartheta}$ are typically defined by this characterization.
    We use the definition based on $\ell$ only as we will need these arc sets for flows over time which are not in equilibrium and, hence, for which the characterization does not hold.

    $A^\prime_\vartheta$ and $A^*_\vartheta$ are acyclic, as $G$ does not contain cycles with total transit time zero.
    For dynamic equilibria, we additionally know that $A^\prime_\vartheta$ contains an $s$--$v$-path for every $v \in V$.
    It follows from the above characterization of $A^\prime_\vartheta$ in a dynamic equilibrium and the Bellman equations \eqref{eq:bellmanequation}.
    This makes the triple $G^\prime_\vartheta := (V, A^\prime_\vartheta, A^*_\vartheta)$ a shortest path graph with resetting in the following sense.
    \begin{definition}[Shortest path graphs with resetting]
        \label{def:shortestpathgraphswithresetting}
        A \emph{shortest path graph with resetting} is a triple $(V, A^\prime, A^*)$ such that $A^* \subseteq A^\prime \subseteq V \times V$ holds, $A^\prime$ is acyclic, and $A^\prime$ contains an $s$--$v$-path for all $v \in V$.
    \end{definition}

    \section{Normalized thin flows with resetting}
    \label{section:normalizedthinflow}

    This section repeats and extends some properties of the derivatives of dynamic equilibria.
    Assume that $(f^+, f^-)$ is a dynamic equilibrium.
    We define $x\colon \R \to \Rp^A$ by $x_a \equiv F^+_a \circ \ell_v$ for all $a = (v, w) \in A$.
    From strict flow conservation and \cref{lem:dynamicequilibrium:characterization}, it follows that for every $\vartheta \in \R$, the vector $x(\vartheta)$ is a static $s$--$t$-flow in $G$ as it satisfies
    \begin{equation*}
        \sum_{\mathclap{a \in \delta^+(v)}} x_a(\vartheta) - \sum_{\mathclap{a \in \delta^-(v)}} x_a(\vartheta)
         = \sum_{\mathclap{a \in \delta^+(v)}} F^+_a \big(\ell_v(\vartheta)\big) - \sum_{\mathclap{a \in \delta^-(v)}} F^-_a \big(\ell_v(\vartheta)\big)
         = \begin{cases}
            \Nu_0(\vartheta) &\text{if } v = s \\
            0 &\text{if } v \in V \setminus \set{s, t} .
        \end{cases}
    \end{equation*}
    The functions $x$ and $\ell$ are locally absolutely continuous.
    Therefore, the families $x$ and $\ell$ are defined by initial conditions given by the empty network and their derivatives, which exist almost everywhere.
    The derivatives fulfill the following definition, which agrees with the definition of normalized thin flows with resetting by \citet{CCL2015} with the only difference that we do not fix the label of $s$ to 1.
    \begin{definition}[Normalized thin flow with resetting]
        \label{definition:normalizedthinflow}
        For a shortest path graph with resetting $G^\prime = (V, A^\prime, A^*)$,
        a static $s$--$t$-flow $x^\prime \in \Rp^{A^\prime}$ in $(V, A^\prime)$ is called a \emph{normalized thin $s$--$t$-flow with resetting} in $G^\prime$ if there exist labels $\ell^\prime \in \Rp^V$ such that
        \begin{alignat*}{2}
            \ell^\prime_w &=  \min_{\substack{a \in \delta^-(w) \cap A^\prime \\ a = (v, w)}}\varrho^a\big( \ell^\prime_v, x^\prime_a \big) &\quad&\text{for all } w \in V \setminus \set{s}\text{, and}
            \\
            \ell^\prime_w &= \varrho^a\big( \ell^\prime_v, x^\prime_a \big) &\quad&\text{for all } a = (v, w) \in A^\prime \text{ with } x^\prime_a > 0,
        \end{alignat*}%
        where the behavior of the labels along an arc $a = (v, w) \in A^\prime$ is prescribed by the function
        \begin{equation*}
            \varrho^a\colon \Rp \times \Rp \to \Rp, \quad (\ell^\prime_v, x^\prime_a) \mapsto
            \begin{cases}
                \max \set[\big]{ \ell^\prime_v, \tfrac{x^\prime_a}{\nu_a} } &\text{if } a \in A^\prime \setminus A^*\\
                \tfrac{x^\prime_a}{\nu_a} &\text{if } a \in A^* .
            \end{cases}
        \end{equation*}
    \end{definition}

    Set $x^\prime := \tfrac{\text{d} x}{\text{d} \vartheta^+}$ and $\ell^\prime := \tfrac{\text{d} \ell}{\text{d} \vartheta^+}$ to be the right-derivatives of $x$ and $\ell$ wherever they exist, which is almost everywhere.
    Differentiating the flow conservation constraints for $x(\vartheta)$ yields flow conservation for $x^\prime(\vartheta)$.
    The Bellman equations and the conditions on a dynamic equilibrium imply that the flow $x^\prime(\vartheta)$ is a normalized thin $s$--$t$-flow of value $\nu_0(\vartheta)$ with resetting in $G^\prime_\vartheta$ and has corresponding labels $\ell^\prime(\vartheta)$.
    This characterization of the derivatives was found by~\citet{KS2011} for a slightly different notion of equilibria and investigated further by \citet{CCL2015}.

    The complementarity conditions in \cref{definition:normalizedthinflow} make complementarity problems a natural candidate for describing normalized thin flows with resetting.
    \citet{CCL2011} take this approach with a non-linear complementarity problem.
    The non-linearities in it are caused by taking minima and maxima.
    We observe that those can be resolved to linear complementarity conditions by introducing auxiliary variables.
    In order to write down the resulting linear complementarity problem in matrix form, we will need the following algebraic representation of graphs.
    \begin{definition}[Incidence matrix]
        Let $G^\prime = (V, A^\prime)$ be a directed graph.
        The (directed) incidence matrix $B \in \set{-1, 0, 1}^{V \times A^\prime}$ of $G^\prime$ is defined by
        \begin{equation*}
            B_{v, a} = \begin{cases}
                -1 & \text{if } a \in \delta^+(v) \\
                1 & \text{if } a \in \delta^-(v) \\
                0 & \text{otherwise}
            \end{cases}
            \qquad
            \text{for all } v \in V, a \in A^\prime .
        \end{equation*}
        Its positive and negative part are denoted by $B^+$ and $B^-$, respectively.
        In particular, $B = B^+ - B^-$.
    \end{definition}

    \begin{theorem}[Linear complementarity problem]
        \label{theorem:normalizedthinflow:linearcomplementarityproblem}
        Let $G^\prime = (V, A^\prime, A^*)$ be a shortest path graph with resetting and $\nu \in \Rpp^{A^\prime}$ be capacities.
        Further, let $\nu_0^\prime \ge 0$ be an inflow rate and $\ell^\prime_0 \ge 0$ a label.
        With the incidence matrix $B$ of $G^\prime$ and the diagonal matrix $D := \diag(\nu)$, define
        \begin{equation*}
            M :=
            \begin{pmatrix}
                \one_s^\top & 0 & 0 \\
                0 & B_{V \setminus \set{s}} & 0 \\
                - \big(B^+\big)^{\!\top} & \inv{D} & \Id_{\boldcdot, A^\prime \setminus A^*} \\
                - \big(B^-_{\boldcdot, A^\prime \setminus A^*}\big)^{\!\top} & \inv{D}_{A^\prime \setminus A^*, \boldcdot} & \Id
            \end{pmatrix}
            \in \R^{\big(V \cupdot A^\prime \cupdot (A^\prime \setminus A^*)\big) \times \big(V \cupdot A^\prime \cupdot (A^\prime \setminus A^*)\big)} .
        \end{equation*}
        \clearpage
        \noindent
        For $\ell^\prime \in \R^V$, $x^\prime \in \R^{A^\prime}$, and $y^\prime \in \R^{A^\prime \setminus A^*}$, the following two statements are equivalent.
        \begin{enumerate}[label=\roman*),ref=(\roman*)]
          \item \label{theorem:normalizedthinflow:linearcomplementarityproblem:lcp}
          $z^\prime := (\ell^\prime, x^\prime, y^\prime)$ is a solution to the linear complementarity problem
          \begin{equation}
            z \ge 0,
            \qquad
            M z - \ell^\prime_0 \one_s - \nu^\prime_0\one_t \ge 0,
            \qquad
            z^\top \big( M z - \ell^\prime_0 \one_s - \nu^\prime_0\one_t \big) = 0
            \label{eq:linearcomplementarityproblem}
            \tag{LCP}
          \end{equation}
          and additionally satisfies the normalization constraint $\ell^\prime_w \ge \min_{\substack{v \in N^-(w)}} \ell^\prime_v$ for every $w \in V \setminus \set{s}$ with $\delta^-(w) \cap A^* = \emptyset$ .
          \item \label{theorem:normalizedthinflow:linearcomplementarityproblem:ntf}
          $\ell^\prime$ are the corresponding labels of the normalized thin $s$--$t$-flow $x^\prime$ with resetting in $G^\prime$ of value $\nu_0^\prime$ and label $\ell^\prime_s = \ell^\prime_0$, and $y^\prime_a = \pos[\big]{\ell^\prime_v - \tfrac{x^\prime_a}{\nu_a}}$ holds for all $a = (v, w) \in A^\prime \setminus A^*$.
        \end{enumerate}
    \end{theorem}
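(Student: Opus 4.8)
The plan is to expand the complementarity system \eqref{eq:linearcomplementarityproblem} row by row and to match each scalar complementarity condition with one of the defining properties of a normalized thin flow with resetting in \cref{definition:normalizedthinflow}. Writing $z = (\ell^\prime, x^\prime, y^\prime)$ and, as a shorthand, $\varrho_a := \varrho^a(\ell^\prime_v, x^\prime_a)$ for $a = (v,w) \in A^\prime$, I would first record what the block rows of $Mz - \ell^\prime_0\one_s - \nu^\prime_0\one_t$ are: the row indexed by $s$ is $\ell^\prime_s - \ell^\prime_0$; the row indexed by $p \in V \setminus \set{s}$ is the net excess $(Bx^\prime)_p$ of $x^\prime$ at $p$, with an extra $-\nu^\prime_0$ when $p = t$; the row indexed by $a = (v,w) \in A^\prime$ is $\tfrac{x^\prime_a}{\nu_a} - \ell^\prime_w$ if $a \in A^*$ and $\tfrac{x^\prime_a}{\nu_a} + y^\prime_a - \ell^\prime_w$ if $a \notin A^*$; and the row indexed by $a = (v,w) \in A^\prime \setminus A^*$ is $\tfrac{x^\prime_a}{\nu_a} + y^\prime_a - \ell^\prime_v$. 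Two elementary facts will then carry most of the work: $\pos{u} - u = \pos{-u}$ and $\pos{u}\pos{-u} = 0$ for every $u \in \R$; and, once $y^\prime_a = \pos{\ell^\prime_v - \tfrac{x^\prime_a}{\nu_a}}$ is known, the quantity $\tfrac{x^\prime_a}{\nu_a} + y^\prime_a$ (for $a \notin A^*$), respectively $\tfrac{x^\prime_a}{\nu_a}$ (for $a \in A^*$), equals exactly $\varrho_a$. I would also note at the outset that $\delta^-(s) \cap A^\prime = \emptyset$, since an arc entering $s$ together with the $s$--$v$-path guaranteed in $A^\prime$ by \cref{def:shortestpathgraphswithresetting} would close a cycle, contradicting acyclicity; hence no $A^\prime$-row for an arc entering $s$ ever arises.

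For \ref{theorem:normalizedthinflow:linearcomplementarityproblem:lcp}$\Rightarrow$\ref{theorem:normalizedthinflow:linearcomplementarityproblem:ntf} I would argue as follows. The row at $s$ gives $\ell^\prime_s \ge \ell^\prime_0 \ge 0$ together with $\ell^\prime_s(\ell^\prime_s - \ell^\prime_0) = 0$, forcing $\ell^\prime_s = \ell^\prime_0$. The $A^\prime \setminus A^*$ rows say $y^\prime_a \ge \max\set{0, \ell^\prime_v - \tfrac{x^\prime_a}{\nu_a}}$ with $y^\prime_a(y^\prime_a - (\ell^\prime_v - \tfrac{x^\prime_a}{\nu_a})) = 0$, which pins $y^\prime_a = \pos{\ell^\prime_v - \tfrac{x^\prime_a}{\nu_a}}$; substituting this back, the $A^\prime$ rows become $\varrho_a - \ell^\prime_w \ge 0$ with $x^\prime_a(\varrho_a - \ell^\prime_w) = 0$, that is, $\ell^\prime_w \le \varrho_a$ for every in-arc $a$ of $w$ in $A^\prime$, with equality whenever $x^\prime_a > 0$. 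To upgrade $\ell^\prime_w \le \min_a \varrho_a$ to an equality for $w \in V \setminus \set{s}$ (which has an in-arc in $A^\prime$), I would distinguish cases: if some resetting in-arc of $w$ carries positive $x^\prime$, then $\ell^\prime_w$ equals its $\varrho$-value, so $\ell^\prime_w \ge \min_a\varrho_a$; if a resetting in-arc carries zero $x^\prime$, its $\varrho$-value is $0 \le \ell^\prime_w$; and if $\delta^-(w) \cap A^* = \emptyset$, then $\ell^\prime_w < \varrho_a$ for all in-arcs $a$ would force $x^\prime_a = 0$, hence $\varrho_a = \ell^\prime_v$, for all of them, so $\min_a\varrho_a = \min_{v \in N^-(w)}\ell^\prime_v \le \ell^\prime_w$ by the normalization constraint of \ref{theorem:normalizedthinflow:linearcomplementarityproblem:lcp} --- a contradiction. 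Finally, $x^\prime \ge 0$ is part of \eqref{eq:linearcomplementarityproblem}; strict excess at any $p \in V \setminus \set{s}$ would, by complementarity at the $p$-row, force $\ell^\prime_p = 0$, yet positive excess also forces some in-arc $a$ of $p$ with $x^\prime_a > 0$ and hence $\ell^\prime_p = \varrho_a \ge \tfrac{x^\prime_a}{\nu_a} > 0$; so all excesses vanish except the prescribed $\nu^\prime_0$ at $t$, and $(Bx^\prime)_s = -\nu^\prime_0$ follows because the entries of $Bx^\prime$ sum to $0$. Thus $x^\prime$ is a static $s$--$t$-flow of value $\nu^\prime_0$ in $(V, A^\prime)$ with labels $\ell^\prime$ as required, and $y^\prime_a = \pos{\ell^\prime_v - \tfrac{x^\prime_a}{\nu_a}}$ has been established along the way.

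The converse \ref{theorem:normalizedthinflow:linearcomplementarityproblem:ntf}$\Rightarrow$\ref{theorem:normalizedthinflow:linearcomplementarityproblem:lcp} is then a plain verification against the same row-by-row description: $z \ge 0$ because $\ell^\prime, x^\prime \ge 0$ by hypothesis and every $y^\prime_a$ is a $\pos{\cdot}$; the $s$-row and the $V \setminus \set{s}$ rows hold with equality (from $\ell^\prime_s = \ell^\prime_0$ and flow conservation of $x^\prime$); the $A^\prime$ rows read $\varrho_a - \ell^\prime_w \ge 0$ because $\ell^\prime_w$ is the minimum of the $\varrho$-values over in-arcs (the case $w = s$ being vacuous by the observation above), with $x^\prime_a(\varrho_a - \ell^\prime_w) = 0$ from the second thin-flow equation; the $A^\prime \setminus A^*$ rows equal $\pos{\tfrac{x^\prime_a}{\nu_a} - \ell^\prime_v} \ge 0$ after inserting $y^\prime_a = \pos{\ell^\prime_v - \tfrac{x^\prime_a}{\nu_a}}$ and using $\pos{u} - u = \pos{-u}$, and the complementarity there is $\pos{u}\pos{-u} = 0$ with $u = \ell^\prime_v - \tfrac{x^\prime_a}{\nu_a}$; and the normalization constraint holds since for a non-resetting in-arc $\varrho_a = \max\set{\ell^\prime_v, \tfrac{x^\prime_a}{\nu_a}} \ge \ell^\prime_v$, whence $\ell^\prime_w = \min_a\varrho_a \ge \min_{v \in N^-(w)}\ell^\prime_v$. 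I expect the only genuinely delicate step to be the forward direction's passage from $\ell^\prime_w \le \varrho_a$ to the Bellman-type minimum: this is exactly where the extra normalization constraint is indispensable --- a bare \eqref{eq:linearcomplementarityproblem} solution can have $\ell^\prime_w$ strictly below the minimum when no resetting arc enters $w$ --- and the case distinction on the resetting in-arcs of $w$, together with the excess-versus-label argument for flow conservation, has to be carried out with care.
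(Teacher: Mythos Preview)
Your proposal is correct and follows essentially the same row-by-row verification as the paper's own proof. The only notable difference is cosmetic: for flow conservation in the forward direction you argue by a single contradiction (strict excess at $p$ forces both $\ell^\prime_p = 0$ via complementarity and $\ell^\prime_p = \varrho_a \ge \tfrac{x^\prime_a}{\nu_a} > 0$ via a positive in-arc), whereas the paper splits into the cases $\ell^\prime_w > 0$ and $\ell^\prime_w = 0$; the logical content is the same.
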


    For the index sets of the rows and columns of the matrix $M$, we use the slight abuse of notation $V \cupdot A^\prime \cupdot (A^\prime \setminus A^*)$.
    The disjoint union of $A^\prime$ and $A^\prime \setminus A^*$ refers to the fact that we regard $A^\prime \setminus A^*$ as a copy of the respective subset of $A^\prime$ in this context.
    To indicate what row or column we refer to by an arc $a$, we explicitly distinguish between $a \in A^\prime$ and $a \in A^\prime \setminus A^*$.
    The former is associated with the variable $x^\prime_a$, while the latter corresponds to $y^\prime_a$.

    \begin{proof}
        Let $z^\prime = (\ell^\prime, x^\prime, y^\prime)$ be a solution to \eqref{eq:linearcomplementarityproblem} which satisfies $\ell^\prime_w \ge \min_{\substack{v \in N^-(w)}} \ell^\prime_v$ for every $w \in V \setminus \set{s}$ with $\delta^-(w) \cap A^* = \emptyset$.
        The two complementary inequalities for $\ell^\prime_s$ are $\ell^\prime_s \ge 0$ and $\ell^\prime_s \ge \ell^\prime_0$.
        As $\ell^\prime_0 \ge 0$, the latter has to be fulfilled with equality.
        For $a \in A^\prime \setminus A^*$, the two inequalities associated with $y^\prime_a$ are $y^\prime_a \ge 0$ and $y^\prime_a \ge \ell^\prime_v - \tfrac{x^\prime_a}{\nu_a}$.
        Together with the complementarity condition this yields $y^\prime_a = \pos[\big]{\ell^\prime_v - \tfrac{x^\prime_a}{\nu_a}}$.
        \\
        Let $w \in V$.
        For $a = (v, w) \in \delta^-(w)$, the inequality corresponding to $x^\prime_a$ yields $\ell^\prime_w \le \tfrac{x^\prime_a}{\nu_a} = \varrho^a\big(\ell^\prime_v, x^\prime_a\big)$ if $a \in A^*$ and $\ell^\prime_w \le \tfrac{x^\prime_a}{\nu_a} + y^\prime_a = \max \set[\big]{\ell^\prime_v, \tfrac{x^\prime_a}{\nu_a}} = \varrho^a\big(\ell^\prime_v, x^\prime_a\big)$ if $a \not\in A^*$.
        In both cases, the inequality is guaranteed to be tight if $x^\prime_a > 0$ due to the complementarity conditions.
        It follow that $\ell^\prime_w \le \min_{a = (v, w) \in \delta^-(w)} \varrho^a\big(\ell^\prime_v, x^\prime_a\big)$ is valid and holds with equality in case that $\delta^-(w) \cap A^* \ne \emptyset$ or $\sum_{a \in \delta^-(w)} x^\prime_a > 0$.
        Thanks to the normalization constraints, we also obtain equality in the remaining case which is $\delta^-(w) \cap A^* = \emptyset$ and $\sum_{a \in \delta^-(w)} x^\prime_a = 0$.
        \\
        We are left to show strict flow conservation for $w \in V \setminus \set{s}$.
        If $\ell^\prime_w > 0$, it is implied by the complementarity condition for $\ell^\prime_w$.
        Thus, we assume $\ell^\prime_w = 0$.
        For every $a = (v, w) \in \delta^-(w)$, we know from above that $x_a > 0$ would imply $0 = \ell^\prime_w = \varrho^a\big(\ell^\prime_v, x^\prime_a\big)$ and, hence, the contradiction $x_a = 0$.
        Assuming $w = t$ and $\nu^\prime_0 > 0$, the inequality corresponding to $\ell^\prime_w$ yields the contradiction $0 \le \sum_{a \in \delta^+(t)} x^\prime_a \le \sum_{a \in \delta^-(t)} x^\prime_a - \nu^\prime_0 < 0$.
        Otherwise, the inequality corresponding to $\ell^\prime_w$ yields $\sum_{a \in \delta^+(w)} x^\prime_a \le \sum_{a \in \delta^-(w)} x^\prime_a = 0$.
        It follows that $x^\prime_a = 0$ for all $a \in \delta(w)$ which fulfills strict flow conservation at $w$.
        \\
        In total, $\ell^\prime$ proves $x^\prime$ to be a normalized thin $s$--$t$-flow with resetting in $G^\prime$ of value $\nu^\prime_0$ with label $\ell^\prime_s = \ell^\prime_0$.

        For the converse direction, let $x^\prime$ be a normalized thin $s$--$t$-flow with resetting in $G^\prime$ of value $\nu^\prime_0$ and corresponding labels $\ell^\prime$ with $\ell^\prime_s = \ell^\prime_0$.
        Set $y^\prime_a := \pos[\big]{\ell^\prime_v - \tfrac{x^\prime_a}{\nu_a}}$ for all $a = (v, w) \in A^\prime \setminus A^*$.
        We will show that $z^\prime := (\ell^\prime, x^\prime, y^\prime)$ is a solution to \eqref{eq:linearcomplementarityproblem}.
        $z^\prime$ clearly is non-negative.
        The complementarity condition for $\ell^\prime_s$ is fulfilled, as $M_{s,\bullet} z^\prime = \ell^\prime_s = \ell^\prime_0$.
        Also, the complementarity conditions for $V \setminus \set{s}$ are met since $x^\prime$ is an $s$--$t$-flow of value $\nu^\prime_0$ and, hence, $M_{V \setminus \set{s},\bullet} z^\prime = B_{V \setminus \set{s}} x^\prime = \nu^\prime_0 \one_t$.
        \\
        Let $a = (v, w) \in A^\prime$.
        The inequality associated with $x^\prime_a$ reads $\ell^\prime_w \le \tfrac{x^\prime_a}{\nu_a} = \varrho^a\big(\ell^\prime_v, x^\prime_a\big)$ if $a \in A^*$ and $\ell^\prime_w \le \tfrac{x^\prime_a}{\nu_a} + y^\prime_a = \max \set[\big]{\ell^\prime_v, \tfrac{x^\prime_a}{\nu_a}} = \varrho^a\big(\ell^\prime_v, x^\prime_a\big)$ if $a \not\in A^*$.
        In both cases, the respective inequality is valid due to the definition of normalized thin flows with resetting.
        Further, equality is guaranteed if $x^\prime_a > 0$.
        This shows that the complementarity conditions are satisfied for the variable $x^\prime_a$.
        \\
        For $a = (v, w) \in A^\prime \setminus A^*$, the inequality corresponding to variable $y^\prime_a$ reads $y^\prime_a \ge \ell^\prime_v - \tfrac{x^\prime_a}{\nu_a}$.
        This is clearly fulfilled by the above choice of $y^\prime$.
        Further, $y^\prime_a > 0$ implies $y^\prime_a = \pos[\big]{\ell^\prime_v - \tfrac{x^\prime_a}{\nu_a}} = \ell^\prime_v - \tfrac{x^\prime_a}{\nu_a}$.
        Therefore, the complementarity condition for $y^\prime_a$ is satisfied as well.
        \\
        The normalization constraints in \ref{theorem:normalizedthinflow:linearcomplementarityproblem:lcp} hold due to the definition of normalized thin flows with resetting.
        We conclude that \ref{theorem:normalizedthinflow:linearcomplementarityproblem:ntf} implies \ref{theorem:normalizedthinflow:linearcomplementarityproblem:lcp}.
    \end{proof}

    \cref{theorem:normalizedthinflow:linearcomplementarityproblem} shows that a solution $(\ell^\prime, x^\prime, y^\prime)$ to \eqref{eq:linearcomplementarityproblem} is nearly a normalized thin flows with resettings.
    Consulting its proof reveals that the normalization constraints in \ref{theorem:normalizedthinflow:linearcomplementarityproblem:lcp} are only needed for $w \in V \setminus \set{s}$ such that $\delta^-(w) \cap A^* = \emptyset$ and $\sum_{a \in \delta^-(w)} x^\prime_a = 0$.
    In this case, \eqref{eq:linearcomplementarityproblem} allows all values $0 \le \ell^\prime_w \le \min_{v \in N^-(w)} \ell^\prime_v$ while $\ell^\prime_w = \min_{v \in N^-(w)} \ell^\prime_v$ is necessary for $\ell^\prime$ to be the corresponding labels of $x^\prime$.
    As similarly observed by \citet{CCL2011} for their non-linear complementarity problem, any solution to \eqref{eq:linearcomplementarityproblem} can be normalized to fulfill \ref{theorem:normalizedthinflow:linearcomplementarityproblem:lcp}.
    For that purpose, we define a normalization function $\pi\colon \R^V \to \R^V$.
    Let $\ell^\prime$ be the labels of a solution to \eqref{eq:linearcomplementarityproblem}.
    Fix an arbitrary topological order of the vertices in $G^\prime$ and successively set for $w \in V$ in that order
    \begin{equation*}
      \pi_w(\ell^\prime)
      =
      \begin{cases}
          \max\set{\ell^\prime_w, \min_{v \in N^-(w)} \pi_v(\ell^\prime)}
          &\text{if } w \ne s \text{ and } \delta^-(w) \cap A^* = \emptyset
      \\
          \ell^\prime_w
          &\text{otherwise}
      \end{cases} .
    \end{equation*}
    Note that the definition is independent of the particular topological order that was chosen.

    \begin{lemma}
        \label{lemma:normalizedthinflow:normalization}
        The normalization function $\pi\colon \R^V \to \R^V$ as defined above satisfies
        \begin{enumerate}[label=\roman*),ref=(\roman*)]
            \item \label{lemma:normalizedthinflow:normalization:monotone}
            $\ell^\prime \le \pi(\ell^\prime)$ for all $\ell^\prime \in \Rp^V$, and
            \item \label{lemma:normalizedthinflow:normalization:lipschitz}
            $\norm{\pi(\ell^\doubleprime) - \pi(\ell^{\prime})}_\infty \le \norm{\ell^\doubleprime - \ell^{\prime}}_\infty$ for all $\ell^\prime, \ell^\doubleprime \in \Rp^V$.
        \end{enumerate}
    \end{lemma}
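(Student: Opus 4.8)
The first statement needs no real argument and can be read off the recursion: for each $w \in V$ we have either $\pi_w(\ell^\prime) = \ell^\prime_w$ or $\pi_w(\ell^\prime) = \max\set{\ell^\prime_w, \min_{v \in N^-(w)} \pi_v(\ell^\prime)}$, and in both cases $\pi_w(\ell^\prime) \ge \ell^\prime_w$, hence $\ell^\prime \le \pi(\ell^\prime)$ for every $\ell^\prime \in \Rp^V$. The only point to verify is that the minimum in the first branch ranges over a nonempty set; this holds because, $G^\prime$ being a shortest path graph with resetting, $A^\prime$ contains an $s$--$w$-path for every $w$, so $w$ has at least one in-neighbour in $G^\prime$.

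For the second statement I would fix the topological order of $G^\prime$ that is used in the definition of $\pi$, set $c := \norm{\ell^\doubleprime - \ell^\prime}_\infty$, and prove the pointwise estimate $\abs{\pi_w(\ell^\doubleprime) - \pi_w(\ell^\prime)} \le c$ for all $w \in V$ by induction along that order; taking the maximum over $w$ then yields \ref{lemma:normalizedthinflow:normalization:lipschitz}. The two elementary facts to invoke are that a finite $\min$ and a binary $\max$ are nonexpansive with respect to the supremum norm, i.e.\ $\abs{\min_i a_i - \min_i b_i} \le \max_i \abs{a_i - b_i}$ and $\abs{\max\set{a,b} - \max\set{a^\prime,b^\prime}} \le \max\set{\abs{a - a^\prime}, \abs{b - b^\prime}}$. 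In the ``otherwise'' branch of the recursion $\pi_w$ merely copies $\ell^\prime_w$, so the estimate reduces to $\abs{\ell^\doubleprime_w - \ell^\prime_w} \le c$. In the other branch, every $\pi_v(\cdot)$ that enters the rule for $w$ has already been controlled by the induction hypothesis (such $v$ precede $w$ in the order), giving $\abs{\pi_v(\ell^\doubleprime) - \pi_v(\ell^\prime)} \le c$ and therefore $\abs{\min_v \pi_v(\ell^\doubleprime) - \min_v \pi_v(\ell^\prime)} \le c$; combining this with $\abs{\ell^\doubleprime_w - \ell^\prime_w} \le c$ through the nonexpansiveness of $\max$ closes the inductive step.

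I do not anticipate a genuine obstacle here. Conceptually, $\pi$ is built as a composition of coordinate projections, binary maxima, and finite minima, and any such composition is automatically $1$-Lipschitz for $\norm{\cdot}_\infty$, so \ref{lemma:normalizedthinflow:normalization:lipschitz} is essentially forced; the induction is only the bookkeeping that makes the dependency structure of the recursion explicit, which is exactly why a topological order of the acyclic graph $G^\prime$ is fixed at the outset. The single subtlety to keep in mind throughout is that every minimum occurring in the definition must range over a nonempty set, and this is guaranteed by the $s$--$v$-connectivity required of a shortest path graph with resetting.
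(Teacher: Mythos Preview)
Your proposal is correct and follows essentially the same approach as the paper: part \ref{lemma:normalizedthinflow:normalization:monotone} is read off directly from the recursion, and part \ref{lemma:normalizedthinflow:normalization:lipschitz} is established by induction along the topological order, using that $\max$ and $\min$ are nonexpansive in the $\norm{\cdot}_\infty$ norm to obtain $\abs{\pi_w(\ell^\doubleprime) - \pi_w(\ell^\prime)} \le \max\set[\big]{\abs{\ell^\doubleprime_w - \ell^\prime_w}, \max_{v \in N^-(w)} \abs{\pi_v(\ell^\doubleprime) - \pi_v(\ell^\prime)}}$. The paper's proof is simply a more compressed version of what you wrote.
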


    \begin{proof}
        \ref{lemma:normalizedthinflow:normalization:monotone} follows immediately from the definition of $\pi$.
        To see \ref{lemma:normalizedthinflow:normalization:lipschitz}, we prove by induction on $w \in V$ in topological order that $\abs{\pi_w(\ell^\doubleprime) - \pi_w(\ell^{\prime})} \le \norm{\ell^\doubleprime - \ell^{\prime}}_\infty$.
        For $w = s$ and all $w \in V$ such that $\delta^-(w) \cap A^* \ne \emptyset$, by definition $\pi_w(\ell^\prime) = \ell^\prime_w$ and, hence, the statement holds.
        For $w \in V \setminus \set{s}$ with $\delta^-(w) \cap A^* = \emptyset$, assume that the induction hypothesis holds for all vertices $v$ which are topologically preceding $w$.
        Then, the claim also holds for $w$ based on
        \begin{equation*}
            \abs{\pi_w(\ell^\doubleprime) - \pi_w(\ell^{\prime})}
            \le
            \max \set[\Big]{\abs{\ell^\doubleprime_w - \ell^{\prime}_w}, \max_{v \in N^-(w)} \abs{\pi_v(\ell^\doubleprime) - \pi_v(\ell^{\prime})}}
            \le
            \norm{\ell^\doubleprime - \ell^\prime}_\infty .
            \tag*{\qedhere}
        \end{equation*}
    \end{proof}

    \begin{corollary}[Normalization]
        \label{corollary:normalizedthinflow:linearcomplementarityproblem}
        Let $G^\prime = (V, A^\prime, A^*)$ be a shortest path graph with resetting and $\nu \in \Rpp^{A^\prime}$ be capacities.
        Further, let $\nu_0^\prime \ge 0$ be an inflow rate and $\ell^\prime_0 \ge 0$ a label.
        If $z^\prime = (\ell^\prime, x^\prime, y^\prime)$ is a solution to \eqref{eq:linearcomplementarityproblem}, then $\pi(\ell^\prime)$ are the corresponding labels of the normalized thin $s$--$t$-flow $x^\prime$ with resetting in $G^\prime$ of value $\nu^\prime_0$ with label $\ell^\prime_s = \ell^\prime_0$.
    \end{corollary}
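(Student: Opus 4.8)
The plan is to reduce the statement to the implication \ref{theorem:normalizedthinflow:linearcomplementarityproblem:lcp}~$\Rightarrow$~\ref{theorem:normalizedthinflow:linearcomplementarityproblem:ntf} of \cref{theorem:normalizedthinflow:linearcomplementarityproblem}. Given a solution $z^\prime = (\ell^\prime, x^\prime, y^\prime)$ of \eqref{eq:linearcomplementarityproblem}, set $\ell^\doubleprime := \pi(\ell^\prime)$, put $y^\doubleprime_a := \pos[\big]{\ell^\doubleprime_v - \tfrac{x^\prime_a}{\nu_a}}$ for every $a = (v, w) \in A^\prime \setminus A^*$, and let $z^\doubleprime := (\ell^\doubleprime, x^\prime, y^\doubleprime)$. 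I would show that $z^\doubleprime$ is again a solution of \eqref{eq:linearcomplementarityproblem} which in addition satisfies the normalization constraints required in \ref{theorem:normalizedthinflow:linearcomplementarityproblem:lcp}; \cref{theorem:normalizedthinflow:linearcomplementarityproblem} then identifies $\ell^\doubleprime = \pi(\ell^\prime)$ as the corresponding labels of the normalized thin flow $x^\prime$. First I would record what the proof of \cref{theorem:normalizedthinflow:linearcomplementarityproblem} already establishes about any solution $z^\prime$ of \eqref{eq:linearcomplementarityproblem}: $x^\prime$ is an $s$--$t$-flow of value $\nu^\prime_0$, one has $y^\prime_a = \pos[\big]{\ell^\prime_v - \tfrac{x^\prime_a}{\nu_a}}$, and $\ell^\prime_w \le \varrho^a(\ell^\prime_v, x^\prime_a)$ for every $a = (v, w) \in A^\prime$ with equality whenever $x^\prime_a > 0$. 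In addition, $x^\prime$ vanishes on $\delta^+(t)$ because $A^\prime$ is acyclic, so its flow decomposition consists of simple $s$--$t$-paths, none using an arc out of $t$.

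The crucial point is the claim that \emph{$\pi$ leaves the labels at the endpoints of flow-carrying arcs untouched}: if $a = (v, w) \in A^\prime$ satisfies $x^\prime_a > 0$, then $\pi_v(\ell^\prime) = \ell^\prime_v$ and $\pi_w(\ell^\prime) = \ell^\prime_w$. I would prove this by induction along a topological order of $A^\prime$, ordering arcs by their heads. For the tail: if $v \ne s$, then $v \ne t$ (as $x^\prime$ vanishes on $\delta^+(t)$), so flow conservation at $v$ yields an arc $b = (u, v)$ with $x^\prime_b > 0$, whose head $v$ precedes $w$ since $A^\prime$ is acyclic; the induction hypothesis applied to $b$ gives $\pi_v(\ell^\prime) = \ell^\prime_v$ (this backward walk along flow-carrying arcs terminates at $s$, where $\pi_s(\ell^\prime) = \ell^\prime_s$ by definition). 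For the head: assume $\pi_w(\ell^\prime) > \ell^\prime_w$; by the definition of $\pi$ this forces $w \ne s$, $\delta^-(w) \cap A^* = \emptyset$, and $\ell^\prime_w < \min_{u \in N^-(w)} \pi_u(\ell^\prime)$. Then $a$ lies in $A^\prime \setminus A^*$, so from $x^\prime_a > 0$ and the recorded facts we get $\ell^\prime_w = \varrho^a(\ell^\prime_v, x^\prime_a) = \max\set[\big]{\ell^\prime_v, \tfrac{x^\prime_a}{\nu_a}} \ge \ell^\prime_v = \pi_v(\ell^\prime) \ge \min_{u \in N^-(w)} \pi_u(\ell^\prime) > \ell^\prime_w$, a contradiction.

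Granting the claim, it remains to verify that $z^\doubleprime$ solves \eqref{eq:linearcomplementarityproblem} together with the normalization constraints, which is bookkeeping against the definitions. Nonnegativity of $z^\doubleprime$ follows from $\ell^\doubleprime \ge \ell^\prime \ge 0$ (\cref{lemma:normalizedthinflow:normalization}); the row of $\one_s$ is unchanged because $\pi_s(\ell^\prime) = \ell^\prime_s = \ell^\prime_0$; the rows indexed by $V \setminus \set{s}$ only express flow conservation of $x^\prime$ and are insensitive to the labels; and the rows for $y^\doubleprime$ hold by the choice of $y^\doubleprime$. For an arc $a = (v, w) \in A^\prime$ one checks $\ell^\doubleprime_w \le \varrho^a(\ell^\doubleprime_v, x^\prime_a)$ by cases: if $\pi_w(\ell^\prime) = \ell^\prime_w$, then $\ell^\doubleprime_w = \ell^\prime_w \le \varrho^a(\ell^\prime_v, x^\prime_a) \le \varrho^a(\ell^\doubleprime_v, x^\prime_a)$ by monotonicity of $\varrho^a$ in its first argument and $\ell^\doubleprime_v \ge \ell^\prime_v$; if $\pi_w(\ell^\prime) > \ell^\prime_w$, then $a \in A^\prime \setminus A^*$ and $\ell^\doubleprime_w = \max\set[\big]{\ell^\prime_w, \min_{u \in N^-(w)} \ell^\doubleprime_u} \le \max\set[\big]{\varrho^a(\ell^\prime_v, x^\prime_a), \ell^\doubleprime_v} \le \varrho^a(\ell^\doubleprime_v, x^\prime_a)$, since $v \in N^-(w)$. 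When $x^\prime_a > 0$, the claim makes $\ell^\doubleprime_v = \ell^\prime_v$ and $\ell^\doubleprime_w = \ell^\prime_w$, so this inequality is tight, which is the required complementarity. Finally, $\pi_w(\ell^\prime) \ge \min_{v \in N^-(w)} \pi_v(\ell^\prime)$ for every $w \in V \setminus \set{s}$ with $\delta^-(w) \cap A^* = \emptyset$ holds directly by the definition of $\pi$, so the normalization constraints are met and \cref{theorem:normalizedthinflow:linearcomplementarityproblem} finishes the proof.

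I expect the inductive proof of the claim about flow-carrying arcs to be the only genuine difficulty; the remainder is a matter of matching the rows of $M$ against the conditions defining normalized thin flows with resetting, exactly as in the proof of \cref{theorem:normalizedthinflow:linearcomplementarityproblem}.
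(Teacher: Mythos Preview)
Your argument is correct, and it takes a genuinely different route from the paper's proof. The paper argues indirectly: it picks among all solutions $z^{\doubleprime}$ to \eqref{eq:linearcomplementarityproblem} with $x^{\doubleprime}=x^{\prime}$ and $\ell^{\doubleprime}\le\pi(\ell^{\prime})$ one for which the defect set $U=\{v:\ell^{\doubleprime}_v<\pi_v(\ell^{\prime})\}$ is inclusion-minimal, and then shows $U=\emptyset$ by contradiction, so that $\ell^{\doubleprime}=\pi(\ell^{\prime})$. You instead construct the candidate $z^{\doubleprime}=(\pi(\ell^{\prime}),x^{\prime},y^{\doubleprime})$ explicitly and verify the \eqref{eq:linearcomplementarityproblem} conditions by hand. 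The engine of your verification is the lemma that $\pi$ fixes the labels at both endpoints of every flow-carrying arc; this is exactly what makes the complementarity for the $x^{\prime}$-rows go through, and your proof of it by walking back along flow-carrying arcs to $s$ is clean. The paper's minimality trick avoids having to isolate this lemma, at the cost of a less transparent argument; your version makes the mechanism visible and is closer in spirit to the forward direction of \cref{theorem:normalizedthinflow:linearcomplementarityproblem}. One small remark: the assertion that $x^{\prime}$ vanishes on $\delta^+(t)$ is correct (acyclicity plus flow conservation forces it), but you do not actually need it---if $v=t$ carried outgoing flow, flow balance at $t$ would still give a flow-carrying arc into $t$, and your induction would apply unchanged.
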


    \begin{proof}
        Let $z^\prime = (\ell^\prime, x^\prime, y^\prime)$ be a solution to \eqref{eq:linearcomplementarityproblem}.
        Let $z^\doubleprime = (\ell^\doubleprime, x^\doubleprime, y^\doubleprime)$ be another solution to \eqref{eq:linearcomplementarityproblem} such that $x^\doubleprime = x^\prime$, $\ell^\doubleprime \le \pi(\ell^\prime)$, and the set $U := \set{v \in V\colon \ell^\doubleprime_v < \pi_v(\ell^\prime)}$ is inclusion-minimal.
        Note that $z^\doubleprime$ exists since \cref{lemma:normalizedthinflow:normalization} guarantees that $z^\prime$ is feasible to the optimization problem defining $z^\doubleprime$.
        As seen in the proof of \cref{theorem:normalizedthinflow:linearcomplementarityproblem} for $a = (v, w) \in A$, the inequalities corresponding to $x_a$ for $z^\prime$ and $z^\doubleprime$ read $\ell^\prime_w \le \varrho^a\big(\ell^\prime_v, x^\prime_a\big)$ and $\ell^\doubleprime_w \le \varrho^a\big(\ell^\doubleprime_v, x^\doubleprime_a\big)$, respectively.
        Both hold with equality if $a \in A^*$ or $x^\prime_a = x^\doubleprime_a > 0$.
        \\
        Assume $U \ne \emptyset$.
        From the definition of $\pi$, we obtain $s \not\in U$.
        Let $w \in U$ be topologically minimal, in particular $N^-(w) \cap U = \emptyset$.
        Assume there is $a = (v, w) \in \delta^-(w)$ such that $a \in A^*$ or $x^\prime_a > 0$.
        Then $\ell^\prime_w = \varrho^a\big( \ell^\prime_v, x^\prime_a \big) \le \varrho^a\big( \pi_v(\ell^\prime), x^\prime_a \big) = \varrho^a\big( \ell^\doubleprime_v, x^\doubleprime_a \big) = \ell^\doubleprime_w < \pi_w(\ell^\prime)$ shows $a \not\in A^*$.
        But then $\ell^\doubleprime_w < \pi_w(\ell^\prime) = \min_{u \in N^-(w)} \pi_u(\ell^\prime) = \min_{u \in N^-(w)} \ell^\doubleprime_u \le \ell^\doubleprime_v = \varrho^a\big(\ell^\doubleprime_v, x^\doubleprime_a \big)$ contradicts the complementarity requirements.
        \\
        We conclude that $\delta^-(w) \cap A^* = \emptyset$ and $x^\prime$ vanishes on $\delta^-(w)$.
        Since weak flow conservation is part of \eqref{eq:linearcomplementarityproblem}, $x^\prime$ vanishes on $\delta^+(w)$ as well.
        Therefore, increasing $\ell^\doubleprime_w$ and $y^\doubleprime_a$ for all $a \in \delta^+(w)$ to $\min_{v \in N^-(w)} \ell^\doubleprime_v$ yields a solution to \eqref{eq:linearcomplementarityproblem} with strictly smaller $U$ and contradicts the choice of $z^\doubleprime$.
        Hence, $U = \emptyset$ and $\ell^\doubleprime = \pi(\ell^\prime)$ have to hold.
        Consequently, $z^\doubleprime = \big(\pi(\ell^\prime), x^\prime, y^\doubleprime\big)$ also fulfills the normalization constraints in  \cref{theorem:normalizedthinflow:linearcomplementarityproblem} \ref{theorem:normalizedthinflow:linearcomplementarityproblem:lcp} which implies the statement.
    \end{proof}

    \cref{corollary:normalizedthinflow:linearcomplementarityproblem} allows us to use theory on linear complementarity problems to study normalized thin flows with resetting.
    In the literature, results on linear complementarity problems are typically stated in terms of the properties of their matrices.
    The next theorem regards the sign of the principal minors of the matrix $M$ from \eqref{eq:linearcomplementarityproblem}.
    We need to clarify what is meant when speaking about the determinant of a square matrix without a linear ordering of its index set.
    Let $I$ be a finite index set of size $k := \card{I}$ and $K \in \R^{I \times I}$ be a square matrix.
    For an arbitrary permutation $\sigma\colon [k] \to I$, define $\det{K} := \det*{\left(K_{\sigma(i), \sigma(j)}\right)_{i, j \in [k]}}$.
    Note that $\det{K}$ is independent of $\sigma$ as the determinant is invariant under symmetric permutation of the rows and columns.

    \begin{lemma}[Principal minors of $M$]
        \label{lemma:linearcomplementarityproblem:p0}
        Let $G^\prime = (V, A^\prime, A^*)$ be a shortest path graph with resetting and $\nu \in \Rpp^{A^\prime}$ be capacities.
        Further, let $M$ be the corresponding matrix of \eqref{eq:linearcomplementarityproblem}.
        Then, all principal minors of $M$ are non-negative.
    \end{lemma}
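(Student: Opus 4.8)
The aim is to prove that $M$ is a $P_0$-matrix, i.e.\ that every principal submatrix $M_S$, indexed by some $S \subseteq V \cupdot A^\prime \cupdot (A^\prime\setminus A^*)$, has non-negative determinant. The structural feature I would exploit is that the block of $M$ attached to the auxiliary variables $y^\prime$ (the indices in $A^\prime\setminus A^*$) is an identity and can always be pivoted away. Write $S = S_V \cupdot S_{A^\prime} \cupdot S_y$. If some $a \in S_y$ has $a \notin S_{A^\prime}$, then inside $M_S$ the column of the corresponding $y$-variable is a unit vector, so expanding the determinant along it removes the index $a$ from both the row and the column set without a sign change; hence we may assume $S_y \subseteq S_{A^\prime}$. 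Then the $(S_y, S_y)$-block of $M_S$ equals $\Id$, so $\det{M_S}$ equals the determinant of the Schur complement of $M_S$ with respect to this block. A direct computation identifies this Schur complement with the matrix $N_S$ on $S_V \cupdot S_{A^\prime}$ whose rows are: $\one_s^\top$ for the index $s$ if $s \in S_V$; the incidence row $B_{v, \boldcdot}$ for every other $v \in S_V$; the row $-(B^+_{\boldcdot, a})^\top + \tfrac1{\nu_a} e_a^\top$ for each $a = (v,w) \in S_{A^\prime} \setminus S_y$; and the row $-(B_{\boldcdot, a})^\top$ for each $a = (v,w) \in S_y$, the capacity entry $\tfrac1{\nu_a}$ in the $x$-column having cancelled against its copy. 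Informally, pivoting out $y^\prime$ turns every kept non-resetting arc into a ``reversed incidence'' row and strips its diagonal capacity.

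The matrix $N_S$ still has positive diagonal entries, namely $1$ at $s$ (with row $\one_s^\top = e_s^\top$) and $\tfrac1{\nu_a}$ at every $a \in S_2 := S_{A^\prime}\setminus S_y$. Taking the Schur complement with respect to the principal block on $\set{s}\cupdot S_2$ (just $S_2$ if $s\notin S_V$), which is block-triangular with determinant $\prod_{a \in S_2}\tfrac1{\nu_a}$, and deleting the index $s$ (its row is still $e_s^\top$), reduces the statement to
\begin{equation*}
  \det{M_S} \;=\; \Bigl(\prod_{a \in S_2}\tfrac1{\nu_a}\Bigr)\,\det{\begin{pmatrix} W & P \\ -P^\top & 0\end{pmatrix}} \;\ge\; 0 ,
\end{equation*}
where, with $U := S_V \setminus \set{s}$, the matrix $P := B_{U, S_y}$ is an incidence matrix and $W := B_{U, S_2}\,\diag(\nu)_{S_2}\,(B^+_{U, S_2})^\top$. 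Here $W$ has non-negative diagonal and non-positive off-diagonal entries and is column-diagonally dominant: $W^\top$ is the principal submatrix on $U$ of the Laplacian of the subgraph on arcs $S_2$ with all arcs reversed (the only effect of the restriction being that the diagonal still carries the full in-degree in $S_2$, not only the in-degree within $U$). In particular all eigenvalues of $W$ have non-negative real part, so $\det W \ge 0$; this already settles the case $S_y = \emptyset$.

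The remaining inequality $\det{\begin{pmatrix} W & P \\ -P^\top & 0\end{pmatrix}} \ge 0$ is the core of the proof and the step I expect to be the main obstacle, since $\det W \ge 0$ alone does not account for the border $P$, and the symmetric part of the bordered matrix is in general not positive semidefinite, so the easy ``$S + K$ with $S\succeq 0$, $K$ skew'' argument is unavailable. My plan is a Cauchy--Binet / all-minors matrix-tree expansion: factor $\begin{pmatrix} W & P \\ -P^\top & 0\end{pmatrix}$ as a product of two rectangular integer matrices assembled from $B_{U, S_2}$, $B^+_{U, S_2}$ and $B_{U, S_y}$, so that its determinant becomes a signed sum of terms $\pm\prod_{a \in J}\nu_a$ over subsets $J \subseteq S_2$; total unimodularity of incidence matrices forces every combinatorial factor to be $0$ or $\pm 1$, and the surviving $J$ are exactly those for which $J \cupdot S_y$ is a spanning tree of the grounded graph and the heads of the arcs in $J$ are pairwise distinct. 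The remaining work is to show that all surviving terms carry the same sign, so that the sum is non-negative; here the acyclicity of $A^\prime$ (part of the definition of a shortest path graph with resetting) together with the orientation of the incidence matrices is what should exclude the configurations that would otherwise contribute with the wrong sign, and this sign bookkeeping is the delicate part. An alternative route avoiding the determinant expansion is to verify the vector characterisation of $P_0$-matrices directly — given $z^\prime = (\ell^\prime, x^\prime, y^\prime) \ne 0$, exhibit an index $i$ with $z^\prime_i \ne 0$ and $z^\prime_i(M z^\prime)_i \ge 0$ — by taking a topologically last ``active'' vertex in $A^\prime$ and propagating the sign conditions back towards $s$, where the first row pins $\ell^\prime_s$; but this runs into the same essential difficulty, namely making the back-propagation terminate.
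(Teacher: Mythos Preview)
Your reduction is correct and in fact lands on exactly the same bordered matrix as the paper: with $U = S_V\setminus\set{s}$, $S_2 = S_{A'}\setminus S_y$ and $H = (V, S_2)$, the Schur complements you take produce
\[
\det{M_S} \;=\; \Bigl(\prod_{a \in S_2}\tfrac1{\nu_a}\Bigr)\,\det{\begin{pmatrix} L^H_{U,U} & B_{U,S_y} \\ -B_{U,S_y}^\top & 0\end{pmatrix}},
\]
which is precisely the expression the paper reaches after its row operations (your $W$ is $L^H_{U,U}$ by \cref{lemma:laplacian:incidence}). So the first half of your plan is essentially the paper's argument.

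The genuine gap is the second half: you do not prove $\det\begin{pmatrix} W & P \\ -P^\top & 0\end{pmatrix} \ge 0$, only outline a Cauchy--Binet expansion whose sign bookkeeping you expect to be delicate. The paper avoids this expansion entirely with a short structural trick. First, arcs of $S_y$ that are not contained in $U\times U$ are peeled off by Laplace expansion, each contributing a factor $B_{u,a}^2 \ge 0$; and if the remaining $S_y$ contains an undirected cycle then $B_{U,S_y}$ has dependent columns and the determinant vanishes. So one may assume $(U,S_y)$ is a forest. Now pick one root $r_i$ in each tree $(U_i,Y_i)$, set $R = \set{r_1,\dots,r_k}$, and add the row and column of every $u\in U_i\setminus\set{r_i}$ to those of $r_i$. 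These unitary operations turn the $(R,R)$ block into the Laplacian $L^{\widehat H}_{R,R}$ of the contracted graph $\widehat H = H/S_y$ and kill the $(R,S_y)$ blocks, after which a block-triangular rearrangement gives
\[
\det{\begin{pmatrix} W & P \\ -P^\top & 0\end{pmatrix}} \;=\; \det[\big]{B_{U\setminus R,\,S_y}}^2 \cdot \det[\big]{L^{\widehat H}_{R,R}} \;\ge\; 0 ,
\]
the last inequality by the Ger\v{s}gorin argument you already have for $W$ itself. Two remarks on your plan: the directed acyclicity of $A'$ plays no role in this argument (only the undirected forest structure of $S_y$ is used), so your reliance on it for the sign analysis is a red herring; and your alternative route via the vector characterisation of $P_0$-matrices would have to establish something equivalent to this contraction, since the bordered determinant really is the crux.
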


    For the prove of this lemma, we will need (another) algebraic representation of graphs, which is the Laplacian matrix.
    It naturally appears due to its relation to the incidence matrix, which we describe next.
    The principal minors of $M$ (in the algebraic sense) are closely related to the determinants of the Laplacian matrix of the minors of $G^\prime$ (in the graph theoretic sense).
    In that way, the non-negativity of the minors of $M$ are ultimately based on the same property of the minors of Laplacian matrices.

    \begin{definition}[Weighted Laplacian matrix]
        \label{definition:laplacian}
        Let $G = (V, A^\prime)$ be a directed graph with arc weights $\nu \in \Rpp^{A^\prime}$.
        The weighted (directed) Laplacian matrix $L \in \R^{V \times V}$ of $G$ is defined by
        \begin{equation*}
            L_{v, w}
            =
            \begin{cases}
                \sum_{a \in \delta^-(v)} \nu_a
                & \text{if } v = w
            \\
                -\nu_{v, w}
                & \text{if } (v, w) \in A^\prime
            \\
                0
                & \text{otherwise}
            \end{cases}
            \qquad
            \text{for all } v, w \in V.
        \end{equation*}
    \end{definition}

    \begin{lemma}[Incidence and Laplacian matrix]
        \label{lemma:laplacian:incidence}
        Let $G = (V, A^\prime)$ be a directed graph with arc weights $\nu \in \Rpp^{A^\prime}$.
        Then, its incidence matrix $B \in \set{-1, 0, 1}^{V \times A^\prime}$ and its weighted Laplacian matrix $L \in \R^{V \times V}$ satisfy $L = B D \big(B^+\big)^{\!\top}$ where $D = \diag(\nu)$.
    \end{lemma}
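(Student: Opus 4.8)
I would prove this by a direct entrywise computation, comparing the $(v,w)$-entry of $BD\big(B^+\big)^{\!\top}$ with the three cases in the definition of the weighted Laplacian. Expanding the product, for $v, w \in V$ one has
\[
  \big(BD\big(B^+\big)^{\!\top}\big)_{v,w} = \sum_{a \in A^\prime} B_{v,a}\,\nu_a\,B^+_{w,a} = \sum_{a \in \delta^-(w)} B_{v,a}\,\nu_a ,
\]
where the second equality uses that $B^+_{w,a}$ equals $1$ if $a \in \delta^-(w)$ and $0$ otherwise, together with $D = \diag(\nu)$. It then remains to match this reduced sum against the three cases in \cref{definition:laplacian}.

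For $v = w$, every arc $a \in \delta^-(v)$ contributes $B_{v,a} = 1$, so the sum equals $\sum_{a \in \delta^-(v)} \nu_a = L_{v,v}$. For $v \neq w$ with $(v,w) \in A^\prime$, the only arc in $\delta^-(w)$ incident to $v$ is $a = (v,w)$ itself (it must have $v$ as its tail, and there are no parallel arcs since $A^\prime \subseteq V \times V$), and for that arc $B_{v,a} = -1$; hence the sum equals $-\nu_{v,w} = L_{v,w}$. For $v \neq w$ with $(v,w) \notin A^\prime$, no arc of $\delta^-(w)$ is incident to $v$, so the sum is empty, i.e.\ equal to $0 = L_{v,w}$. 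In every case the entry of $BD\big(B^+\big)^{\!\top}$ coincides with that of $L$, which gives the identity.

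There is no genuine obstacle; the argument is a routine unfolding of definitions. The only point worth stating with care is the middle case: one needs that an arc $a \in \delta^-(w)$ incident to a vertex $v \neq w$ must be exactly $(v,w)$, and that this arc is unique because $A^\prime$, being a subset of $V \times V$, contains neither parallel arcs nor (in the settings where the lemma is applied, $A^\prime$ being acyclic) loops.
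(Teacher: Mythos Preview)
Your proof is correct and takes essentially the same approach as the paper: both compute the $(v,w)$-entry of $B D (B^+)^\top$ directly and match it against the cases in the definition of $L$. The paper compresses the case analysis into a single line using indicator vectors $\one_{\delta^-(v)}$ and $\one_{\delta^+(v)}$, but the underlying argument is identical to yours.
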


    \begin{proof}
        For every $v, w \in V$, the definitions of the incidence matrix and the Laplacian matrix yield
        \begin{equation}
            B_{v, \boldcdot} D \big( B^+_{w, \boldcdot} \big)^{\!\top}
            =
            \big( \one_{\delta^-(v)} - \one_{\delta^+(v)} \big)^{\!\top} D \one_{\delta^-(w)}
            =
            \left\{
            \begin{aligned}
                \one_{\delta^-(v)}^\top D \one_{\delta^-(v)} \enskip & \text{if } v = w \\
                -\one_{\delta^+(v)}^\top D \one_{\delta^-(w)} \enskip & \text{if } v \ne w
            \end{aligned}
            \right\}
            = L_{v, w} .
            \tag*{\qedhere}
        \end{equation}
    \end{proof}

    \begin{lemma}[Principal minors of a Laplacian]
        \label{lemma:laplacian:p0}
        All principal minors of a graph's weighted Laplacian matrix are non-negative.
    \end{lemma}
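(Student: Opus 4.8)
The plan is to prove the claim for an arbitrary principal submatrix $L_{W,W}$, $W \subseteq V$, by combining the factorization $L = B D \big(B^+\big)^{\!\top}$ of \cref{lemma:laplacian:incidence} with the Cauchy--Binet formula. Restricting to the rows and columns indexed by $W$ yields $L_{W,W} = B_{W, \boldcdot}\, D\, \big( B^+_{W, \boldcdot} \big)^{\!\top}$, and hence
\[
    \det{L_{W,W}} = \sum_{\substack{T \subseteq A^\prime \\ \card{T} = \card{W}}} \Big( \prod_{a \in T} \nu_a \Big)\, \det{B_{W, T}}\, \det{B^+_{W, T}} ,
\]
where each summand is independent of the linear orderings chosen for $W$ and for $T$, because reordering either set multiplies $\det{B_{W,T}}$ and $\det{B^+_{W,T}}$ by the same sign. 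Since all capacities $\nu_a$ are positive, it then suffices to show $\det{B_{W,T}}\, \det{B^+_{W,T}} \ge 0$ for every such $T$.

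To analyse one summand, recall that each column of $B^+$ has exactly one nonzero entry, a $1$ in the row of the head of the corresponding arc. Hence $\det{B^+_{W,T}} = 0$ unless the map sending each arc of $T$ to its head is a bijection onto $W$, in which case $B^+_{W,T}$ is a permutation matrix, so that $\det{B^+_{W,T}} = \pm 1$ and $\big( B^+_{W,T} \big)^{-1} = \big( B^+_{W,T} \big)^{\!\top}$, and therefore
\[
    \det{B_{W,T}}\, \det{B^+_{W,T}} = \det{B_{W,T} \big( B^+_{W,T} \big)^{\!\top}} = \det{\Id - C} ,
\]
where a short computation identifies $C \in \set{0,1}^{W \times W}$ as the matrix with $C_{v,w} = 1$ exactly when $(v,w) \in T$. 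As the head map is a bijection, each $w \in W$ is the head of precisely one arc of $T$, so every column of $C$ contains at most one $1$; in particular $C$ is entrywise non-negative with all column sums at most $1$ and with zero diagonal.

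It remains to observe that $\det{\Id - C} \ge 0$ for such $C$. Since $C \ge 0$ has column sums at most $1$, its spectral radius is at most $1$, so every eigenvalue $\mu$ of $\Id - C$ has $\operatorname{Re}\mu \ge 0$; the non-real eigenvalues occur in conjugate pairs, each pair contributing $\abs{\mu}^2 \ge 0$ to the determinant, while the real ones are non-negative, so $\det{\Id - C} \ge 0$. Equivalently, $\Id - C$ is column-diagonally dominant with non-negative diagonal, a property shared by all its principal submatrices, and such matrices have non-negative determinant by Gershgorin's circle theorem applied to the transpose. Substituting back, every summand of the displayed sum is non-negative, which proves the lemma. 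The points requiring actual care are the identity $B_{W,T} \big( B^+_{W,T} \big)^{\!\top} = \Id - C$ and the bookkeeping making the Cauchy--Binet summands order-independent; the surviving terms turn out to be exactly the spanning branchings of $G$ whose roots are the vertices of $V \setminus W$, so this is in essence a proof of the all-minors matrix-tree theorem. One may also bypass Cauchy--Binet altogether, since $\one^{\!\top} L = \one^{\!\top} B D \big(B^+\big)^{\!\top} = 0$ shows that $L$ itself is column-diagonally dominant with non-negative diagonal, a property inherited by $L_{W,W}$.
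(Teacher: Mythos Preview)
Your proof is correct. The paper's own proof is precisely the one-line argument you tuck in at the end: from $\one^{\!\top} B = 0$ one gets that $L$ has zero column sums, non-negative diagonal, and non-positive off-diagonal, so every principal submatrix $L_{W,W}$ is weakly column-diagonally dominant with non-negative diagonal, and Ger\v{s}gorin's circle theorem forces all real eigenvalues of $L_{W,W}$ to be non-negative, hence $\det{L_{W,W}} \ge 0$.

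Your main route via Cauchy--Binet and the identification $B_{W,T}\big(B^+_{W,T}\big)^{\!\top} = \Id - C$ is a genuinely different argument. It is longer, but it buys something the paper's proof does not: an explicit combinatorial expansion of the minor as a sum of non-negative terms indexed by subsets $T \subseteq A^\prime$ in which each vertex of $W$ is the head of exactly one arc, i.e., essentially an all-minors matrix-tree statement. The paper only needs the sign, so it takes the short path; your detour recovers that sign \emph{and} an interpretation of the minor. It is worth noting, though, that your final step---showing $\det{\Id - C} \ge 0$ via the spectral-radius bound on $C$ or via diagonal dominance of $\Id - C$---is at bottom the same Ger\v{s}gorin reasoning the paper applies directly to $L_{W,W}$, so the Cauchy--Binet layer is additional structure rather than a way around that analytic fact.
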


    \begin{proof}
        Let $L \in \R^{V \times V}$ be the weighted Laplacian of a directed graph $G = (V, A^\prime)$ with weights $\nu \in \Rpp^{A^\prime}$.
        For $U \subseteq V$, the principal submatrix $L_{U, U}$ is (weakly) column diagonally dominant, as for every $w \in U$
        \begin{equation*}
            L_{w, w} = \enskip\sum_{\mathclap{a \in \delta^-(w)}} \nu_a \ge \quad\sum_{\mathclap{v \in N^-(w) \cap U}} \nu_{v, w} = \enskip\sum_{\mathclap{v \in U \setminus \set{w}}} \abs{L_{v, w}} .
        \end{equation*}
        It follows by the Ger\v{s}gorin circle theorem (see~\cite{G1931}) that all real eigenvalues of $L_{U, U}$ are non-negative and, hence, $\det[\big]{L_{U, U}} \ge 0$.
    \end{proof}

    \begin{proof}[Proof of \cref{lemma:linearcomplementarityproblem:p0}.]
        Let an arbitrary principal submatrix of $M$ be given by the index sets $U \subseteq V$, $X \subseteq A^\prime$, and $Y \subseteq A^\prime \setminus A^*$.
        We need to show $\det[\big]{M_{U \cupdot X \cupdot Y, U \cupdot X \cupdot Y}} \ge 0$.
        If $s \in U$, a Laplacian expansion along the row of $s$ shows $\det[\big]{M_{U \cupdot X \cupdot Y, U \cupdot X \cupdot Y}} = \det[\big]{M_{U \setminus \set{s} \cupdot X \cupdot Y, U \setminus \set{s} \cupdot X \cupdot Y}}$.
        Hence, we can assume that $s \not\in U$.
        Refining the block structure of $M_{U \cupdot X \cupdot Y, U \cupdot X \cupdot Y}$ by partitioning $X$ into $X \setminus Y$ and $X \cap Y$ yields
        \begin{equation*}
            \det[\big]{M_{U \cupdot X \cupdot Y, U \cupdot X \cupdot Y}}
            =
            \det*{\begin{pmatrix}
                0 & B_{U,X \setminus Y} & B_{U,X \cap Y} & 0 \\
                -\big(B^+_{U,X \setminus Y}\big)^{\!\top} & \inv{D_{X \setminus Y, X \setminus Y}} & 0 & 0 \\
                -\big(B^+_{U,X \cap Y}\big)^{\!\top} & 0 & \inv{D_{X \cap Y, X \cap Y}} & \Id_{X \cap Y, Y} \\
                -\big(B^-_{U,Y}\big)^{\!\top} & 0 & \inv{D}_{Y,X \cap Y} & \Id_{Y, Y}
            \end{pmatrix}} .
        \end{equation*}
        We multiply the rows in $X \setminus Y$ from the left by $B_{U,X \setminus Y} D_{X \setminus Y, X \setminus Y}$ and subtract the result from the rows corresponding to $U$.
        Further, we subtract the rows in $X \cap Y \subseteq Y$ from the corresponding rows $X \cap Y \subseteq X$.
        As these are unitary row operations under which the determinant is invariant, it holds
        \begin{equation*}
            \det[\big]{M_{U \cupdot X \cupdot Y, U \cupdot X \cupdot Y}}
            =
            \det*{\begin{pmatrix}
                B_{U,X \setminus Y} D_{X \setminus Y, X \setminus Y} \big(B^+_{U,X \setminus Y}\big)^{\!\top} & 0 & B_{U,X \cap Y} & 0 \\
                -\big(B^+_{U,X \setminus Y}\big)^{\!\top} & \inv{D_{X \setminus Y, X \setminus Y}} & 0 & 0 \\
                -\big(B_{U,X \cap Y}\big)^{\!\top} & 0 & 0 & 0 \\
                -\big(B^-_{U,Y}\big)^{\!\top} & 0 & \inv{D}_{Y,X \cap Y} & \Id_{Y, Y}
            \end{pmatrix}} .
        \end{equation*}
        Symmetric rearranging of the rows and columns reveals the triangular block structure
        \begin{align*}
            \det[\big]{M_{U \cupdot X \cupdot Y, U \cupdot X \cupdot Y}}
            &=
            \det*{\begin{pmatrix}
                B_{U,X \setminus Y} D_{X \setminus Y, X \setminus Y} \big(B^+_{U,X \setminus Y}\big)^{\!\top} & B_{U,X \cap Y} & 0 & 0 \\
                -\big(B_{U,X \cap Y}\big)^{\!\top} & 0 & 0 & 0 \\
                -\big(B^-_{U,Y}\big)^{\!\top} & \inv{D}_{Y,X \cap Y} & \Id_{Y, Y} & 0 \\
                -\big(B^+_{U,X \setminus Y}\big)^{\!\top} & 0 & 0 & \inv{D_{X \setminus Y, X \setminus Y}}
            \end{pmatrix}} .
        \end{align*}
        By \cref{lemma:laplacian:incidence}, the matrix $B_{\boldcdot,X \setminus Y} D_{X \setminus Y, X \setminus Y} \big(B^+_{\boldcdot,X \setminus Y}\big)^{\!\top}$ is the weighted Laplacian matrix of the graph $H := \big(V, X \setminus Y\big)$.
        Therefore, we will denote it by $L^H$.
        Using the multiplicativity of the determinant for triangular block matrices results in
        \begin{align*}
          \det[\big]{M_{U \cupdot X \cupdot Y, U \cupdot X \cupdot Y}}
            &=
            \det{\inv{D_{X \setminus Y, X \setminus Y}}}
            \det*{\begin{pmatrix}
                L^H_{U, U} & B_{U,X \cap Y} \\
                -\big(B_{U,X \cap Y}\big)^{\!\top} & 0
            \end{pmatrix}} .
        \end{align*}

        If there exists $a \in (X \cap Y) \cap \big( (V \setminus U) \times (V \setminus U) \big)$, i.e., $a \in X \cap Y$ is not incident to any vertex in~$U$, then $B_{U,a} = 0$.
        Therefore, $M_{U \cupdot X \cupdot Y, U \cupdot X \cupdot Y}$ is singular in that case.
        On the other hand, assume there exists $a \in (X \cap Y) \cap \delta(U)$, i.e., there is exactly one $u \in U$ which is incident to $a \in X \cap Y$.
        Applying the Laplacian expansion consecutively along the row $a$ and column $a$ yields
        \begin{equation*}
            \det[\big]{M_{U \cupdot X \cupdot Y, U \cupdot X \cupdot Y}}
            =
            \det{\inv{D_{X \setminus Y, X \setminus Y}}}
            B_{u, a}^2
            \det*{\begin{pmatrix}
                L^H_{U \setminus \set{u}, U \setminus \set{u}} & B_{U \setminus \set{u}, X \cap Y} \\
                -\big(B_{U \setminus \set{u}, X \cap Y}\big)^{\!\top} & 0
            \end{pmatrix}} .
        \end{equation*}
        As $B_{u, a}^2 \ge 0$, we can assume that $X \cap Y \subseteq U \times U$ by using induction on $\card{U}$.
        Under this assumption, if $X \cap Y$ contains an undirected cycle, it follows that $B_{U, X \cap Y}$ does not have full column rank and $M_{U \cupdot X \cupdot Y, U \cupdot X \cupdot Y}$ is singular.
        Thus, we additionally assume in the following that $X \cap Y$ does not contain any undirected cycle.

        Let $U = \dot\bigcup_{i \in [k]} U_i$ and $X \cap Y = \dot\bigcup_{i \in [k]} Y_i$ give the partition of $(U, X \cap Y)$ into its $k \in \N$ weakly connected components $(U_1, Y_1), \dots, (U_k, Y_k)$.
        Then $(U_i, Y_i)$ is a tree when ignoring the orientation of the arcs for every $i \in [k]$.
        Fix an arbitrary root vertex $r_i \in U_i$ for every $i \in [k]$ and define the set $R := \set{r_1, \dots, r_k}$.
        Refining the block structure of the remaining matrix by splitting $U$ into $U \setminus R$ and $R$ gives that
        \begin{equation*}
            \det[\big]{M_{U \cupdot X \cupdot Y, U \cupdot X \cupdot Y}} = \det{\inv{D_{X \setminus Y, X \setminus Y}}}
            \det*{\begin{pmatrix}
                L^H_{U \setminus R, U \setminus R} & L^H_{U \setminus R, R} & B_{U \setminus R, X \cap Y} \\
                L^H_{R, U \setminus R} & L^H_{R, R} & B_{R, X \cap Y} \\
                -B_{U \setminus R, X \cap Y}^\top & -B_{R, X \cap Y}^\top & 0
            \end{pmatrix}} .
        \end{equation*}
        For each $i \in [k]$ and $u \in U_i \setminus \set{r_i}$, we sequentially add row $u$ to row $r_i$ and column $u$ to column~$r_i$.
        Note that the determinant is invariant under these unitary operations and the resulting matrix does not depend on their order.
        The effect of the operations on $L^H_{U, U}$ can be interpreted as the contraction of arcs in the following sense.
        Let $\widehat{H} := H / (X \cap Y)$ be the graph that results from contracting all arcs of $X \cap Y$ in $H$, and $L^{\widehat{H}}$ be its Laplacian matrix.
        For every $i \in [k]$, we identify $r_i$ with the vertex which results from contracting $U_i$ by $Y_i$.
        As $(U_i, Y_i)$ is a weakly connected component with respect to $X \cap Y$, we obtain $\one_{U_i}^\top B_{U, X \cap Y} = 0$ for all $i \in [k]$.
        Putting this together gives
        \begin{equation*}
            \det[\big]{M_{U \cupdot X \cupdot Y, U \cupdot X \cupdot Y}}
            =
            \det{\inv{D_{X \setminus Y, X \setminus Y}}}
            \det*{\begin{pmatrix}
                L^H_{U \setminus R, U \setminus R} & * & B_{U \setminus R, X \cap Y} \\
                * & L^{\widehat{H}}_{R, R} & 0 \\
                -B_{U \setminus R, X \cap Y}^\top & 0 & 0
            \end{pmatrix}} ,
        \end{equation*}
        where $*$ marks blocks the specific value of which is not relevant for our purposes.
        Since the components $(U_i, Y_i)$ are trees, the matrix $B_{U \setminus R, X \cap Y}$ is square.
        By swapping the columns $U \setminus R$ with the columns $X \cap Y$, we obtain
        \begin{equation*}
            \det[\big]{M_{U \cupdot X \cupdot Y, U \cupdot X \cupdot Y}}
            =
            \det{\inv{D_{X \setminus Y, X \setminus Y}}}
            (-1)^{\card{X \cap Y}}
            \det*{\begin{pmatrix}
                B_{U \setminus R, X \cap Y} & * & L^H_{U \setminus R, U \setminus R} \\
                0 & L^{\widehat{H}}_{R, R} & * \\
                0 & 0 & -B_{U \setminus R, X \cap Y}^\top
            \end{pmatrix}} .
        \end{equation*}
        Exploiting the triangular block structure again and applying \cref{lemma:laplacian:p0} finally yields the statement
        \begin{equation}
            \det[\big]{M_{U \cupdot X \cupdot Y, U \cupdot X \cupdot Y}}
            =
            \det{\inv{D_{X \setminus Y, X \setminus Y}}}
            \det[\big]{B_{U \setminus R, X \cap Y}}^2
            \det{L^{\widehat{H}}_{R, R}}
            \ge 0 .
        \tag*{\qedhere}
        \end{equation}
    \end{proof}

    From \cref{lemma:linearcomplementarityproblem:p0}, we obtain a proof for the existence of normalized thin $s$--$t$-flows with resetting.
    It is an alternative to the original proof by \citet{CCL2015}, which is based on an elegant application of Kakutani's fixed point theorem.

    \begin{theorem}[Existence]
        \label{thm:normalizedthinflow:existence}
        Let $G^\prime = (V, A^\prime, A^*)$ be a shortest path graph with resetting and let $\nu \in \Rpp^{A^\prime}$ be capacities.
        For all given values $\nu_0^\prime, \ell^\prime_0 \ge 0$, there exists a normalized thin $s$--$t$-flow with resetting of value $\nu_0^\prime$ in $G^\prime$ and corresponding label $\ell^\prime_0$ at $s$.
    \end{theorem}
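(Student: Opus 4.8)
The plan is to solve the existence problem through the linear complementarity problem \eqref{eq:linearcomplementarityproblem} and to reuse the two structural facts already in hand: by \cref{lemma:linearcomplementarityproblem:p0} the matrix $M$ of \eqref{eq:linearcomplementarityproblem} has only non-negative principal minors (it is a $P_0$-matrix), and by \cref{corollary:normalizedthinflow:linearcomplementarityproblem} every solution of \eqref{eq:linearcomplementarityproblem} for given $\nu^\prime_0, \ell^\prime_0 \ge 0$ yields, after applying the normalization $\pi$, a normalized thin $s$--$t$-flow with resetting of value $\nu^\prime_0$ in $G^\prime$ with label $\ell^\prime_0$ at $s$. Hence it suffices to prove that \eqref{eq:linearcomplementarityproblem} admits a solution for all $\nu^\prime_0, \ell^\prime_0 \ge 0$.

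For this I would regularize the matrix: for $\varepsilon > 0$ set $M_\varepsilon := M + \varepsilon \Id$. For any index set $S$ of $M$ one has $\det*{(M_\varepsilon)_{S,S}} = \sum_{T \subseteq S} \varepsilon^{\card{S} - \card{T}} \det*{M_{T,T}}$, so \cref{lemma:linearcomplementarityproblem:p0} makes every principal minor of $M_\varepsilon$ strictly positive; that is, $M_\varepsilon$ is a $P$-matrix. By the classical characterization of $P$-matrices, the linear complementarity problem with matrix $M_\varepsilon$ and the same data $\nu^\prime_0, \ell^\prime_0$ has a unique solution $z_\varepsilon = (\ell^\prime_\varepsilon, x^\prime_\varepsilon, y^\prime_\varepsilon)$. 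If the family $(z_\varepsilon)_{0 < \varepsilon \le 1}$ is bounded, then any accumulation point $z^\prime$ of $z_\varepsilon$ as $\varepsilon \to 0$ is, by continuity, a solution of \eqref{eq:linearcomplementarityproblem}, and \cref{corollary:normalizedthinflow:linearcomplementarityproblem} finishes the proof.

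The main obstacle is establishing this boundedness. I would argue by contradiction: if $(z_\varepsilon)$ were unbounded, pick a subsequence with $\norm{z_{\varepsilon_k}}_\infty \to \infty$, $\varepsilon_k \to \varepsilon_0 \in [0,1]$, and $\hat z := \lim_k z_{\varepsilon_k} / \norm{z_{\varepsilon_k}}_\infty$, so $\norm{\hat z}_\infty = 1$. Dividing the inequalities defining $z_{\varepsilon_k}$ by $\norm{z_{\varepsilon_k}}_\infty$ and the complementarity equation by $\norm{z_{\varepsilon_k}}_\infty^2$ and letting $k \to \infty$ shows that $\hat z$ solves the homogeneous problem $\hat z \ge 0$, $M_{\varepsilon_0} \hat z \ge 0$, $\hat z^\top M_{\varepsilon_0} \hat z = 0$. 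If $\varepsilon_0 > 0$, this contradicts that $M_{\varepsilon_0}$ is a $P$-matrix, for which the homogeneous problem has only the solution $0$. If $\varepsilon_0 = 0$, I would rule out a non-trivial homogeneous solution $\hat z = (\hat\ell, \hat x, \hat y)$ of $M$ by exploiting the structure of $G^\prime$: the $s$-row forces $\hat\ell_s = 0$; since $A^\prime$ is acyclic and contains an $s$--$v$-path for every $v$, the source has no incoming arc in $A^\prime$; the complementarity condition on the $x^\prime_a$-row shows that $\hat x_a > 0$ on an arc $a = (v, w)$ forces $\hat\ell_w > 0$, and the $w$-row then forces flow conservation at $w$; following positive-flow arcs forward in a topological order of $A^\prime$ reaches, after finitely many steps, a vertex at which flow must strictly accumulate although conservation is required there — a contradiction — so $\hat x = 0$; substituting this back into the arc inequalities and inducting along a topological order gives $\hat\ell = 0$ and then $\hat y = 0$, i.e.\ $\hat z = 0$, contradicting $\norm{\hat z}_\infty = 1$. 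This proves boundedness and hence the theorem.

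A variant that sidesteps the boundedness argument is to note that \eqref{eq:linearcomplementarityproblem} is always strictly feasible: superposing, over all arcs $a = (v, w) \in A^\prime$, one unit of flow routed along an $s$--$v$-path followed by $a$ produces an $x^\prime$ that is positive on every arc and has strictly positive net inflow at every vertex other than $s$; scaling this up and taking $\ell^\prime$ small and $y^\prime$ large yields $z > 0$ with $M z - \ell^\prime_0 \one_s - \nu^\prime_0 \one_t > 0$, after which the standard solvability result for $P_0$-matrix linear complementarity problems under strict feasibility applies directly.
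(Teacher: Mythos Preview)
Your proof is correct and rests on the same two ingredients the paper uses: the $P_0$-property of $M$ from \cref{lemma:linearcomplementarityproblem:p0} and the fact that the homogeneous problem (equivalently, \eqref{eq:linearcomplementarityproblem} with $\nu^\prime_0 = \ell^\prime_0 = 0$) admits only the trivial solution $z^\prime = 0$. The paper proceeds by citing a ready-made result from the LCP literature (\citet[Corollary 3.9.22]{CPS2009}) stating that a $P_0$-matrix whose homogeneous LCP has only the zero solution is a $Q$-matrix, and verifies the homogeneous uniqueness by translating back through \cref{theorem:normalizedthinflow:linearcomplementarityproblem,corollary:normalizedthinflow:linearcomplementarityproblem} to the obvious fact that the only $s$--$t$-flow of value $0$ is $x^\prime \equiv 0$ with labels $\ell^\prime \equiv 0$. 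You instead re-derive that LCP result by hand via the regularization $M_\varepsilon = M + \varepsilon\Id$ and a compactness argument, and you establish homogeneous uniqueness directly and combinatorially on the LCP rows (tracking positive flow along the acyclic $A^\prime$). Both routes are sound; yours is self-contained and avoids the external citation, while the paper's is shorter but relies on the black box and on the correspondence with thin flows. Your strict-feasibility variant is also valid and is another standard way to reach the same conclusion for $P_0$-matrices.
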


    \begin{proof}
        Due to a result by \citet[Corollary 3.9.22]{CPS2009} and \cref{corollary:normalizedthinflow:linearcomplementarityproblem}, it is enough to show that $z^\prime = 0$ is the unique solution to \eqref{eq:linearcomplementarityproblem} for $\nu^\prime_0 = 0$ and $\ell^\prime_0 = 0$, and that its matrix $M$ has only non-negative minors.
        The latter is taken care of by \cref{lemma:linearcomplementarityproblem:p0}.
        For the flow rate $\nu^\prime_0 = 0$ and label $\ell^\prime_0 = 0$, the unique $s$--$t$-flow $x^\prime \equiv 0$ is also a normalized thin $s$--$t$-flow with resetting in $G^\prime$ with unique corresponding labels $\ell^\prime \equiv 0$.
        By \cref{theorem:normalizedthinflow:linearcomplementarityproblem,corollary:normalizedthinflow:linearcomplementarityproblem}, $z^\prime = 0$ is therefore the unique solution to \eqref{eq:linearcomplementarityproblem} for these parameters.
    \end{proof}

    The proof of \cref{thm:normalizedthinflow:existence} uses the uniqueness of the solution $z^\prime = 0$ to \eqref{eq:linearcomplementarityproblem} for $\nu^\prime_0 = 0$ and $\ell^\prime_0 = 0$.
    Together with \cref{lemma:linearcomplementarityproblem:p0}, it even proves the applicability of known pivoting methods and iterative methods for linear complementarity problems to \eqref{eq:linearcomplementarityproblem}.
    In particular, a result by \citet[Theorems 4.4.8 and 4.4.11]{CPS2009} shows that Lemke's algorithm can be used to find a normalized thin flow with resetting in finitely many steps, when dealing with degeneracy appropriately (see \cite[Section 4.9]{CPS2009}).

    We want to examine the dependency of normalized thin flows with resetting on the flow value and the label of $s$.
    The following proof for the monotonicity of the corresponding labels in these two parameters is a refinement of the analysis that \citet{CCL2015} use to show uniqueness of the labels (which is an immediate corollary of the monotonicity).

    \begin{theorem}[Monotonicity of labels]
        \label{theorem:normalizedthinflow:monotonicity}
        Let $G^\prime = (V, A^\prime, A^*)$ be a shortest path graph with resetting and $\nu \in \Rpp^{A^\prime}$ be capacities.
        Further, let $x^\prime$, $\hat{x}^\prime$ be two normalized thin $s$--$t$-flows with resetting in $G^\prime$ and let $\ell^\prime, \hat{\ell}^\prime$ be corresponding labels, respectively.
        If the flow values fulfill $\card{x^\prime} \le \card{\hat{x}^\prime}$ and the labels fulfill $\ell^\prime_s \le \hat{\ell}^\prime_s$, then $\ell^\prime \le \hat{\ell}^\prime$ holds (element-wise).
    \end{theorem}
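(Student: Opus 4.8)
The plan is to argue by contradiction, refining the uniqueness argument of \citet{CCL2015}. Suppose the conclusion $\ell^\prime \le \hat{\ell}^\prime$ fails, and set $W := \set{w \in V \colon \ell^\prime_w > \hat{\ell}^\prime_w}$, which is then a non-empty set with $s \notin W$ because $\ell^\prime_s \le \hat{\ell}^\prime_s$. Everything that follows takes place in the graph $(V, A^\prime)$, and $\delta^-$, $\delta^+$ are understood relative to $A^\prime$.

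The first step is to show that $x^\prime$ and $\hat{x}^\prime$ differ in a definite direction on the arcs crossing the cut induced by $W$. For $a = (v, w) \in \delta^+(W)$, so $v \in W$ and $w \notin W$, assume $x^\prime_a > 0$; then $\ell^\prime_w = \varrho^a(\ell^\prime_v, x^\prime_a)$, whereas the label conditions for $\hat{x}^\prime$ give $\hat{\ell}^\prime_w \le \varrho^a(\hat{\ell}^\prime_v, \hat{x}^\prime_a)$. Together with $\ell^\prime_v > \hat{\ell}^\prime_v$ and $\ell^\prime_w \le \hat{\ell}^\prime_w$, a case distinction on whether $a \in A^*$ yields $x^\prime_a \le \hat{x}^\prime_a$; since this is trivial when $x^\prime_a = 0$, it holds for all $a \in \delta^+(W)$. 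The symmetric argument for $a = (v, w) \in \delta^-(W)$, now using $\hat{\ell}^\prime_w = \varrho^a(\hat{\ell}^\prime_v, \hat{x}^\prime_a)$ whenever $\hat{x}^\prime_a > 0$, $\ell^\prime_w \le \varrho^a(\ell^\prime_v, x^\prime_a)$, $\ell^\prime_v \le \hat{\ell}^\prime_v$, and $\ell^\prime_w > \hat{\ell}^\prime_w$, yields $x^\prime_a \ge \hat{x}^\prime_a$ for all $a \in \delta^-(W)$.

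The second step combines this with flow conservation. Since $x^\prime$ and $\hat{x}^\prime$ are static $s$--$t$-flows and $s \notin W$, the net flow each of them sends into $W$ equals its value if $t \in W$ and is $0$ otherwise; subtracting the two identities and using $\card{x^\prime} \le \card{\hat{x}^\prime}$ gives
\begin{equation*}
  \sum_{a \in \delta^-(W)} \big( x^\prime_a - \hat{x}^\prime_a \big) \;-\; \sum_{a \in \delta^+(W)} \big( x^\prime_a - \hat{x}^\prime_a \big) \;\le\; 0 .
\end{equation*}
By the first step every summand on the left is non-negative, so all of them vanish; in particular $x^\prime_a = \hat{x}^\prime_a$ for all $a \in \delta^-(W)$. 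The final step uses acyclicity of $A^\prime$: the subgraph it induces on $W$ is acyclic, hence possesses a vertex $w$ with no incoming arc from within $W$, so every $a = (v, w) \in \delta^-(w) \cap A^\prime$ satisfies $v \notin W$, i.e.\ $\ell^\prime_v \le \hat{\ell}^\prime_v$, and $x^\prime_a = \hat{x}^\prime_a$. Because $w \ne s$ and $A^\prime$ contains an $s$--$w$-path, this arc set is non-empty, and since each $\varrho^a$ is non-decreasing in its first argument, $\varrho^a(\ell^\prime_v, x^\prime_a) \le \varrho^a(\hat{\ell}^\prime_v, \hat{x}^\prime_a)$ for every such $a$. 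Taking minima over $\delta^-(w) \cap A^\prime$ gives $\ell^\prime_w \le \hat{\ell}^\prime_w$, contradicting $w \in W$; hence $W = \emptyset$.

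I expect the main obstacle to be the two case distinctions in the first step. The asymmetry between resetting arcs in $A^*$, where $\varrho^a$ ignores its first argument, and ordinary arcs, together with the need to read off how the maxima hidden in $\varrho^a$ behave under the strict label inequalities, makes these the technically delicate part, whereas the conservation and acyclicity arguments are routine.
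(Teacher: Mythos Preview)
Your proposal is correct and follows essentially the same route as the paper's proof: define the violating set $W$ (the paper calls it $U$), use flow conservation across the cut together with the thin-flow conditions to force $x^\prime_a = \hat{x}^\prime_a$ on all cut arcs, then exploit acyclicity to find a topologically minimal $w \in W$ and derive a contradiction from the label minima. The only cosmetic differences are that the paper argues the cut-equality by contradiction (picking a single arc with a strict flow inequality and deriving a label contradiction) rather than first proving your directional inequalities, and that in the final step the paper records the intermediate facts $\delta^-(U) \cap A^* = \emptyset$ and $\varrho^a(\ell^\prime_v, x^\prime_a) = \ell^\prime_v$, whereas you go directly through monotonicity of $\varrho^a$ in its first argument; both are equivalent ways of packaging the same case analysis.
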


    \begin{proof}
        In the following, all the incidences $\delta^-$, $\delta^+$, and $\delta$ relate to $(V, A^\prime)$.
        Assume for contradiction that $\ell^\prime_s \le \hat{\ell}^\prime_s$, $\card{x^\prime} \le \card{\hat{x}^\prime}$, and $U := \set[\big]{v \in V\colon \ell^\prime_v > \hat{\ell}^\prime_v} \ne \emptyset$ hold.

        We claim that $x^\prime_a = \hat{x}^\prime_a$ for all $a \in \delta(U)$.
        Assume again this is wrong.
        We know $s \not\in U$ and, thus,
        \begin{equation*}
            \sum_{\mathclap{a \in \delta^+(U)}} (x^\prime_a - \hat{x}^\prime_a) - \sum_{\mathclap{a \in \delta^-(U)}} (x^\prime_a - \hat{x}^\prime_a)
            =
            \sum_{\mathclap{u \in U}} \bigg( \Big( \sum_{\mathclap{a \in \delta^+(u)}} x^\prime_a - \sum_{\mathclap{a \in \delta^-(u)}} x^\prime_a\Big) - \Big( \sum_{\mathclap{a \in \delta^+(u)}} \hat{x}^\prime_a - \sum_{\mathclap{a \in \delta^-(u)}} \hat{x}^\prime_a\Big) \bigg)
            \ge 0 .
        \end{equation*}
        Therefore, there has to be $a = (v, w) \in A^\prime$ such that $a \in \delta^+(U)$ and $x^\prime_a > \hat{x}^\prime_a$, or $a \in \delta^-(U)$ and $x^\prime_a < \hat{x}^\prime_a$.
        If $a \in \delta^+(U)$ then $\ell^\prime_w = \varrho^a\big(\ell^\prime_v, x^\prime_a\big) > \varrho^a\big(\hat{\ell}^\prime_v, \hat{x}^\prime_a\big) \ge \hat{\ell}^\prime_w$ which contradicts $w \notin U$.
        If $a \in \delta^-(U)$ then $\hat{\ell}^\prime_w = \varrho^a\big(\hat{\ell}^\prime_v, \hat{x}^\prime_a\big) \ge \varrho^a\big(\ell^\prime_v, x^\prime_a\big) \ge \ell^\prime_w$ which contradicts $w \in U$.
        Thus, $x^\prime$ and $\hat{x}^\prime$ agree on $\delta(U)$.

        As a consequence, $\delta^-(U) \cap A^* = \emptyset$ and $\varrho^a\big(\ell^\prime_v, x^\prime_a\big) = \ell^\prime_v$ for all $a = (v, w) \in \delta^-(U)$.
        Since $A^\prime$ is acyclic and $s \not\in U$, there is $w \in U$ such that $\emptyset \ne \delta^-(w) \subseteq \delta^-(U)$.
        Then, $w \in U$ contradicts
        \begin{equation*}
            \ell^\prime_w
            =
            \enskip\min_{\mathclap{\substack{a \in \delta^-(w) \\ a = (v, w)}}} \enskip \varrho^a\big(\ell^\prime_v, x^\prime_a\big)
            =
            \enskip\min_{\mathclap{v \in N^-(w)}} \enskip \ell^\prime_v
            \le
            \enskip\min_{\mathclap{v \in N^-(w)}} \enskip \hat{\ell}^\prime_v
            \le
            \enskip\min_{\mathclap{\substack{a \in \delta^-(w) \\ a = (v, w)}}} \enskip \varrho^a\big(\hat{\ell}^\prime_v, \hat{x}^\prime_a\big)
            =
            \hat{\ell}^\prime_w . \tag*{\qedhere}
        \end{equation*}
    \end{proof}

    \begin{corollary}[Uniqueness of labels]
        \label{corollary:normalizedthinflow:uniqueness}
        Let $G^\prime = (V, A^\prime, A^*)$ be a shortest path graph with resetting and $\nu \in \Rpp^{A^\prime}$ be capacities.
        The corresponding labels $\ell^\prime$ of a normalized thin $s$--$t$-flows $x^\prime$ with resetting in $G^\prime$ are uniquely determined by its flow value $\card{x^\prime}$ and the label $\ell^\prime_s$.
    \end{corollary}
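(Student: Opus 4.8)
The plan is to derive this directly from \cref{theorem:normalizedthinflow:monotonicity} by a symmetry argument, applying the monotonicity statement twice with the two flows exchanged.

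Concretely, suppose $x^\prime$ and $\hat{x}^\prime$ are both normalized thin $s$--$t$-flows with resetting in $G^\prime$ with corresponding labels $\ell^\prime$ and $\hat{\ell}^\prime$, and assume $\card{x^\prime} = \card{\hat{x}^\prime}$ and $\ell^\prime_s = \hat{\ell}^\prime_s$. First I would invoke \cref{theorem:normalizedthinflow:monotonicity} with the pair $(x^\prime, \ell^\prime)$ in the role of the first flow and $(\hat{x}^\prime, \hat{\ell}^\prime)$ in the role of the second: the hypotheses $\card{x^\prime} \le \card{\hat{x}^\prime}$ and $\ell^\prime_s \le \hat{\ell}^\prime_s$ hold (with equality), so the conclusion yields $\ell^\prime \le \hat{\ell}^\prime$ element-wise. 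Then I would apply the same theorem with the roles reversed, i.e.\ $(\hat{x}^\prime, \hat{\ell}^\prime)$ as the first flow and $(x^\prime, \ell^\prime)$ as the second; the hypotheses $\card{\hat{x}^\prime} \le \card{x^\prime}$ and $\hat{\ell}^\prime_s \le \ell^\prime_s$ again hold, giving $\hat{\ell}^\prime \le \ell^\prime$. Combining the two inequalities gives $\ell^\prime = \hat{\ell}^\prime$, which is exactly the claimed uniqueness.

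There is essentially no obstacle here; the only point worth emphasizing is conceptual rather than technical. The corollary does not assume that $x^\prime$ and $\hat{x}^\prime$ coincide as flows — only that they share the two scalar parameters (flow value and source label) — so a one-sided estimate would not suffice, and it is genuinely the two-directional use of \cref{theorem:normalizedthinflow:monotonicity} that forces the labels of two a priori different thin flows to agree. (Note that uniqueness of the flow $x^\prime$ itself need not hold; only the induced labels are pinned down.)
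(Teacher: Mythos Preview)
Your proposal is correct and matches the paper's approach: the paper presents \cref{corollary:normalizedthinflow:uniqueness} as an immediate consequence of \cref{theorem:normalizedthinflow:monotonicity} (without writing out a separate proof), and your two-sided application of the monotonicity statement is exactly the intended argument.
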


    \begin{example}[Non-uniqueness of thin flows]
        \label{example:normalizedthinflow:nonuniqueness}
        Consider the shortest path graph with resetting $G^\prime = (V, A^\prime, A^*)$ consisting of the source $s$, the sink $t$, and two parallel non-resetting arcs $a$ and $b$ from $s$ to $t$, i.e., $V = \set{s, t}$, $A^\prime = \set{a, b}$ and $A^* = \emptyset$.
        Let $\nu \equiv 1$.
        Then for every $0 \le x^\prime_a \le 1$ setting $x^\prime_{b} = 1 - x^\prime_a$ defines a normalized thin $s$--$t$-flow with resetting of value $1$ in $G^\prime$ with corresponding labels $\ell^\prime_s = \ell^\prime_t = 1$.
    \end{example}

    \cref{corollary:normalizedthinflow:uniqueness} shows that the corresponding labels $\ell^\prime$ of a normalized thin $s$--$t$-flow with resetting are uniquely determined by the label $\ell^\prime_s$ and the flow rate $\nu^\prime_0$.
    In contrast to that, \cref{example:normalizedthinflow:nonuniqueness} shows that the flow itself is not necessarily uniquely determined by those parameters.
    In some sense, however, its non-uniqueness is the only kind that can occur.
    The flow on a subset of arcs is uniquely determined by the labels.
    For every $a = (v, w) \in A^\prime \setminus A^*$ with $\ell^\prime_v < \ell^\prime_w$ and for every $a = (v, w) \in A^*$, it holds $x^\prime = \ell^\prime_w \nu_a$.
    For every $a = (v, w) \in A^\prime \setminus A^*$ with $\ell^\prime_v > \ell^\prime_w$ on the other hand, $x^\prime_a = 0$.
    Therefore, non-uniqueness can only arise in weakly connected components with respect to $A^\prime \setminus A^*$ of constant label.

    The proof of \cref{corollary:normalizedthinflow:uniqueness} is combinatorial in nature and it is not clear whether a similar result can be obtain by the means of linear complementarity problems.
    There are results on the uniqueness of the solutions to linear complementarity problems.
    Those, however, do not apply to \eqref{eq:linearcomplementarityproblem}.
    Apart from the discussed non-uniqueness of \cref{example:normalizedthinflow:nonuniqueness},  \eqref{eq:linearcomplementarityproblem} suffers from a different kind which is the non-normalized labels.
    The latter can be addressed as discussed at the end of this section.
    Yet, it would only yield uniqueness of the labels.
    To the knowledge of the author, there are no results on such partial uniqueness in the general theory on linear complementarity problems.

    The flow value $\nu^\prime_0$ appears in a linear way in \eqref{eq:linearcomplementarityproblem}.
    This allows to treat it as a variable and get a new linear complementarity problem.
    Its set of solutions captures the dependency of normalized thin flows with resetting on the flow value, which is analyzed in the following lemmas and used in \cref{section:evolution,section:seriesparallel}.

    \begin{proposition}[Parametric normalized thin flows with resetting]
        \label{lem:normalizedthinflow:parametric}
        Let $G^\prime = (V, A^\prime, A^*)$ be a shortest path graph with resetting and $\nu \in \Rpp^{A^\prime}$ be capacities.
        Then there are continuous, piecewise linear functions $\varrho^{G^\prime}\colon \Rp \to \Rp^V$ and $\chi^{G^\prime}\colon \Rp \to \Rp^{A^\prime}$ such that, for all $\nu_0^\prime \ge 0$, the vector $\chi^{G^\prime}\!(\nu_0^\prime)$ is a normalized thin $s$--$t$-flow with resetting of value $\nu_0^\prime$ in $G^\prime$ and $\varrho^{G^\prime}\!(\nu_0^\prime)$ are corresponding labels with $\varrho_s^{G^\prime}\!(\nu_0^\prime) = 1$.
    \end{proposition}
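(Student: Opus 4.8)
The plan is to make the flow value $\nu_0^\prime$ a parameter of the linear complementarity problem \eqref{eq:linearcomplementarityproblem} and to read off $\varrho^{G^\prime}$ and $\chi^{G^\prime}$ from continuous, piecewise linear selections of its solution set. Throughout one fixes the source label by setting $\ell^\prime_0 := 1$, so that the right-hand side $-\ell^\prime_0\one_s - \nu^\prime_0\one_t$ of \eqref{eq:linearcomplementarityproblem} depends affinely on $\nu^\prime_0 \ge 0$. First I would consider the set $\mathcal{G}$ of all tuples $(\nu^\prime_0, \ell^\prime, x^\prime, y^\prime)$ with $\nu^\prime_0 \ge 0$ for which $(\ell^\prime, x^\prime, y^\prime)$ solves \eqref{eq:linearcomplementarityproblem} together with its normalization constraints. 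This $\mathcal{G}$ is a \emph{finite union of polyhedra}: choosing, for each index, which of its two complementary inequalities is active, and one halfspace per disjunction $\ell^\prime_w \ge \min_{v \in N^-(w)} \ell^\prime_v$, yields finitely many inequalities that are linear in $(\nu^\prime_0, \ell^\prime, x^\prime, y^\prime)$. By \cref{theorem:normalizedthinflow:linearcomplementarityproblem} together with \cref{thm:normalizedthinflow:existence}, $\mathcal{G}$ projects onto all of $\Rp$ under $(\nu^\prime_0, \ell^\prime, x^\prime, y^\prime) \mapsto \nu^\prime_0$.

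Next I would extract the labels. By \cref{theorem:normalizedthinflow:linearcomplementarityproblem} and the uniqueness in \cref{corollary:normalizedthinflow:uniqueness}, over each $\nu^\prime_0$ the set $\mathcal{G}$ contains exactly one label vector, so its projection to $(\nu^\prime_0, \ell^\prime)$ is the graph of a function $\varrho^{G^\prime}\colon \Rp \to \Rp^V$ with $\varrho^{G^\prime}_s \equiv \ell^\prime_0 = 1$, and this graph is again a finite union of polyhedra $P_1, \dots, P_m$ (eliminate $x^\prime, y^\prime$ by Fourier--Motzkin). Every nonempty $\nu^\prime_0$-fibre of each $P_i$ is a single point, hence by convexity $P_i$ is the graph of an affine map over the interval onto which it projects; since these intervals cover $\Rp$ and $\varrho^{G^\prime}$ is single-valued, $\varrho^{G^\prime}$ is continuous and piecewise linear. (Alternatively, one could drop the normalization constraints and compose with the piecewise linear, Lipschitz normalization map $\pi$, using \cref{lemma:normalizedthinflow:normalization,corollary:normalizedthinflow:linearcomplementarityproblem}.)

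It remains to select the flow. Let $0 = t_0 < t_1 < \dots < t_{k+1} = \infty$ collect the breakpoints of $\varrho^{G^\prime}$ and, for every arc $a = (v,w) \in A^\prime$, the finitely many $\nu^\prime_0$ at which $\varrho^{G^\prime}_v - \varrho^{G^\prime}_w$ changes sign. Substituting the (piecewise affine) function $\varrho^{G^\prime}$ for $\ell^\prime$ in \eqref{eq:linearcomplementarityproblem} and rerunning the arc-by-arc case analysis from the proof of \cref{theorem:normalizedthinflow:linearcomplementarityproblem}, on each $(t_j, t_{j+1})$ every arc $a = (v,w)$ is forced into a fixed regime: $x^\prime_a = \varrho^{G^\prime}_w(\nu^\prime_0)\,\nu_a$ when $a \in A^*$ or $\varrho^{G^\prime}_v < \varrho^{G^\prime}_w$; $x^\prime_a = 0$ when $a \notin A^*$ and $\varrho^{G^\prime}_v > \varrho^{G^\prime}_w$; and $0 \le x^\prime_a \le \varrho^{G^\prime}_w(\nu^\prime_0)\,\nu_a$ when $a \notin A^*$ and $\varrho^{G^\prime}_v = \varrho^{G^\prime}_w$. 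Together with flow conservation (right-hand side affine in $\nu^\prime_0$) this cuts out a polyhedron $Q_j$ in $(\nu^\prime_0, x^\prime)$ that projects onto $[t_j, t_{j+1}]$. Convexity of $Q_j$ makes, for any $p \in Q_j|_{t_{j+1}}$, the segment to $(t_{j+1}, p)$ from any point of $Q_j|_{t_j}$ an affine section of $Q_j$ over $[t_j, t_{j+1}]$ ending at $p$; hence any prescribed endpoints can be realised, provided the fibres $Q_{j-1}|_{t_j}$ and $Q_j|_{t_j}$ share a common flow vector at every breakpoint $t_j$. This is where monotonicity enters: a clash would require an arc $a = (v,w) \notin A^*$ with $\varrho^{G^\prime}_v - \varrho^{G^\prime}_w$ jumping from one sign to the other \emph{at} $t_j$, forcing $x^\prime_a$ to be $0$ on one side and $\varrho^{G^\prime}_w(\nu^\prime_0)\,\nu_a$ on the other — so continuity would demand $\varrho^{G^\prime}_v(t_j) = \varrho^{G^\prime}_w(t_j) = 0$; but by \cref{theorem:normalizedthinflow:monotonicity} (with $\ell^\prime_s = \ell^\prime_0 = 1$ fixed) both labels are non-decreasing and non-negative, so the strict inequality on the neighbouring side would read $0 > 0$, a contradiction. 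Gluing the affine sections over $[t_0, t_1], \dots, [t_k, t_{k+1}]$ through common endpoints then yields a continuous, piecewise linear $\chi^{G^\prime}\colon \Rp \to \Rp^{A^\prime}$ with $\chi^{G^\prime}(\nu^\prime_0) \in Q_j|_{\nu^\prime_0}$ on $[t_j, t_{j+1}]$, and \cref{theorem:normalizedthinflow:linearcomplementarityproblem} — applied to $\varrho^{G^\prime}(\nu^\prime_0)$, $\chi^{G^\prime}(\nu^\prime_0)$, and $y^\prime_a := \pos[\big]{\varrho^{G^\prime}_v(\nu^\prime_0) - \chi^{G^\prime}_a(\nu^\prime_0)/\nu_a}$ — certifies that $\chi^{G^\prime}(\nu^\prime_0)$ is a normalized thin $s$--$t$-flow with resetting of value $\nu^\prime_0$ in $G^\prime$ with corresponding labels $\varrho^{G^\prime}(\nu^\prime_0)$.

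I expect the last paragraph to be the delicate part: stitching the polyhedral flow-pieces over consecutive parameter intervals together \emph{continuously}. The enabling observation — that the unique labels are monotone in the flow value, so that no arc's flow is forced to jump at a breakpoint — is exactly what \cref{theorem:normalizedthinflow:monotonicity} supplies; everything else is bookkeeping with the linear complementarity problem of \cref{theorem:normalizedthinflow:linearcomplementarityproblem} and standard facts about polyhedra. One routine point still to verify carefully is that each $Q_j$ really projects onto the whole closed interval $[t_j, t_{j+1}]$, not merely its interior (its projection is a polyhedron, hence a closed interval, and it contains the open interval by \cref{thm:normalizedthinflow:existence}).
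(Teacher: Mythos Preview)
Your treatment of $\varrho^{G^\prime}$ is correct and close in spirit to the paper's: both exploit that the parametric solution set of \eqref{eq:linearcomplementarityproblem} is a finite union of polyhedra, and both use uniqueness of labels (\cref{corollary:normalizedthinflow:uniqueness}) to conclude that the projection to $(\nu_0^\prime,\ell^\prime)$ is the graph of a single-valued, hence continuous, piecewise linear map. The paper additionally invokes an external Lipschitz stability result for LCPs to obtain continuity before turning to the polyhedral description; your argument bypasses this by reading continuity directly off the polyhedral pieces, which is a clean simplification.

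The gap is in your construction of $\chi^{G^\prime}$. Your gluing relies on the claim that $Q_{j-1}|_{t_j}$ and $Q_j|_{t_j}$ share a common flow vector, and you attempt to verify this arc by arc. First, the monotonicity step is logically inverted: from a direct sign change of $\varrho^{G^\prime}_v - \varrho^{G^\prime}_w$ at $t_j$ you cannot \emph{derive} $\varrho^{G^\prime}_w(t_j)=0$; that equality is precisely what would be \emph{needed} for the two forced values of $x^\prime_a$ to coincide, not something that follows from the hypotheses. So your contradiction ``$0>0$'' is obtained from an unproven premise. Second, even in the absence of direct sign changes, arc-by-arc compatibility does not imply that the intersection of the two fibres is nonempty: each side may tighten a different subset of arcs to equalities, and together with flow conservation these combined equalities can be infeasible, even though each side separately is feasible. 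Thus the stitching argument, as written, does not go through.

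To be fair, the paper's own proof of the $\chi^{G^\prime}$ part is terse (``reasoning as above, it follows that this can be done\ldots'') and does not supply an explicit continuous selection either; it simply asserts that one can be read off the polyhedral projection of the solution set of the parametric LCP. A rigorous completion would require either a connectedness property of that solution set (so that a continuous piecewise linear section exists), or a different, more global selection argument than your interval-by-interval gluing.
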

    \begin{proof}
        Define the function $\varrho^{G^\prime}\colon \Rp \to \Rp^V$ by setting, for every $\nu_0^\prime \ge 0$, the vector $\varrho^{G^\prime}\!(\nu_0^\prime)$ to be the corresponding labels of a normalized thin $s$--$t$-flow with resetting  of value $\nu_0^\prime$ in $G^\prime$ such that the corresponding label at $s$ is $1$.
        By \cref{thm:normalizedthinflow:existence,corollary:normalizedthinflow:uniqueness}, this is a sound definition.
        According to \citet*[Theorem 7.2.1]{CPS2009}, the set of solutions to a linear complementarity problem depends in a locally upper Lipschitz continuous way on its right hand side as follows.
        Fix $\nu^\prime_0 \ge 0$.
        Let $z^\doubleprime = (\ell^\doubleprime, x^\doubleprime, y^\doubleprime)$ be a solution to \eqref{eq:linearcomplementarityproblem} with $\nu^\prime_0$ replaced by another flow rate $\nu^\doubleprime_0 \ge 0$.
        There exists a constant $C > 0$ and $\varepsilon > 0$ such that $\abs{\nu^\doubleprime_0 - \nu^{\prime}_0} \le \varepsilon$ implies the existence of a solution $z^\prime = (\ell^\prime, x^\prime, y^\prime)$ to \eqref{eq:linearcomplementarityproblem} satisfying
        $\norm{z^\doubleprime - z^\prime}_\infty \le C \abs{\nu^\doubleprime_0 - \nu^{\prime}_0}$.
        Together with \cref{lemma:normalizedthinflow:normalization,corollary:normalizedthinflow:linearcomplementarityproblem}, it follows
        \begin{equation*}
            \norm[\big]{\varrho^{G^\prime}(\nu^\doubleprime_0) - \varrho^{G^\prime}(\nu^{\prime}_0)}_\infty
            =
            \norm[\big]{\pi(\ell^\doubleprime) - \pi(\ell^{\prime})}_\infty
            \le
            \norm[\big]{\ell^\doubleprime - \ell^{\prime}}_\infty
            \le
            \norm[\big]{z^\doubleprime - z^\prime}_\infty
            \le
            C \abs[\big]{\nu^\doubleprime_0 - \nu^{\prime}_0} ,
        \end{equation*}
        i.e., $\varrho^{G^\prime}$ is locally Lipschitz continuous.
        In order to see piecewise linearity, we consider a slight variant of \eqref{eq:linearcomplementarityproblem} where $\nu^\prime_0$ is regarded a variable and $\ell^\prime_0$ is fixed to one, i.e.,
        \begingroup
        \renewcommand{\arraystretch}{1.25}
        \begin{equation}
            z \ge 0,
            \qquad
            \begin{pmatrix}
              0 & 0 \\
              -\one_t & M
            \end{pmatrix} z - \one_s \ge 0,
            \qquad
            z^\top \left( \begin{pmatrix}
              0 & 0 \\
              -\one_t & M
            \end{pmatrix} z - \one_s \right) = 0 .
          \tag{\ensuremath{\text{LCP}^\prime}}
          \label{eq:linearcomplementarityproblem:parametric}
        \end{equation}
        \endgroup
        Its set of solutions corresponds exactly to all sets of solutions to \eqref{eq:linearcomplementarityproblem} with arbitrary flow rate $\nu^\prime_0 \ge 0$ and $\ell^\prime_0 = 1$.
        Then using again \cref{lemma:normalizedthinflow:normalization,corollary:normalizedthinflow:linearcomplementarityproblem}, the hypograph of $\varrho^{G^\prime}$ can be written as
        \begingroup
        \setlength{\jot}{.5em}
        \begin{align*}
            \hyp\big( \varrho^{G^\prime} \big)
            &=
            \graph\big( \varrho^{G^\prime} \big) + \big( \set{0} \times \Rn^V \big)
        \\
            &=
            \set[\Big]{\big( \nu_0^\prime, \varrho^{G^\prime}(\nu_0^\prime) \big) \in \R \times \R^V \colon \nu_0^\prime \ge 0} + \big( \set{0} \times \Rn^V \big)
        \\
            &=
            \set[\Big]{\big( \nu_0^\prime, \pi(\ell^\prime) \big) \in \R \times \R^V \colon (\nu^\prime_0, \ell^\prime, x^\prime, y^\prime) \text{ solves \eqref{eq:linearcomplementarityproblem:parametric}}} + \big( \set{0} \times \Rn^V \big)
        \\
            &=
            \set[\Big]{\big( \nu^\prime_0, \ell^\prime\big) \in \R \times \R^V \colon (\nu^\prime_0, \ell^\prime, x^\prime, y^\prime) \text{ solves \eqref{eq:linearcomplementarityproblem:parametric}}} + \big( \set{0} \times \Rn^V \big)
        \end{align*}
        \endgroup
        By \citet*[p. 646]{CPS2009}, the first (Minkowski) summand of the right hand side is the union of finitely many polyhedra as it is the linear projection of the set of solutions to a linear complementarity problem.
        Hence, the same holds for $\hyp\big( \varrho^{G^\prime} \big)$ and $\graph\big( \varrho^{G^\prime} \big)$.
        This yields that $\varrho^{G^\prime}$ must be piecewise linear.

        We would like to define $\chi^{G^\prime}\colon \Rp \to \Rp^{A^\prime}$ in a similar way as $\varrho^{G^\prime}$ such that, for every $\nu_0^\prime \ge 0$, the vector $\chi^{G^\prime}\!(\nu_0^\prime)$ is a normalized thin $s$--$t$-flow with resetting of value $\nu_0^\prime$ in $G^\prime$ that admits the corresponding label $1$ at $s$.
        As \cref{example:normalizedthinflow:nonuniqueness} shows, the flow values on the arcs are generally not unique.
        $\chi^{G^\prime}$ can be chosen in any way such that its graph lies within the projection of the set of solutions to \eqref{eq:linearcomplementarityproblem:parametric} onto the flow rate $\nu_0^\prime$ and flow variables $x^\prime$.
        Reasoning as above, it follows that this can be done such that $\chi^{G^\prime}$ is piecewise linear and continuous as well.
    \end{proof}
    \pagebreak[3]
    \begin{lemma}
        \label{lem:normalizedthinflow:parametric:monotonicity}
        Let $G^\prime = (V, A^\prime, A^*)$ be a shortest path graph with resetting.
        \begin{enumerate}[label=\roman*),ref=(\roman*)]
            \item \label{lem:normalizedthinflow:parametric:monotonicity:scaling}
            For all $\nu_0^\prime, \ell^\prime_0 \ge 0$, the vector $\ell^\prime_0 \cdot \chi^{G^\prime}\!\big(\tfrac{\nu_0^\prime}{\ell^\prime_0}\big)$ is a normalized thin $s$--$t$-flow with resetting of value~$\nu_0^\prime$ in $G^\prime$ and $\ell^\prime_0 \cdot \varrho^{G^\prime}\!\big(\tfrac{\nu_0^\prime}{\ell^\prime_0}\big)$ are corresponding labels.
            \item \label{lem:normalizedthinflow:parametric:monotonicity:monotonicity}
            For all $v \in V$, the function $\varrho^{G^\prime}_v$ is monotonically non-decreasing and $\nu_0^\prime \mapsto \frac{1}{\nu_0^\prime} \varrho^{G^\prime}_v(\nu_0^\prime)$ is monotonically non-increasing.
        \end{enumerate}
    \end{lemma}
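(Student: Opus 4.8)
The plan rests on a single structural observation: the transition function $\varrho^a$ from \cref{definition:normalizedthinflow} is positively homogeneous of degree one, i.e.\ $\varrho^a(\lambda p, \lambda q) = \lambda\,\varrho^a(p, q)$ for all $p, q \ge 0$ and $\lambda \ge 0$, since both $\max\set{p, q}$ and $\tfrac{q}{\nu_a}$ are. Because flow conservation is linear and the label conditions in \cref{definition:normalizedthinflow} are built from the quantities $\varrho^a(\ell^\prime_v, x^\prime_a)$ using only minima and equalities, the whole system is invariant under the simultaneous rescaling $(x^\prime, \ell^\prime) \mapsto (\lambda x^\prime, \lambda \ell^\prime)$ for $\lambda > 0$, with the flow value and the label at $s$ rescaled by $\lambda$ as well. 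Concretely: if $x^\prime$ is a normalized thin $s$--$t$-flow with resetting of value $\mu$ in $G^\prime$ with corresponding labels $\ell^\prime$, then $\lambda x^\prime$ is one of value $\lambda\mu$ with corresponding labels $\lambda\ell^\prime$. To prove \ref{lem:normalizedthinflow:parametric:monotonicity:scaling} I would apply this to the flow $\chi^{G^\prime}\!(\nu_0^\prime/\ell^\prime_0)$ with labels $\varrho^{G^\prime}\!(\nu_0^\prime/\ell^\prime_0)$ supplied by \cref{lem:normalizedthinflow:parametric} and scaling factor $\lambda = \ell^\prime_0$, provided $\ell^\prime_0 > 0$; this immediately gives a normalized thin flow of value $\ell^\prime_0 \cdot \tfrac{\nu_0^\prime}{\ell^\prime_0} = \nu_0^\prime$ with the stated labels and label $\ell^\prime_0$ at $s$.

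For \ref{lem:normalizedthinflow:parametric:monotonicity:monotonicity}, the first assertion is a direct application of \cref{theorem:normalizedthinflow:monotonicity}: for $0 \le \nu_0^\prime \le \mu_0^\prime$, the flows $\chi^{G^\prime}\!(\nu_0^\prime)$ and $\chi^{G^\prime}\!(\mu_0^\prime)$ are normalized thin $s$--$t$-flows with resetting with ordered values $\nu_0^\prime \le \mu_0^\prime$ and the common label $1$ at $s$, so \cref{theorem:normalizedthinflow:monotonicity} yields $\varrho^{G^\prime}\!(\nu_0^\prime) \le \varrho^{G^\prime}\!(\mu_0^\prime)$ component-wise, hence each $\varrho^{G^\prime}_v$ is non-decreasing. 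For the second assertion I would combine the rescaling of \ref{lem:normalizedthinflow:parametric:monotonicity:scaling} with \cref{theorem:normalizedthinflow:monotonicity}: given $0 < \nu_0^\prime \le \mu_0^\prime$, set $\lambda := \nu_0^\prime/\mu_0^\prime \le 1$, so that $\lambda\,\chi^{G^\prime}\!(\mu_0^\prime)$ is a normalized thin $s$--$t$-flow with resetting of value $\lambda\mu_0^\prime = \nu_0^\prime$ with corresponding labels $\lambda\,\varrho^{G^\prime}\!(\mu_0^\prime)$ and label $\lambda \le 1$ at $s$. Comparing it against $\chi^{G^\prime}\!(\nu_0^\prime)$, which has the same value $\nu_0^\prime$ but label $1$ at $s$, \cref{theorem:normalizedthinflow:monotonicity} gives $\lambda\,\varrho^{G^\prime}_v(\mu_0^\prime) \le \varrho^{G^\prime}_v(\nu_0^\prime)$ for every $v$; dividing by $\nu_0^\prime$ and using $\lambda/\nu_0^\prime = 1/\mu_0^\prime$ is exactly $\tfrac{1}{\mu_0^\prime}\varrho^{G^\prime}_v(\mu_0^\prime) \le \tfrac{1}{\nu_0^\prime}\varrho^{G^\prime}_v(\nu_0^\prime)$, i.e.\ $\nu_0^\prime \mapsto \tfrac{1}{\nu_0^\prime}\varrho^{G^\prime}_v(\nu_0^\prime)$ is non-increasing.

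The only genuinely delicate point is the degenerate label $\ell^\prime_0 = 0$ in \ref{lem:normalizedthinflow:parametric:monotonicity:scaling}, where the argument $\nu_0^\prime/\ell^\prime_0$ and the factor $\lambda$ formally break down. Here I would argue by continuity: since $\varrho^{G^\prime}$ and $\chi^{G^\prime}$ are continuous and piecewise linear, they have only finitely many breakpoints and are therefore affine on some $[M, \infty)$, so both $\ell^\prime_0 \cdot \varrho^{G^\prime}\!(\nu_0^\prime/\ell^\prime_0)$ and $\ell^\prime_0 \cdot \chi^{G^\prime}\!(\nu_0^\prime/\ell^\prime_0)$ converge as $\ell^\prime_0 \searrow 0$; the limits are a normalized thin $s$--$t$-flow of value $\nu_0^\prime$ with label $0$ at $s$ and its corresponding labels, because the system defining normalized thin flows with resetting is closed. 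Apart from this boundary case, the argument is bookkeeping around the homogeneity of $\varrho^a$ and \cref{theorem:normalizedthinflow:monotonicity}, so I do not expect any real obstacle.
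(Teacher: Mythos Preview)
Your proposal is correct and follows essentially the same route as the paper. The paper observes the positive homogeneity by pointing to the LCP in \cref{theorem:normalizedthinflow:linearcomplementarityproblem} (where it is visible because the right-hand side $\ell^\prime_0\one_s + \nu^\prime_0\one_t$ is the only inhomogeneous term), whereas you read it off directly from \cref{definition:normalizedthinflow}; both amount to the same one-line remark. For \ref{lem:normalizedthinflow:parametric:monotonicity:monotonicity} the paper rescales to flow value~$1$ and label $\tfrac{1}{\nu_0^\prime}$ at $s$, while you rescale to flow value $\nu_0^\prime$ and label $\tfrac{\nu_0^\prime}{\mu_0^\prime}$ at $s$; after invoking \cref{theorem:normalizedthinflow:monotonicity} these are the same comparison. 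Your explicit treatment of the degenerate case $\ell^\prime_0 = 0$ via the eventual affinity of $\varrho^{G^\prime}$ and $\chi^{G^\prime}$ is more careful than the paper, which leaves that boundary case implicit.
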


    \begin{proof}
        Consulting \cref{theorem:normalizedthinflow:linearcomplementarityproblem}, it becomes quite obvious that a normalized thin flow with resetting with corresponding label $1$ at $s$ and flow value~$\tfrac{\nu_0^\prime}{\ell^\prime_0}$ can be scaled by $\ell^\prime_0$ to yield a normalized thin flow with resetting with corresponding label $\ell^\prime_0$ at $s$ and flow value $\nu_0^\prime$.
        This yields the first statement.

        The monotonicity of $\varrho^{G^\prime}$ is implied directly by \cref{theorem:normalizedthinflow:monotonicity}.
        By the above, $\tfrac{1}{\nu_0^\prime} \varrho^{G^\prime}(\nu_0^\prime)$ can be interpreted as the corresponding labels of a normalized thin flow of value $1$ with label $\tfrac{1}{\nu_0^\prime}$ at $s$.
        Its monotonicity follows again from \cref{theorem:normalizedthinflow:monotonicity}.
    \end{proof}

    We finish this section by arguing that the normalization constraints of \cref{theorem:normalizedthinflow:linearcomplementarityproblem} \ref{theorem:normalizedthinflow:linearcomplementarityproblem:lcp} can be incorporated in \eqref{eq:linearcomplementarityproblem}.
    The resulting linear complementarity problem is an exact formulation for normalized thin flows with resetting.
    This extension, however, requires a rather technical construction, which is why the presentation with successive normalization was chosen.
    In order to model the normalization constraints for $w \in V$ with $\delta^-(w) \cap A^* = \emptyset$,
    fix an arbitrary ordering $N^-(w) = \set{v_1, \ldots, v_{n}}$ where $n = \card{N^-(w)}$.
    Introduce new variables $d_{w, i}$ for $i \in [n]$ and add the complementarity conditions
    \begingroup
    \setlength{\jot}{.5em}
    \begin{align*}
        d_{w, 1} &\ge 0,
        &
        \ell^\prime_w - \ell^\prime_{v_{n}} + d_{w, n} &\ge 0,
        &
        d_{w, 1} \big(\ell^\prime_w - \ell^\prime_{v_{n}} + d_{n}\big) &=0
    \\
        d_{w, 2} &\ge 0,
        &
        \ell^\prime_{v_{1}} - \ell^\prime_{v_{2}} + d_{w, 2} &\ge 0,
        &
        d_{w, 2} \big(\ell^\prime_{v_{1}} - \ell^\prime_{v_{2}} + d_{w, 2}\big) &= 0
        &
        &
    \\
        d_{w, i} &\ge 0,
        &
        \ell^\prime_{v_{i - 1}} - d_{w, i - 1} - \ell^\prime_{v_{i}} + d_{w, i} &\ge 0,
        &
        d_{w, i} \big(\ell^\prime_{v_{i - 1}} - d_{w, i - 1} - \ell^\prime_{v_{i}} + d_{w, i}\big) &= 0
        &
        &\text{for all } 3 \le i \le n .
    \end{align*}
    \endgroup
    It can be shown by induction that these conditions are equivalent to $\ell^\prime_{v_i} - d_{w, i} = \min_{j \in [i]} \ell^\prime_{v_j}$ for all $i \in [n]$ and $\ell^\prime_w \ge \ell^\prime_{v_n} - d_{w, n}  = \min_{j \in [n]} \ell^\prime_{v_j} = \min_{v \in N^-(w)} \ell^\prime_v$.

    \section{Evolution of dynamic equilibria}
    \label{section:evolution}

    We extend the constructive method for dynamic equilibria by \citet{KS2011} to inflow rates $\nu_0 \in L_{loc}^{1}(\R)$ which are right-monotone (in addition to being non-negative and vanishing almost everywhere on $\Rnn$).
    A definition of right-monotonicity follows the next theorem which holds for arbitrary inflow rates.

    Consider the earliest times function $\ell\colon \R \to \Rp^V$ of a dynamic equilibrium.
    For $\vartheta \in \R$, the sets of active and resetting arcs agree with $A^\prime_\vartheta$ and $A^*_\vartheta$ and, therefore, are determined by $\ell(\vartheta)$ only.
    If $\ell$ is right-differentiable at $\vartheta \in \R$, its right-derivative $\tfrac{\text{d} \ell}{\text{d} \vartheta^+}$ at $\vartheta$ represents the corresponding labels of a normalized thin flow with resetting of value $\nu_0(\vartheta)$ in $G^\prime_{\vartheta}$.
    Hence, also $\tfrac{\text{d} \ell}{\text{d} \vartheta^+}(\vartheta)$ is determined by~$\ell(\vartheta)$.
    Indeed, the earliest times function of dynamic equilibria can be characterized as the set of solutions to a differential equation, as the next theorem states.

    In the case that $\ell$ is the earliest times function of a dynamic equilibrium, we know that $G^\prime_\vartheta$ is a shortest path graph with resetting for all $\vartheta \in \R$.
    To ensure this property through the differential equation, we need to extend the definition of $\varrho^{G^\prime}$ and $\chi^{G^\prime}$ to triples $G^\prime = (V, A^\prime, A^*)$ such that $A^* \subseteq A^\prime \subseteq A$ where $A^\prime$ is acyclic, but there are $v \in V$ without an $s$--$v$-path in $A^\prime$.
    In that case, let $U := \set{v \in V\colon \exists\, \text{$s$--$v$-path in } A^\prime}$, set $\varrho^{G^\prime}_v \equiv 0$ for $v \in U$, and $\varrho^{G^\prime}_v \equiv 1$ for $v \in V \setminus U$.
    Further, define $\chi^{G^\prime}_a \equiv 0$ for all $a \in A^\prime$.

    \begin{theorem}[Differential equation]
        \label{thm:differentialequation}
        Let $\ell^0_v$ denote the shortest distance from $s$ to $v$ in $G$ with respect to~$\tau$ for all $v \in V$.
        Then $\ell\colon \R \to \Rp^V$ is the earliest times function of a dynamic equilibrium if and only if $\ell$ is a locally absolutely continuous solution to
        \begin{equation*}
            \label{eq:differentialequation}
            \tag{DE}
            \ell(\vartheta) = \big(\vartheta + \ell^0_v\big)_{v \in V} \text{ for all } \vartheta \le 0 \quad \text{and} \quad \tfrac{\text{d} \ell}{\text{d} \vartheta} (\vartheta) = \varrho^{G^\prime_\vartheta} \big( \nu_0(\vartheta) \big) \text{ for a.e.\ } \vartheta \ge 0 .
        \end{equation*}
    \end{theorem}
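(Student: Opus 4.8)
\emph{Overview.} The statement is an equality between the set of earliest times functions of dynamic equilibria and the set of locally absolutely continuous solutions of \eqref{eq:differentialequation}, and the plan is to prove the two inclusions separately. Both rest on the link, recalled in \cref{section:normalizedthinflow}, between the right-derivative of an earliest times function and normalized thin flows with resetting.

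\emph{From a dynamic equilibrium to a solution of \eqref{eq:differentialequation}.} Let $(f^+, f^-)$ be a dynamic equilibrium with earliest times function $\ell$. For the initial condition I would first note that $\Nu_0$ vanishes on $\Rn$, hence $F^+_a \equiv F^-_a \equiv 0$ there, so by \eqref{eq:queuedynamics} every queue is empty at all times $\vartheta \le 0$; since a dynamic equilibrium admits no overtaking, a particle leaving $s$ at time $\vartheta \le 0$ only ever queues behind flow that left $s$ during $\Rn$, of which there is none, so $q_a\big(\ell_v(\vartheta)\big) = 0$ for all $a = (v, w)$ and $\vartheta \le 0$, and \eqref{eq:bellmanequation} then forces $\ell_v(\vartheta) = \vartheta + \ell^0_v$. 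Local absolute continuity of $\ell$ is already recorded in \cref{section:normalizedthinflow}. For the differential equation itself I would invoke the fact, also recalled there, that at almost every $\vartheta \ge 0$ the right-derivative $\ell^\prime(\vartheta)$ exists and is a tuple of corresponding labels of a normalized thin $s$--$t$-flow with resetting of value $\nu_0(\vartheta)$ in $G^\prime_\vartheta$; since $\ell_s(\vartheta) = \vartheta$ we have $\ell^\prime_s(\vartheta) = 1$, and $G^\prime_\vartheta$ is a genuine shortest path graph with resetting in an equilibrium, so \cref{corollary:normalizedthinflow:uniqueness} identifies $\ell^\prime(\vartheta)$ with $\varrho^{G^\prime_\vartheta}\!\big(\nu_0(\vartheta)\big)$. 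Finally, the right-derivative of an absolutely continuous function agrees with its derivative almost everywhere.

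\emph{From a solution of \eqref{eq:differentialequation} to a dynamic equilibrium.} Conversely, let $\ell$ be a locally absolutely continuous solution of \eqref{eq:differentialequation}. The plan is to reconstruct a flow over time and apply \cref{lem:dynamicequilibrium:characterization}. First I would check that along such an $\ell$ every vertex stays reachable in $A^\prime_\vartheta$ — this holds at the initial condition, where $A^\prime_\vartheta$ contains the DAG of $\tau$-shortest paths, and is preserved by continuity of $\ell$ — so that $G^\prime_\vartheta$ is a shortest path graph with resetting and the extended definition of $\varrho^{G^\prime}, \chi^{G^\prime}$ coincides with the original one. Since $\vartheta \mapsto (A^\prime_\vartheta, A^*_\vartheta)$ is a measurable map into a finite set (preimages are intersections of closed and open conditions on the continuous $\ell$), $x^\prime(\vartheta) := \chi^{G^\prime_\vartheta}\!\big(\nu_0(\vartheta)\big)$ is measurable; set $x_a(\vartheta) := \int_0^\vartheta x^\prime_a(\theta)\,\d\theta$ for $\vartheta \ge 0$ and $x_a \equiv 0$ on $\Rn$, which is a static $s$--$t$-flow of value $\Nu_0(\vartheta)$ with labels $\ell(\vartheta)$. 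Then define the cumulative arc flows by pushing $x_a$ forward through the continuous non-decreasing functions $\ell_v$ and $\ell_w$, namely $F^+_a$ with $F^+_a\big(\ell_v(\vartheta)\big) = x_a(\vartheta)$ (well defined because $x^\prime_a$ vanishes wherever $\ell^\prime_v$ does), extended constantly off the image of $\ell_v$, and likewise $F^-_a$ from $\ell_w$; put $f^\pm_a := \tfrac{\d F^\pm_a}{\d\vartheta}$. Flow conservation for $(f^+, f^-)$ follows from that of the $x(\vartheta)$, and the queue dynamics \eqref{eq:queuedynamics} are checked through the substitution $\xi = \ell_v(\vartheta)$, under which the defining identity $\ell^\prime_w(\vartheta) = \varrho^a\big(\ell^\prime_v(\vartheta), x^\prime_a(\vartheta)\big)$ translates, case by case on whether $a \in A^*_\vartheta$ and on the sign of $x^\prime_a - \nu_a\ell^\prime_v$, into the prescribed behaviour of $z_a$. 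Once feasibility holds, $F^+_a\big(\ell_v(\vartheta)\big) = x_a(\vartheta) = F^-_a\big(\ell_w(\vartheta)\big)$ by construction; the Bellman equations \eqref{eq:bellmanequation} for the constructed flow hold because the arcs of $A^\prime_\vartheta$ realize them and arcs outside $A^\prime_\vartheta$ give strictly larger values, so $\ell$ is the earliest times function, and \cref{lem:dynamicequilibrium:characterization} yields that $(f^+, f^-)$ is a dynamic equilibrium.

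\emph{Main obstacle.} The delicate part is the backward direction, and within it the change-of-variables argument: because $\ell_v$ may be constant on entire intervals, its generalized inverse is multivalued, so one has to be careful about how $\vartheta$-time (where the thin-flow identities live) relates to physical time $\xi$ (where the queue dynamics live), and about measurability of the thin-flow selection when $\nu_0$ is merely locally integrable rather than piecewise constant — the original phase-by-phase argument of \citet{KS2011} must be recast as an integration over the measurable partition of the time axis induced by $\vartheta \mapsto (A^\prime_\vartheta, A^*_\vartheta)$. A secondary point is the no-overtaking and reachability bookkeeping used to pin down the initial condition and to keep $G^\prime_\vartheta$ inside the regime of honest shortest path graphs with resetting.
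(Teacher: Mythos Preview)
Your proposal is correct and follows essentially the same route as the paper: the forward direction cites the known thin-flow characterization of the derivative together with \cref{corollary:normalizedthinflow:uniqueness}, and the backward direction reconstructs a flow over time from $\chi^{G^\prime_\vartheta}$ via the change of variables $\xi = \ell_v(\vartheta)$, then verifies feasibility, the Bellman equations, and the equilibrium condition through \cref{lem:dynamicequilibrium:characterization}. The one spot where your sketch is too optimistic is the reachability step: continuity of $\ell$ alone does not keep every vertex reachable in $A^\prime_\vartheta$ --- the paper's argument instead exploits the extended definition of $\varrho^{G^\prime}$ on triples that are \emph{not} shortest path graphs with resetting (unreachable vertices receive label derivative $1$, reachable ones $0$), so that across a hypothetical first failure interval the difference $\ell_w - \ell_v$ is non-decreasing on every cut arc $(v,w)$, forcing such an arc to remain active and yielding a contradiction.
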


    \begin{proof}
        Assume $\ell$ is the earliest times function of a dynamic equilibrium.
        Since the network is assumed to be empty up to time zero, the initial condition on $\Rn$ holds due to \eqref{eq:bellmanequation}.
        The components of $\ell$ are locally absolutely continuous and, hence, differentiable almost everywhere.
        \citet{CCL2015} show that, if it exists, the right-derivative of $\ell$ at time $\vartheta \in \R$ represents the corresponding labels of a normalized thin flow with resetting of value $\nu_0(\vartheta)$ in $G^\prime_\vartheta$ with label $1$ at~$s$.
        Hence, $\tfrac{\text{d} \ell}{\text{d} \vartheta} (\vartheta) = \varrho^{G^\prime_\vartheta} \big( \nu_0(\vartheta) \big)$ follows for almost every $\vartheta \ge 0$.

        \bigskip
        For the converse direction, assume $\ell$ is a locally absolutely continuous solution to the differential equation \eqref{eq:differentialequation}.
        $\ell$ defines $G^\prime_{\vartheta} = (V, A^\prime_\vartheta, A^*_\vartheta)$ for all $\vartheta \in \R$.
        To simplify notation, we set
        \begin{equation*}
            \varrho\colon \R \to \Rp^V, \vartheta \mapsto \varrho^{G^\prime_\vartheta}\big(\nu_0(\vartheta)\big)
            \quad \text{ and } \quad
            \chi\colon \R \to \Rp^A, \vartheta \mapsto \begin{cases}
                 \chi^{\smash{G^\prime_\vartheta}}_a\big(\nu_0(\vartheta)\big) &\text{for } a \in A^\prime_\vartheta \\
                0 &\text{for } a \not\in A^\prime_\vartheta .
            \end{cases}
        \end{equation*}
        Note that in both definitions the shortest path network $G^\prime_\vartheta$ as well as the inflow rate $\nu_0(\vartheta)$ depend on the parameter $\vartheta$.
        The codomain of $\chi^{\smash{G^\prime_\vartheta}}$ depends on $\vartheta$.
        $\chi$ extends the definition of the flows on $A^\prime_\vartheta$ onto $A$ by zero.
        The fundamental theorem of calculus yields
        \begin{equation*}
            \ell(\vartheta) = \ell^0 + \int_{0}^{\vartheta} \varrho(\theta) \d\theta \qquad \text{for all } \vartheta \in \R,
        \end{equation*}
        where integration is applied element-wise.
        Since $\varrho_v \ge 0$ for all $v \in V$, $\ell_v$ is a monotonically non-decreasing function.
        Based on $\ell$, we define a flow over time and prove step by step that it is feasible, has the earliest times function $\ell$, and satisfies the conditions of a dynamic equilibrium.

        \medskip
        \emph{The shortest path graphs with resetting.}
        We start by showing that $G^\prime_\vartheta$ is a shortest path graph with resetting for all $\vartheta \in \R$.
        It is clear from the initial condition that the claim holds for $\vartheta \le 0$.
        Assume for contradiction, that there is $\vartheta > 0$ such that $A^\prime_\vartheta$ is not a shortest path graph with resetting.
        This means that the set $U := \set{v \in V\colon \exists\, \text{$s$--$v$-path in } A^\prime_\vartheta}$ is a proper subset of $V$ and $\delta^+(U) \cap A^\prime_\vartheta = \emptyset$.
        Let $\ubar{\vartheta} := \sup \set{\theta \le \vartheta\colon \delta^+(U) \cap A^\prime_\theta \ne \emptyset}$ be the last time before $\vartheta$ that an arc leaving $U$ was active.
        Then, $0 \le \ubar{\vartheta} < \vartheta$ due to the initial condition and the continuity of $\ell$.
        In particular, it holds $\delta^+(U) \cap A^\prime_{\ubar{\vartheta}} \ne \emptyset$, and $\delta^+(U) \cap A^\prime_\theta = \emptyset$ for all $\theta \in (\ubar{\vartheta}, \vartheta]$.
        Then for all $(v, w) \in \delta^+(U) \cap A^\prime_{\ubar{\vartheta}}$, $\varrho_w \equiv 1$ and $\varrho_w - \varrho_v \ge 0$ on $(\ubar{\vartheta}, \vartheta]$  yield $\emptyset \ne \delta^+(U) \cap A^\prime_{\ubar{\vartheta}}  \subseteq \delta^+(U) \cap A^\prime_\vartheta = \emptyset$, a contradiction.

        As an immediate consequence, $\lim_{\vartheta \to +\infty} \ell_v(\vartheta) \ge \lim_{\vartheta \to +\infty} \ell_s(\vartheta) + \ell^0_v = +\infty$ holds for all $v \in V$.
        On the other hand, $\lim_{\vartheta \to -\infty} \ell_v(\vartheta) = \lim_{\vartheta \to -\infty} \vartheta + \ell^0_v = -\infty$.
        The continuity of $\ell$ yields $\ell_v(\R) = \R$ for all $v \in V$.

        \medskip
        \emph{The flow over time.}
        For $a = (v, w) \in A$, define the flow functions $f_a^+\colon \R \to \Rp$ and $f_a^-\colon \R \to \Rp$ by setting
        \begin{equation*}
            (f^+_a \circ \ell_v) \cdot \varrho_v \equiv \chi_a \equiv (f^-_a \circ \ell_w) \cdot \varrho_w
            \quad \text{ a.e.\ on } \R .
        \end{equation*}
        We shall see that this is a sound definition.
        For $\ubar{\vartheta} < \bar{\vartheta}$ such that $\ell_v(\ubar{\vartheta}) = \ell_v(\bar{\vartheta})$, the monotonicity of $\ell_v$ implies $\ell_v \equiv \ell_v(\ubar{\vartheta})$ and $\varrho_v \equiv 0$ almost everywhere on $[\ubar{\vartheta}, \bar{\vartheta}]$.
        Hence, $f^+_a$ is well-defined at almost every $\ell_v(\vartheta)$ for $\vartheta \in \R$ such that $\tfrac{\text{d} \ell_v}{\text{d}\vartheta}(\vartheta) = \varrho_v(\vartheta) \ne 0$.
        As the set $\set{\ell_v(\vartheta)\colon \vartheta \in \R \text{ and } \tfrac{\text{d} \ell_v}{\text{d}\vartheta}(\vartheta) = 0}$ has measure zero, $f^+_a$ is well-defined almost everywhere on $\ell_v(\R) = \R$.
        Reasoning in a similar way gives that $f^-_a$ is also well-defined almost everywhere on $\R$.
        \\
        \Cref{definition:normalizedthinflow} yields almost everywhere $f^-_a \le \nu_a$.
        Similarly, almost everywhere $f^+_a \equiv f^+_a \circ \ell_s \equiv \tfrac{\chi_a}{\varrho_s} \le \nu_0$ if $v = s$ and $f^+_a \le \sum_{b \in \delta^-(v)} \nu_{b}$ if $v \ne s$.
        As $\nu_0$ is locally integrable and $(\nu_{b})_{{b} \in A}$ is constant, the functions $f^+_a$ and $f^-_a$ are locally integrable as well.
        Further, $f^+_a$ and $f^-_a$ vanish on $\R_{\le 0}$ as $\nu_0$ does so and $\ell(0) = \ell^0 \ge 0$.
        The flow conservation constraints hold, since normalized thin flows with resetting obey them.

        \medskip
        \emph{The queue lengths.}
        For $a = (v, w) \in A$, we define the function $z_a$ via its derivative and show that it evaluates to the queue length of $a$ as induced by $f^+_a$.
        For that purpose, we define $\tfrac{\text{d} z_a}{\text{d} \vartheta}$ by setting
        \begin{equation*}
            \tfrac{\text{d} z_a}{\text{d} \vartheta}\big( \ell_v(\vartheta) \big) \cdot \varrho_v(\vartheta)
            = \begin{cases}
                0 &\text{if } a \in A \setminus A^\prime_\vartheta \\
                \nu_a \cdot \pos[\big]{ \varrho_w(\vartheta) - \varrho_v(\vartheta) } &\text{if } a \in A^\prime_\vartheta \setminus A^*_\vartheta \\
                \nu_a \cdot \big( \varrho_w(\vartheta) - \varrho_v(\vartheta) \big) &\text{if } a \in A^*_\vartheta
            \end{cases}
            \qquad
            \text{for a.e.\ } \vartheta \in \R .
        \end{equation*}
        As argued before, this determines $\tfrac{\text{d} z_a}{\text{d} \vartheta}$ almost everywhere on $\R$ as $\set{\ell_v(\vartheta)\colon \vartheta \in \R \text{ and } \tfrac{\text{d} \ell_v}{\text{d}\vartheta}(\vartheta) = 0}$ has measure zero.
        Using $0 \le \varrho_w \le \max \set[\big]{\varrho_v, \tfrac{\chi_a}{\nu_a}}$, we can bound $\tfrac{\text{d} z_a}{\text{d} \vartheta}$ almost everywhere on $\ell_v(\R) = \R$ by noting that for almost every $\vartheta \in \R$
        \begin{align*}
          \abs{\tfrac{\text{d} z_a}{\text{d} \vartheta}\big( \ell_v(\vartheta) \big)} \cdot \varrho_v(\vartheta)
          \le
          \nu_a \cdot \abs{ \varrho_w(\vartheta) - \varrho_v(\vartheta) }
          \le
          \max \set[\big]{\nu_a \cdot \varrho_v(\vartheta), \chi_a(\vartheta)}
          =
          \max \set{ \nu_a, f^+_a\big(\ell_v(\vartheta)\big) } \cdot \varrho_v(\vartheta) .
        \end{align*}
        This shows that the local integrability of $f^+_a$ carries over to $\tfrac{\text{d} z_a}{\text{d} \vartheta}$.
        Therefore, $\tfrac{\text{d} z_a}{\text{d} \vartheta}$ is indeed the derivative of the function $z_a\colon \R \to \Rp, \vartheta \mapsto \int_{\ell_v(0)}^\vartheta \tfrac{\text{d} z_a}{\text{d} \vartheta} (\theta) \d\theta$.
        Since $\ell_v, \ell_w$, and $z_a$ are locally absolutely continuous and $\ell_v$ is monotone, the formula for the change of variables applies to $z_a \circ \ell_v$ and yields
        \begin{equation*}
          z_a \big( \ell_v(\vartheta) \big) = \int_{0}^\vartheta \tfrac{\text{d} z_a}{\text{d} \vartheta}\big( \ell_v(\theta) \big) \cdot \varrho_v(\theta) \d\theta = \int_0^\vartheta \nu_a \cdot \tfrac{\text{d}}{\text{d} \theta} \big[ \ell_w(\theta) - \ell_v(\theta) - \tau_a \big]_+ \d\theta = \nu_a \cdot \big[ \ell_w(\vartheta) - \ell_v(\vartheta) - \tau_a \big]_+ .
        \end{equation*}

        Now, we can show that $z_a$ is indeed the queue length that is induced by $f^+_a$ by proving that $z_a$ is a solution to \eqref{eq:queuedynamics}.
        The above identity implies $z_a \equiv 0$ on $\R_{\le 0} \subseteq (-\infty, \ell^0_v]$.
        Let $\vartheta \in \R$ such that $\varrho_v(\vartheta) \ne 0$.
        In the case $a \in A \setminus A^\prime_\vartheta$, we know from the above identity and continuity of $\ell$ that $z_a\big(\ell_v(\vartheta)\big) = 0$ and $\tfrac{\text{d} z_a}{\text{d} \vartheta} \big( \ell_v(\vartheta) \big) = 0$.
        On the other hand, the definition of $\chi$ yields $\chi_a(\vartheta) = 0$, which implies $\pos{f^+_a\big(\ell_v(\vartheta)\big) - \nu_a} = 0$ and, thus, $\eqref{eq:queuedynamics}$ holds for $a$ at time $\ell_v(\vartheta)$.
        If $a \in A^\prime_\vartheta \setminus A^*_\vartheta$, we also know $z_a\big( \ell_v(\vartheta) \big) = 0$.
        In the case that $\nu_a \cdot \varrho_w(\vartheta) = \max \set[\big]{\nu_a \cdot \varrho_v(\vartheta), \chi_a(\vartheta)}$, it follows immediately that $\tfrac{\text{d} z_a}{\text{d} \vartheta}\big(\ell_v(\vartheta)\big) = \pos{f^+_a\big(\ell_v(\vartheta)\big) - \nu_a}$.
        Otherwise, $\chi_a(\vartheta) = 0$ must holds, which implies the equation $\tfrac{\text{d} z_a}{\text{d} \vartheta}\big(\ell_v(\vartheta)\big) = 0 = \pos{f^+_a\big(\ell_v(\vartheta)\big) - \nu_a}$.
        Finally, if $a \in A^*_\vartheta$, then $z_a\big(\ell_v(\vartheta)\big) > 0$ holds.
        Further, $\nu_a \cdot \varrho_w(\vartheta) = \chi_a(\vartheta)$ yields $\tfrac{\text{d} z_a}{\text{d} \vartheta}\big(\ell_v(\vartheta)\big) = f^+_a\big(\ell_v(\vartheta)\big) - \nu_a$.
        In total, $z_a$ fulfills \eqref{eq:queuedynamics} for almost every $\ell_v(\vartheta)$ such that $\vartheta \in \R$ and $\varrho_v(\vartheta) \ne 0$, which is almost everywhere on $\R$.

        \medskip
        \emph{The queuing dynamics.}
        To see that $(f^+, f^-)$ is a feasible flow over time, we will show that it respects the queuing dynamics $z_a\big(\ell_v(\vartheta)\big) = F^+_a\big(\ell_v(\vartheta)\big) - F^-_a\big(\ell_v(\vartheta) + \tau_a\big)$ for all $a \in A$ and $\vartheta \in \R$.

        Let $a = (v, w) \in A$.
        We start by establishing it for a single point in time and extend it by looking at its derivative.
        Consider the time $\hat{\vartheta} = \inf \set{\theta \ge 0\colon a \in A^\prime_\theta}$, relative to which $a$ is active for the first time.
        If no such time exists, i.e., $\hat{\vartheta} = +\infty$, then $z_a \equiv 0$, $F_a^+ \equiv 0$, and $F_a^- \equiv 0$ satisfy the equation trivially.
        Otherwise, we get $\ell_w(\hat{\vartheta}) = \ell_v(\hat{\vartheta}) + \tau_a$ and, therefore,
        \begin{equation*}
            z_a\big(\ell_v(\hat{\vartheta})\big) = 0 = F^+_a\big(\ell_v(\hat{\vartheta})\big) - F^-_a\big(\ell_w(\hat{\vartheta})\big) = F^+_a\big(\ell_v(\hat{\vartheta})\big) - F^-_a\big(\ell_v(\hat{\vartheta}) + \tau_a\big) .
        \end{equation*}
        It remains to prove that the derivatives of both sides of the equation agree almost everywhere.
        As we have shown that $z_a$ is a solution to \eqref{eq:queuedynamics}, it suffices to prove for almost every $\vartheta \in \R$ with $\varrho_v(\vartheta) \ne 0$ that
        \begin{equation*}
            f^-_a\big(\ell_v(\vartheta) + \tau_a\big)
            =
            \begin{cases}
                \min \set[\big]{ f^+_a\big(\ell_v(\vartheta)\big), \nu_a }
                &\text{if } z_a\big(\ell_v(\vartheta)\big) = 0, \text{ i.e., } a \in A \setminus A^*_\vartheta
            \\
                \nu_a
                &\text{if } z_a\big(\ell_v(\vartheta)\big) > 0, \text{ i.e., }a \in A^*_\vartheta .
            \end{cases}
        \end{equation*}

        First, consider $\vartheta \in \R$ such that $a \in A^\prime_\vartheta \setminus A^*_\vartheta$, i.e., $\ell_w(\vartheta) = \ell_v(\vartheta) + \tau_a$.
        Due to the continuity of $\ell$, the set $\set{\theta \in \R\colon a \in A^\prime_\theta \setminus A^*_\theta}$ consists of closed intervals.
        Consequently, for almost every considered $\vartheta$ it holds $\varrho_w(\vartheta) = \varrho_v(\vartheta)$.
        Therefore, $f^-_a\big(\ell_v(\vartheta) + \tau_a\big) = f^-_a\big(\ell_w(\vartheta)\big) = f^+_a\big(\ell_v(\vartheta)\big)$ and $f^-_a\big(\ell_w(\vartheta)\big) \le \nu_a$.
        Hence, the equation holds for almost all $\vartheta \in \R$ such that $\varrho_v(\vartheta) \ne 0$ and $a \in A^\prime_\vartheta \setminus A^*_\vartheta$.

        If $a \in A \setminus A^\prime_\vartheta$, define $\ubar{\vartheta} := \sup \set{\theta \le \vartheta\colon a \in A^\prime_\theta} \in \R \cup \set{-\infty}$ and $\bar{\vartheta} := \inf \set{\theta \ge \vartheta\colon a \in A^\prime_\theta} \in \R \cup \set{+\infty}$.
        Continuity of $\ell$ yields $a \in A \setminus A^\prime_\theta$ for all $\theta \in (\ubar{\vartheta}, \bar{\vartheta})$ as well as $\ell_w(\ubar{\vartheta}) = \ell_v(\ubar{\vartheta}) + \tau_a$ and $\ell_w(\bar{\vartheta}) = \ell_v(\bar{\vartheta}) + \tau_a$.
        Therefore, monotonicity and continuity of $\ell_v$ and $\ell_w$ imply that $f^+_a$ and $f^-_a$ vanish almost everywhere on the intervals $\big(\ell_v(\ubar{\vartheta}), \ell_v(\bar{\vartheta})\big)$ and $\big(\ell_w(\ubar{\vartheta}), \ell_w(\bar{\vartheta})\big) = \big(\ell_v(\ubar{\vartheta}) + \tau_a, \ell_v(\bar{\vartheta}) + \tau_a\big)$, respectively.
        This proves that the equation is fulfilled for almost every $\vartheta \in \R$ such that $a \in A \setminus A^\prime_\vartheta$.

        A similar argument shows that $f^-_a\big(\ell_v(\vartheta) + \tau_a\big) = \nu_a$ for almost every $\vartheta$ with $a \in A^*_\vartheta$.
        In conclusion, $z_a(\vartheta) = F^+_a(\vartheta) - F^-_a(\vartheta + \tau_a)$ holds on $\R$.

        \medskip
        \emph{The earliest times function.}
        Now, it is immediate to see that $\ell$ indeed is the earliest times function of the feasible flow over time $(f^+, f^-)$.
        For $a = (v, w) \in A$ and $\vartheta \in \R$, the above characterization of $z_a$ in terms of $\ell$ immediately implies $\ell_w(\vartheta) \le \ell_v(\vartheta) + \tfrac{1}{\nu_a} \cdot z_a\big(\ell_v(\vartheta)\big) + \tau_a$ with equality if and only if $a \in A^\prime_\vartheta$.
        This proves that $\ell$ satisfies the Bellman equations \eqref{eq:bellmanequation}.

        \medskip
        \emph{The equilibrium condition.}
        To complete the proof, we only have to show that $(f^+, f^-)$ is a dynamic equilibrium.
        For $a = (v, w) \in A$, $\ell_v$ and $\ell_w$ are monotonically non-decreasing.
        Hence, for every $\vartheta \in \R$, two changes of variables yield
        \begin{equation*}
            F^+_a\big( \ell_v(\vartheta) \big)
            =
            \int_{0}^{\vartheta} f^+_a\big( \ell_v(\theta) \big) \cdot \varrho_v \big( \theta \big) \d\theta
            =
            \int_{0}^{\vartheta} \chi_a \big( \theta \big) \d\theta
            =
            \int_{0}^{\vartheta} f^-_a\big( \ell_w(\theta) \big) \cdot \varrho_w \big( \theta \big) \d\theta
            =
            F^-_a\big( \ell_w(\vartheta) \big)
            .
        \end{equation*}
        \Cref{lem:dynamicequilibrium:characterization} gives that $(f^+, f^-)$ is a dynamic equilibrium.
    \end{proof}

    The above theorem suggests to construct a dynamic equilibrium by integrating over thin flows with resetting.
    The method of \citet{KS2011} does this by implicitly assuming that $\ell$ is right-linear.
    This is feasible for piecewise constant inflow rates, as there always is a dynamic equilibrium with that property.
    For these dynamic equilibria, the functions $\vartheta \mapsto \varrho^{\smash{G^\prime_\vartheta}}_v \big(\nu_0(\vartheta)\big)$ are right-constant.
    We want to consider a more general class of inflow rates.
    \begin{definition}[Monotone functions]
        We call a function $g \in L^1_\text{loc}(\R)$ \emph{monotonically non-decreasing} (\emph{non-increasing}) if there exists a set $N \subseteq \R$ of measure zero such that $g(\xi) \le g(\hat{\xi})$ $\big(g(\xi) \ge g(\hat{\xi})\big)$ for all $\xi, \hat{\xi} \in \R \setminus N$ with $\xi \le \hat{\xi}$.
        \\
        We call a function $g$ \emph{monotone} if it is monotonically non-decreasing or monotonically non-increasing.
        Further, $g$ is \emph{right-monotone} (\emph{left-monotone}) if for every $\xi \in \mathbb{R}$ there is $\varepsilon > 0$ such that $g$ is monotone on $[\xi, \xi + \varepsilon]$ $\big([\xi - \varepsilon, \xi]\big)$.
    \end{definition}

    For a right-monotone inflow rate $\nu_0$, the map $\vartheta \mapsto \varrho^{\smash{G^\prime_\vartheta}}_v \big(\nu_0(\vartheta)\big)$ cannot be expected to be right-constant.
    Due to the piecewise linear dependency of the thin flows with resetting on the flow value, however, this map is right-monotone.
    This still allows to use the same method for constructing a dynamic equilibrium as follows.

    \begin{theorem}[$\alpha$-extension of dynamic equilibria]
        \label{dynamicequilibrium:extension}
        Let the inflow rate $\nu_0$ be right-monotone.
        For $\vartheta \ge 0$, let $\ell\colon (-\infty, \vartheta] \to \Rp^V$ fulfill the differential equation \eqref{eq:differentialequation} on $(-\infty, \vartheta]$.
        Then there is $\alpha > 0$ such that $\ell$ can be extended to fulfill it on $(-\infty, \vartheta + \alpha]$.
    \end{theorem}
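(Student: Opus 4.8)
The plan is to extend $\ell$ across a single phase governed by a shortest path graph with resetting $\hat G$ that is read off from $G'_\vartheta$: set $\hat\ell := \ell$ on $(-\infty,\vartheta]$ and $\hat\ell(\theta) := \ell(\vartheta) + \int_\vartheta^\theta \varrho^{G'_\vartheta}\bigl(\nu_0(\xi)\bigr)\d\xi$ for $\theta > \vartheta$, and then check that the active and resetting arc sets induced by $\hat\ell$ on a short interval $(\vartheta,\vartheta+\alpha]$ are exactly those of $\hat G$. Since $\nu_0$ is right-monotone, fix $\varepsilon_0 > 0$ with $\nu_0$ monotone on $[\vartheta,\vartheta+\varepsilon_0]$; I treat the non-decreasing case, the non-increasing one being symmetric. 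Put $c := \essinf_{(\vartheta,\vartheta+\varepsilon_0)}\nu_0$, so $\nu_0 \ge c$ a.e.\ on $(\vartheta,\vartheta+\varepsilon_0)$ and $\esssup_{(\vartheta,\theta)}\nu_0 \to c$ as $\theta \downarrow \vartheta$. Because $\ell$ solves \eqref{eq:differentialequation} on $(-\infty,\vartheta]$, the triple $G'_\vartheta$ is a shortest path graph with resetting (the argument is the one used in the proof of \cref{thm:differentialequation}), so $\varrho^{G'_\vartheta}$ and $\chi^{G'_\vartheta}$ from \cref{lem:normalizedthinflow:parametric} are available; as $\varrho^{G'_\vartheta}$ is continuous and $\nu_0$ is bounded near $\vartheta$, the integrand above is locally integrable, $\hat\ell$ is locally absolutely continuous with $\tfrac{\d\hat\ell}{\d\theta}(\theta) = \varrho^{G'_\vartheta}(\nu_0(\theta))$ a.e., and $\hat\ell \ge \ell(\vartheta) \ge 0$.

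The heart of the proof is a passage to a substructure that leaves thin-flow labels unchanged. For $\nu^\circ \ge 0$ let $(\ell^\circ, x^\circ) := \bigl(\varrho^{G'_\vartheta}(\nu^\circ), \chi^{G'_\vartheta}(\nu^\circ)\bigr)$ and set
\[
\hat A^* := A^*_\vartheta \cup \set[\big]{a = (v,w) \in A'_\vartheta \setminus A^*_\vartheta : \ell^\circ_w > \ell^\circ_v}, \qquad \hat A' := \hat A^* \cup \set[\big]{a = (v,w) \in A'_\vartheta \setminus A^*_\vartheta : \ell^\circ_w = \ell^\circ_v}.
\]
I would verify, clause by clause against \cref{definition:normalizedthinflow}, that $\hat G := (V,\hat A',\hat A^*)$ is a shortest path graph with resetting and that $(\ell^\circ, x^\circ|_{\hat A'})$ is a normalized thin $s$--$t$-flow of value $\nu^\circ$ in $\hat G$ with label $\ell^\circ_s = 1$; \cref{corollary:normalizedthinflow:uniqueness} then gives $\varrho^{\hat G}(\nu^\circ) = \varrho^{G'_\vartheta}(\nu^\circ)$. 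The key observations: an arc $a=(v,w)$ that is dropped has $\ell^\circ_w < \ell^\circ_v \le \varrho^a(\ell^\circ_v,x^\circ_a)$, hence $x^\circ_a = 0$ by complementarity, so it carries no flow and never attains the minimum defining $\ell^\circ_w$; an arc $a$ promoted to $\hat A^*$ has $\ell^\circ_w > \ell^\circ_v$, which together with $\ell^\circ_w \le \max\set{\ell^\circ_v, x^\circ_a/\nu_a}$ (tight if $x^\circ_a>0$) forces $\max\set{\ell^\circ_v, x^\circ_a/\nu_a} = x^\circ_a/\nu_a$, i.e.\ its $\varrho$-value already coincides with the one it would have as a resetting arc; and for every $w \ne s$ the arc attaining $\ell^\circ_w$ survives into $\hat A'$, so, with acyclicity, every vertex keeps an $s$-path in $\hat A'$.

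To match the induced structure, fix $c_1 > c$ below every breakpoint of $\varrho^{\bar G}$ for every shortest path graph with resetting $\bar G = (V,\bar A,\bar A^*)$ satisfying $A^*_\vartheta \subseteq \bar A^* \subseteq \bar A \subseteq A'_\vartheta$ (there are finitely many), so that every such $\varrho^{\bar G}$ is affine on $[c,c_1)$. For $a=(v,w) \in A'_\vartheta \setminus A^*_\vartheta$, writing $h_a := \varrho^{G'_\vartheta}_w - \varrho^{G'_\vartheta}_v$ (affine on $[c,c_1)$ with slope $\beta_a$),
\[
\hat\ell_w(\theta) - \hat\ell_v(\theta) - \tau_a = \int_\vartheta^\theta h_a(\nu_0(\xi))\,\d\xi = h_a(c)\,(\theta-\vartheta) + \beta_a \int_\vartheta^\theta \bigl(\nu_0(\xi) - c\bigr)\,\d\xi ,
\]
where the last integral lies in $\bigl[0,\, (\theta-\vartheta)\,(\esssup_{(\vartheta,\theta)}\nu_0 - c)\bigr] = \bigl[0,\, o(\theta-\vartheta)\bigr]$. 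Hence for all small $\theta - \vartheta > 0$ the sign of the left-hand side is a fixed $\sigma_a \in \set{-1,0,1}$: it is $\operatorname{sign} h_a(c)$ when $h_a(c) \ne 0$, and when $h_a(c) = 0$ it is $0$ if $\nu_0 \equiv c$ a.e.\ near $\vartheta$ and $\operatorname{sign}\beta_a$ otherwise. Since arcs of $A^*_\vartheta$ keep a strict surplus and arcs outside $A'_\vartheta$ a strict deficit for $\theta$ near $\vartheta$, there is $\alpha \in (0,\varepsilon_0]$, small enough that also $\esssup_{(\vartheta,\vartheta+\alpha)}\nu_0 < c_1$, for which $G'_\theta$ computed from $\hat\ell$ equals the fixed triple with $\hat A^* = A^*_\vartheta \cup \set{a : \sigma_a > 0}$, $\hat A' = \hat A^* \cup \set{a : \sigma_a = 0}$, for every $\theta \in (\vartheta,\vartheta+\alpha]$. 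A short computation identifies this triple with the substructure of the previous paragraph for $\nu^\circ = c$ if $\nu_0 \equiv c$ near $\vartheta$, and for $\nu^\circ = c+\epsilon$ with $\epsilon > 0$ small otherwise; either way $\varrho^{\hat G} \equiv \varrho^{G'_\vartheta}$ on $[c,c_1)$ (two affine maps agreeing on a subinterval). Therefore $\tfrac{\d\hat\ell}{\d\theta}(\theta) = \varrho^{G'_\vartheta}(\nu_0(\theta)) = \varrho^{\hat G}(\nu_0(\theta)) = \varrho^{G'_\theta}(\nu_0(\theta))$ for a.e.\ $\theta \in (\vartheta,\vartheta+\alpha)$, and since $\hat\ell$ is locally absolutely continuous, agrees with $\ell$ on $(-\infty,\vartheta]$, and hence inherits the initial condition on $\Rn$, it solves \eqref{eq:differentialequation} on $(-\infty,\vartheta+\alpha]$.

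The main obstacle is the substructure claim together with the identity $\varrho^{\hat G} \equiv \varrho^{G'_\vartheta}$ near $c$: this is precisely what lets us ignore that the shortest path graph of the new phase has a different active and resetting set, because on the relevant range of inflow values the thin-flow labels do not move. The rest — the case split according to whether $\nu_0$ is locally constant at $\vartheta$, the sign bookkeeping for the three types of arcs of $A'_\vartheta \setminus A^*_\vartheta$, and arranging that all the ``stabilizes for small $\theta - \vartheta$'' statements hold simultaneously so that $\alpha$ can be taken positive — is routine but needs care.
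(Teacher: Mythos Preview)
Your approach is essentially the paper's: extend $\ell$ via $\varrho^{G'_\vartheta}$, identify the limiting active/resetting structure $\hat G$, and verify clause by clause that the thin flow $\chi^{G'_\vartheta}\bigl(\nu_0(\theta)\bigr)$ (restricted) is also a thin flow in $\hat G$, so uniqueness of labels forces $\varrho^{\hat G}\circ\nu_0 = \varrho^{G'_\vartheta}\circ\nu_0$ on a right neighbourhood of $\vartheta$. The paper runs the sign analysis through two preparatory lemmas on compositions and primitives of right-monotone functions, whereas you exploit affineness of each $\varrho^{\bar G}$ on a short window $[c,c_1)$; both lead to the same conclusion, and your parenthetical ``two affine maps agreeing on a subinterval'' is justified once one notes that the sign pattern defining $\hat G$ is constant for $\nu^\circ$ in a subinterval of $(c,c_1)$, so the substructure identity holds there and not merely at a single point.

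One place to tighten: the non-increasing case is not literally symmetric, because a non-increasing $\nu_0\in L^1_{\mathrm{loc}}$ need not be bounded to the right of $\vartheta$ (e.g.\ $\nu_0(\theta)=(\theta-\vartheta)^{-1/2}$), so ``$\nu_0$ is bounded near $\vartheta$'' and the finite window $[c,c_1)$ fail. The repair is to work on $[c_1',\infty)$ with $c_1'$ above all breakpoints; each $\varrho^{\bar G}$ is again affine there and your sign analysis goes through with the slope term now dominating the constant term. This is precisely the asymmetry the paper's composition lemma absorbs by allowing the inner function to be locally unbounded provided the outer piecewise-linear function is eventually monotone.
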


    For the proof of this theorem, we make use of some basic properties of right-monotone functions which we show first.
    \cref{lem:monotone:composition} regards the composition of left-/right-monotone functions.
    \cref{lem:monotone:compatibility} relates right-monotone functions to their primitives.

    \begin{lemma}[Composition of left-/right-monotone functions]
        \label{lem:monotone:composition}
        Let $h \in L^1_\text{loc}(\R)$ be right-monotone and $g \in L^1_\text{loc}(\R)$ be left- and right-monotone such that their composition $g \circ h$ is well-defined.
        If $h$ is locally bounded or there exists $\upsilon \ge 0$ such that $g$ is monotone on the unbounded intervals $(-\infty, -\upsilon)$ and $(\upsilon, +\infty)$, then also $g \circ h$ is right-monotone.
    \end{lemma}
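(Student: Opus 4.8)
The plan is to verify right-monotonicity of $g \circ h$ straight from the definition: fix an arbitrary $\xi \in \R$ and produce some $\varepsilon > 0$ on which $g \circ h$ is monotone. First I would invoke right-monotonicity of $h$ to get $\varepsilon_1 > 0$ such that $h$ is monotone on $[\xi, \xi + \varepsilon_1]$. By symmetry it suffices to treat the case that $h$ is monotonically non-decreasing there; the non-increasing case is analogous, with the essential supremum in place of the essential infimum below, with left-monotonicity of $g$ in place of right-monotonicity, and with the unbounded interval $(\upsilon, +\infty)$ in place of $(-\infty, -\upsilon)$. Set $\eta := \essinf_{\theta \in [\xi, \xi + \varepsilon_1]} h(\theta) \in [-\infty, \infty)$; it is finite whenever $h$ is locally bounded, and in general it is the essential right-limit of $h$ at $\xi$. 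Thus $h \ge \eta$ a.e.\ on $[\xi, \xi + \varepsilon_1]$ and, by the monotonicity of $h$, for every $\delta > 0$ there is $\varepsilon \in (0, \varepsilon_1]$ with $h < \eta + \delta$ a.e.\ on $[\xi, \xi + \varepsilon]$, and if $\eta = -\infty$ then for every $M$ there is such an $\varepsilon$ with $h < -M$ a.e.\ on $[\xi, \xi + \varepsilon]$.

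Next I would distinguish two sub-cases. If $\eta$ is finite, right-monotonicity of $g$ yields $\delta > 0$ such that $g$ is monotone on $J := [\eta, \eta + \delta]$, and choosing $\varepsilon$ as above makes $h$ map almost every point of $[\xi, \xi + \varepsilon]$ into $J$. If $\eta = -\infty$, then $h$ is not locally bounded, so the second hypothesis applies and $g$ is monotone on $J := (-\infty, -\upsilon)$; picking $\varepsilon$ so that $h < -\upsilon$ a.e.\ on $[\xi, \xi + \varepsilon]$ again sends almost every point of $[\xi, \xi + \varepsilon]$ into $J$. In both sub-cases we have an interval $J$ on which $g$ is monotone and a right-neighbourhood $[\xi, \xi + \varepsilon]$ almost all of which $h$ maps into $J$ (monotonically non-decreasingly).

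It then remains to compose. Passing to a genuinely monotone representative of $g$ on $J$ — which is legitimate because $g \circ h$ is well-defined, so its equivalence class does not depend on the chosen representative of $g$ — and discarding the null sets on which $h$ fails to be non-decreasing or takes values outside $J$, one obtains a null set $N$ such that for all $\theta_1 \le \theta_2$ in $[\xi, \xi + \varepsilon] \setminus N$ we have $h(\theta_1) \le h(\theta_2)$ with both values in $J$, whence $g(h(\theta_1)) \le g(h(\theta_2))$ or $g(h(\theta_1)) \ge g(h(\theta_2))$ according to the direction of monotonicity of $g$ on $J$. Either way $g \circ h$ is monotone on $[\xi, \xi + \varepsilon]$, and since $\xi$ was arbitrary, $g \circ h$ is right-monotone.

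I expect the main obstacle to be the measure-theoretic bookkeeping rather than any conceptual difficulty: making precise that $\eta$ genuinely behaves like a one-sided limit of $h$ at $\xi$ (so that $h$ can be squeezed into an arbitrarily short interval above $\eta$, or below $-\upsilon$ when $\eta = -\infty$, by shrinking $\varepsilon$), and — crucially — justifying that the null sets involved, in particular the $h$-preimage of the set where the chosen representative of $g$ deviates, are negligible, which is precisely what the hypothesis that $g \circ h$ is well-defined provides. Everything else is the routine case split over whether $h$ increases or decreases near $\xi$ and whether its one-sided limit there is finite or infinite.
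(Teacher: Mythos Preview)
Your proposal is correct and follows essentially the same line as the paper's proof: fix $\xi$, use right-monotonicity of $h$ to get a one-sided interval of monotonicity, take the essential infimum (respectively supremum) $\eta$ of $h$ there, split on whether $\eta$ is finite or not to locate an interval $J$ on which $g$ is monotone via right- (respectively left-) monotonicity of $g$ or the tail hypothesis, then shrink $\varepsilon$ so that $h$ lands in $J$ almost everywhere. Your treatment is in fact more explicit about the measure-theoretic bookkeeping (representatives of $g$, negligibility of the relevant preimages via the well-definedness assumption) than the paper, which simply asserts the final monotonicity of $g\circ h$ on the shrunken interval.
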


    \begin{proof}
        Let $\xi \in \R$.
        Assume $h$ is monotonically non-decreasing on $(\xi, \xi + \varepsilon)$ for some $\varepsilon > 0$.
        Set $\ubar{\upsilon}$ to be the essential infimum of $h$ on $(\xi, \xi + \varepsilon)$, i.e.,
        \begin{equation*}
            \ubar{\upsilon} := \essinf h\vert_{(\xi, \xi + \varepsilon)} = \sup \set{\upsilon \in \R\colon h \ge \upsilon \text{ a.e.\ on } (\xi, \xi + \varepsilon) } .
        \end{equation*}
        If $\ubar{\upsilon} > -\infty$, there is $\bar{\upsilon} > \ubar{\upsilon}$ such that $g$ is monotone on $(\ubar{\upsilon}, \bar{\upsilon})$ as $g$ is right-monotone.
        Otherwise, $h$ is not locally bounded and the assumption gives $\bar{\upsilon} \in \R$ such that $g$ is montone on $(\ubar{\upsilon}, \bar{\upsilon}) = (-\infty, \bar{\upsilon})$.
        Choose $0 < \hat{\varepsilon} \le \varepsilon$ small enough such that $h \le \bar{\upsilon}$ almost everywhere on $(\xi, \xi + \hat{\varepsilon})$.
        Then $g \circ h$ is monotone on $(\xi, \xi + \hat{\varepsilon})$.

        The case that $h$ is monotonically non-increasing on $(\xi, \xi + \varepsilon)$ works similarly.
        It needs the left-monotonicity instead of right-monotonicity of $g$.
    \end{proof}

    \begin{lemma}
        \label{lem:monotone:compatibility}
        Let $g \in L^1_\text{loc}(\R)$ be a right-monotone function such that $G\colon \R \to \R, \xi \mapsto \int_0^\xi g(\hat{\xi}) \d\hat{\xi}$ satisfies $\inf \set{\xi > 0\colon G(\xi) > 0} = 0$.
        Then, there is $\varepsilon > 0$ such that $g > 0$ almost everywhere on $(0, \varepsilon)$ and $G > 0$ on $(0, \varepsilon)$.
    \end{lemma}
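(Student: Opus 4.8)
The plan is to use right-monotonicity to reduce to two monotone cases on a one-sided neighbourhood of $0$, and then play off the sign pattern of $g$ against the continuity of its primitive.

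First I would fix, via right-monotonicity, some $\varepsilon_0 > 0$ and a null set $N \subseteq \R$ such that $g$ is monotone on $[0,\varepsilon_0]$ in the sense of the definition, i.e.\ $g(\xi) \le g(\hat\xi)$ (respectively $\ge$) for all $\xi, \hat\xi \in (0,\varepsilon_0)\setminus N$ with $\xi \le \hat\xi$. I would also record that $G$ is locally absolutely continuous, hence continuous, and that the hypothesis $\inf\set{\xi > 0\colon G(\xi) > 0} = 0$ just says there are arbitrarily small $\xi > 0$ with $G(\xi) > 0$.

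In the non-decreasing case, the superlevel set $\set{\xi \in (0,\varepsilon_0)\setminus N\colon g(\xi) > 0}$ is upward closed, so putting $\xi^* := \essinf\set{\xi \in (0,\varepsilon_0)\colon g(\xi) > 0}$ (read as $\varepsilon_0$ if that set is null) gives $g > 0$ a.e.\ on $(\xi^*,\varepsilon_0)$ and $g \le 0$ a.e.\ on $(0,\xi^*)$. If $\xi^* > 0$ then $G$ is non-increasing on $[0,\xi^*]$, so $G \le G(0) = 0$ there, contradicting the hypothesis; hence $\xi^* = 0$, so $g > 0$ a.e.\ on $(0,\varepsilon_0)$ and therefore $G(\xi) = \int_0^\xi g > 0$ for all $\xi \in (0,\varepsilon_0)$, and I can take $\varepsilon := \varepsilon_0$. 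The non-increasing case is symmetric: now $\set{g > 0}$ is downward closed, $\xi^+ := \esssup\set{\xi \in (0,\varepsilon_0)\colon g(\xi) > 0}$ (read as $0$ if empty) satisfies $g > 0$ a.e.\ on $(0,\xi^+)$ and $g \le 0$ a.e.\ on $(\xi^+,\varepsilon_0)$; if $\xi^+ = 0$ then $g \le 0$ a.e.\ on $(0,\varepsilon_0)$ and $G \le 0$ on $[0,\varepsilon_0]$, again impossible, so $\xi^+ > 0$, $G > 0$ on $(0,\xi^+)$, and $\varepsilon := \xi^+$ works.

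The only step needing care is the null-set bookkeeping: I have to verify that pointwise comparability off $N$ genuinely forces $\set{g>0}$ to be, up to a null set, a one-sided interval with the claimed endpoint, and that $g \le 0$ almost everywhere on the complementary part so that $G$ really is non-increasing there. Beyond that there is no obstacle; the content of the lemma is just that a right-monotone function which is not a.e.\ nonpositive on every interval $(0,\varepsilon)$ must in fact be a.e.\ positive on some such interval, and this strict positivity then transfers to the primitive.
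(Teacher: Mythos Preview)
Your proof is correct and follows essentially the same approach as the paper: fix a one-sided interval of monotonicity, use monotonicity to see that the set $\{g > 0\}$ is (up to a null set) a one-sided subinterval, and then invoke the hypothesis on $G$ to rule out the case where $g \le 0$ a.e.\ near $0$. The paper's write-up is slightly more economical in that it defines a single threshold $\delta := \sup\{\hat{\delta} \in [0,\varepsilon)\colon g \le 0 \text{ a.e.\ on } [0,\hat{\delta}]\}$ \emph{before} splitting into the two monotone cases, but the content is identical to yours.
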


    \begin{proof}
        Let $\varepsilon > 0$ such that $g$ is monotone on $(0, \varepsilon)$, and set $\delta := \sup\set[\big]{\hat{\delta} \in [0, \varepsilon)\colon g \le 0 \text{ a.e.\ on } [0, \hat{\delta}]}$.
        $\delta > 0$ would imply $G \le 0$ on $[0, \delta]$ which contradicts $\inf \set{\xi > 0\colon G(\xi) > 0} = 0$.
        Hence, $\delta = 0$.
        If $g$ is monotonically non-decreasing on $(0, \varepsilon)$, then $g > 0$ almost everywhere on $(0, \varepsilon)$ and $G > 0$ on $(0, \varepsilon)$ follow.
        If $g$ is monotonically non-increasing on $(0, \varepsilon)$, then $\delta = 0$ implies $\esssup g\vert_{(0, \varepsilon)} > 0$.
        Thus, there is $0 < \hat{\varepsilon} \le \varepsilon$ such that $g > 0$ almost everywhere on $(0, \hat{\varepsilon})$ and $G > 0$ on $(0, \hat{\varepsilon})$.
    \end{proof}

    \begin{proof}[Proof of \cref{dynamicequilibrium:extension}.]
        We extend $\ell$ by assuming that $G^\prime_{\vartheta}$ is constant to the right, i.e., $G^\prime_{\vartheta + \varepsilon} = G^\prime_\vartheta$ for small $\varepsilon > 0$.
        This assumption is not necessarily true.
        We will see, however, that the extension we get in this way fulfills \eqref{eq:differentialequation}.
        For that purpose, define for $\varepsilon > 0$
        \begin{equation*}
            \ell(\vartheta + \varepsilon) := \ell(\vartheta) + \int_{\vartheta}^{\vartheta + \varepsilon} \varrho^{G^\prime_\vartheta} \big( \nu_0(\theta) \big) \d\theta ,
        \end{equation*}
        where integration is applied element-wise.
        First of all, we show that this definition is sound.
        For component $v \in V$, the integrand $\varrho^{\smash{G^\prime_\vartheta}}_v \circ \nu_0$ is indeed locally integrable since for every compact set $K \subseteq \R$, \cref{lem:normalizedthinflow:parametric} yields
        \begin{align*}
            \int_K \card{\varrho^{\smash{G^\prime_\vartheta}}_v \big( \nu_0(\theta) \big)} \d\theta
            &= \int_{K \cap \inv{\nu_0}([0, 1])} \varrho^{\smash{G^\prime_\vartheta}}_v \big(\nu_0(\theta)\big) \d\theta
            + \int_{K \setminus \inv{\nu_0}([0, 1])} \frac{\varrho^{\smash{G^\prime_\vartheta}}_v\big(\nu_0(\theta)\big)}{\nu_0(\theta)} \cdot \nu_0(\theta) \d\theta
        \\
            &\le \varrho^{\smash{G^\prime_\vartheta}}_v(1) \cdot \int_{K} \max \set[\big]{ 1, \nu_0(\theta) } \d\theta
            < +\infty .
        \end{align*}
        This extension defines $G^\prime_{\vartheta + \varepsilon} = (V, A^\prime_{\vartheta + \varepsilon}, A^*_{\vartheta + \varepsilon})$ for $\varepsilon > 0$.
        To show that $\ell$ fulfills \eqref{eq:differentialequation} on a strictly larger interval than $(-\infty, \vartheta]$, it is sufficient to prove $\varrho^{G^\prime_{\vartheta + \varepsilon}} \circ \nu_0 \equiv \varrho^{G^\prime_{\vartheta}} \circ \nu_0$ for all small enough $\varepsilon > 0$.
        For that purpose, define the following limits of the sets of active and resetting arcs,
        \begin{align*}
            A^\prime &:= \liminf_{\varepsilon \to 0+} A^\prime_{\vartheta + \varepsilon} = \bigcup_{\delta > 0} \bigcap_{\varepsilon \in (0, \delta)} A^\prime_{\vartheta + \varepsilon} = \set[\big]{a = (v, w) \in A\colon \exists\, \delta > 0\colon \ell_w \ge \ell_v + \tau_a \text{ on } (\vartheta, \vartheta + \delta) } \text{ and }
        \\
            A^* &:= \liminf_{\varepsilon \to 0+} A^*_{\vartheta + \varepsilon} = \bigcup_{\delta > 0} \bigcap_{\varepsilon \in (0, \delta)} A^*_{\vartheta + \varepsilon} = \set[\big]{a = (v, w) \in A\colon \exists\, \delta > 0\colon \ell_w > \ell_v + \tau_a \text{ on } (\vartheta, \vartheta + \delta)} .
        \end{align*}
        We show that $G^\prime_{\vartheta + \varepsilon} = (V, A^\prime, A^*)$ for small $\varepsilon > 0$.
        Due to the continuity of $\ell$, we know that the inclusions $A^\prime \subseteq A^\prime_{\vartheta + \varepsilon} \subseteq A^\prime_\vartheta$ and $A^*_\vartheta \subseteq A^* = A^*_{\vartheta + \varepsilon}$ hold for $\varepsilon > 0$ small enough.

        Since $\nu_0$ is right-monotone and $\varrho^{G^\prime_\vartheta}$ is piecewise linear (with finitely many breakpoints), \cref{lem:monotone:composition} implies that $\varrho^{\smash{G^\prime_\vartheta}}_w \circ \nu_0 - \varrho^{\smash{G^\prime_\vartheta}}_v \circ \nu_0$ is right-monotone for all pairs $v, w \in V$.
        \\
        For arcs $a = (v, w) \in A^\prime_\vartheta \setminus A^\prime$, there is a sequence $(\varepsilon_k)_{k \in \N} \subseteq \Rpp$ such that $\lim_{k \to +\infty} \varepsilon_k = 0$ and $a \in A^\prime_\vartheta \setminus A^\prime_{\vartheta + \varepsilon_k}$ for all $k \in \N$.
        As $\ell$ is continuous and $a \in A^\prime_\vartheta$, a limit argument yields $\ell_w(\vartheta) = \ell_v(\vartheta) + \tau_a$.
        Further, for all $k \in \N$, we get
        \begin{equation*}
            \int_\vartheta^{\vartheta + \varepsilon_k} \varrho^{\smash{G^\prime_\vartheta}}_w \big(\nu_0(\theta)\big) - \varrho^{\smash{G^\prime_\vartheta}}_v \big(\nu_0(\theta)\big) \d\theta
            =
            \big(\ell_w(\vartheta + \varepsilon_k) - \ell_v(\vartheta + \varepsilon_k) - \tau_a\big) - \big(\ell_w(\vartheta) - \ell_v(\vartheta) - \tau_a\big)
            < 0 .
        \end{equation*}
        \Cref{lem:monotone:compatibility} implies $\varrho^{\smash{G^\prime_\vartheta}}_w \circ \nu_0 < \varrho^{\smash{G^\prime_\vartheta}}_v \circ \nu_0$ and, hence, $\chi^{\smash{G^\prime_\vartheta}}_a \circ \nu_0 = 0$ almost everywhere on $(\vartheta, \vartheta + \beta_a)$ for some $\beta_a > 0$.
        Further, it implies $a \not\in A^\prime_{\vartheta + \varepsilon}$ for small $\varepsilon > 0$.
        $A^\prime = A^\prime_{\vartheta + \varepsilon}$ follows for small $\varepsilon > 0$.
        \\
        For $a = (v, w) \in A^* \setminus A^*_\vartheta$, there exists a sequence $(\varepsilon_k)_{k \in \N} \subseteq \Rpp$ such that $\lim_{k \to +\infty} \varepsilon_k = 0$ and $a \in A^*_{\vartheta + \varepsilon_k} \setminus A^*_\vartheta$ for all $k \in \N$.
        Continuity of $\ell$ yields $\ell_w(\vartheta) = \ell_v(\vartheta) + \tau_a$.
        Hence, for $k \in \N$
        \begin{equation*}
            \int_\vartheta^{\vartheta + \varepsilon_k} \varrho^{\smash{G^\prime_\vartheta}}_w \big(\nu_0(\theta)\big) - \varrho^{\smash{G^\prime_\vartheta}}_v \big(\nu_0(\theta)\big) \d\theta
            =
            \big(\ell_w(\vartheta + \varepsilon_k) - \ell_v(\vartheta + \varepsilon_k) - \tau_a\big) - \big(\ell_w(\vartheta) - \ell_v(\vartheta) - \tau_a\big)
            > 0 .
        \end{equation*}
        \Cref{lem:monotone:compatibility} implies $\varrho^{\smash{G^\prime_\vartheta}}_w \circ \nu_0 > \varrho^{\smash{G^\prime_\vartheta}}_v \circ \nu_0$ and, hence, $\chi^{\smash{G^\prime_\vartheta}}_a \circ \nu_0 = \nu_a \cdot \varrho^{\smash{G^\prime_\vartheta}}_w \circ \nu_0$ almost everywhere on $(\vartheta, \vartheta + \beta_a)$ for some $\beta_a > 0$.
        \\
        In total, there is an $\alpha > 0$ such that $G^\prime_{\vartheta + \varepsilon} = (V, A^\prime, A^*)$ for all $0 < \varepsilon < \alpha$.
        More importantly, $\varrho^{G^\prime_\vartheta} \big(\nu_0(\vartheta + \varepsilon)\big)$ are the corresponding labels of a normalized thin flow with resetting of value $\nu_0(\vartheta + \varepsilon)$ in $G^\prime_{\vartheta + \varepsilon}$.
        \Cref{corollary:normalizedthinflow:uniqueness} yields $\varrho^{G^\prime_\vartheta} \circ \nu_0 \equiv \varrho^{G^\prime_{\vartheta + \varepsilon}} \circ \nu_0$ on $(\vartheta, \vartheta + \alpha)$.
    \end{proof}

    Just like in the case of constant inflow rate, the $\alpha$-extension can be applied iteratively to construct a dynamic equilibrium.
    Each such extension that is maximal with respect to $\alpha$ is called a \emph{phase} in the evolution of the dynamic equilibrium.
    Note that \cref{dynamicequilibrium:extension} does not state anything about the length of a phase.
    It is still an open question whether for constant inflow rate the extensions can converge to a finite domain, see \cite{CCO2017}.
    If that is the case, the dynamic equilibrium cannot be computed this way in a finite number of steps.
    In theory, we can take the limit of such a converging sequence of $\alpha$-extensions and repeat.
    As done by \citet{CCO2017,GH2019}, Zorn's lemma can be applied to get a dynamic equilibrium on $\R$.

    \begin{theorem}[Existence of dynamic equilibria]
        For every non-negative, right-monotone, locally integrable inflow rate, there exists a dynamic equilibrium.
    \end{theorem}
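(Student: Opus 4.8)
The plan is to iterate the $\alpha$-extension of \cref{dynamicequilibrium:extension} transfinitely and then invoke \cref{thm:differentialequation}. Concretely, I would work with the partially ordered set $P$ whose elements are the locally absolutely continuous solutions of \eqref{eq:differentialequation}, each defined either on a half-line $(-\infty, \vartheta]$ with $\vartheta \in \Rp$ or on all of $\R$, ordered by extension: one solution lies below another if the latter restricts to the former on the smaller domain. The initial condition $\vartheta \mapsto (\vartheta + \ell^0_v)_{v \in V}$ on $(-\infty, 0]$ shows $P \ne \emptyset$. The goal is to produce a maximal element of $P$ and argue that its domain must be all of $\R$; this mirrors the use of Zorn's lemma in \cite{CCO2017,GH2019}.

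First I would verify that every chain in $P$ has an upper bound. The solutions in a chain agree on overlapping domains by the definition of the order, so their union $\ell$ is a well-defined solution of \eqref{eq:differentialequation} on the union $I$ of the domains. If $I = \R$, or $I = (-\infty, \vartheta^*]$ with $\vartheta^*$ the attained supremum of the domain endpoints, then $\ell$ is already an element of $P$ and bounds the chain. The delicate case is $I = (-\infty, \vartheta^*)$ with $\vartheta^* \in \Rpp$ finite and not attained, which is exactly the scenario of a sequence of phases converging to a finite time. Here I would extend $\ell$ continuously to $\vartheta^*$: since $A$ is finite, there are only finitely many triples $(V, A^\prime, A^*)$ with $A^* \subseteq A^\prime \subseteq A$, so the constant $c_v := \max \varrho^{G^\prime}_v(1)$ taken over all such triples is finite; the computation in the proof of \cref{dynamicequilibrium:extension} (using \cref{lem:normalizedthinflow:parametric}) gives the pointwise bound $\abs[\big]{\tfrac{\text{d}\ell_v}{\text{d}\vartheta}(\vartheta)} \le c_v \max\set{1, \nu_0(\vartheta)}$ for a.e.\ $\vartheta \in [0, \vartheta^*)$, and the right-hand side is integrable on $[0, \vartheta^*]$ because $\nu_0 \in L^1_{\text{loc}}(\R)$. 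Hence each $\ell_v$ is absolutely continuous on $[0, \vartheta^*)$ with an integrable derivative, so it possesses a finite limit at $\vartheta^*$; setting $\ell(\vartheta^*)$ to this limit keeps $\ell$ locally absolutely continuous, and the almost-everywhere differential equation is unaffected by prescribing the value at the single point $\vartheta^*$. Thus $\ell$, extended to $(-\infty, \vartheta^*]$, lies in $P$ and bounds the chain.

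With Zorn's lemma I would obtain a maximal element $\ell^{\max} \in P$. If its domain were a half-line $(-\infty, \vartheta_{\max}]$ with $\vartheta_{\max} \in \Rp$ finite, then \cref{dynamicequilibrium:extension} (which is where right-monotonicity of $\nu_0$ enters) would furnish an $\alpha > 0$ and an extension of $\ell^{\max}$ satisfying \eqref{eq:differentialequation} on $(-\infty, \vartheta_{\max} + \alpha]$, contradicting maximality. Therefore $\ell^{\max}$ is a locally absolutely continuous solution of \eqref{eq:differentialequation} on all of $\R$, and \cref{thm:differentialequation} identifies it as the earliest times function of a dynamic equilibrium, which proves the claim.

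I expect the main obstacle to be precisely the limit step for chains with a finite, unattained supremum of domain endpoints: one must rule out that the labels blow up before $\vartheta^*$, which is what the uniform bound on $\varrho^{G^\prime}_v(1)$ over the finitely many shortest path graphs with resetting, together with local integrability of $\nu_0$, is designed to achieve. Everything else is bookkeeping around Zorn's lemma and direct appeals to \cref{dynamicequilibrium:extension,thm:differentialequation}.
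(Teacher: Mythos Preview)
Your proposal is correct and follows essentially the same route as the paper: define the poset of partial solutions to \eqref{eq:differentialequation} ordered by extension, show every chain has an upper bound by extending the union to the (possibly unattained) supremum of the domain endpoints via the uniform bound $\varrho^{G^\prime}_v\!\big(\nu_0(\theta)\big) \le \big(\max_{G^\prime} \varrho^{G^\prime}_v(1)\big)\cdot\max\{1,\nu_0(\theta)\}$ together with local integrability of $\nu_0$, apply Zorn's lemma, and then use \cref{dynamicequilibrium:extension} to force the maximal element's domain to be $\R$ before invoking \cref{thm:differentialequation}. The only cosmetic difference is that you phrase the key estimate as a pointwise bound on the derivative (yielding a finite limit at $\vartheta^*$ via dominated convergence/absolute continuity) whereas the paper bounds the integral $\hat{\ell}_v(\vartheta)$ directly; the content is identical.
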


    \begin{proof}
        Let $\mathcal{L}$ be the set of functions $\ell\colon (-\infty, \vartheta] \to \Rp^V$ which fulfill \eqref{eq:differentialequation} on $(-\infty, \vartheta)$ for some $\vartheta \in \Rp \cup \set{+\infty}$.
        Define the partial order $\preceq$ on $\mathcal{L}$ by setting $\ell \preceq \hat{\ell}$ if $\dom(\ell) \subseteq \dom(\hat{\ell})$ and $\ell = \hat{\ell} \vert_{\dom(\ell)}$.

        Let $(\ell^{(k)})_{k \in K}$ be a chain in $(\mathcal{L}, \preceq)$ indexed by some set $K$ with domains $\dom(\ell^{(k)}) = (-\infty, \vartheta_k]$.
        Set $\vartheta := \sup_{k \in K} \vartheta_k$ and define the function $\ell\colon (-\infty, \vartheta) \to \Rp^V, \theta \mapsto \sup_{k \in K\colon \theta \le \vartheta_k} \ell^{(k)}(\theta)$.
        Note that $\ell\vert_{(-\infty, \vartheta_k)} \equiv \ell^{(k)}$ for all $k \in K$.
        We would like to continuously extend $\ell$ to $\hat{\ell}$ onto $(-\infty, \vartheta]$.
        Therefore, set $\hat{\ell}(\vartheta) := \lim_{k \to +\infty} \ell(\vartheta_k)$.
        This limit exists as $\ell$ is monotonically non-decreasing.
        It is not clear, however, that it is finite.
        Since $\ell$ fulfills \eqref{eq:differentialequation} on $(-\infty, \vartheta)$, applying \cref{lem:normalizedthinflow:parametric} yields
        \begin{align*}
            \hat{\ell}_v(\vartheta)
            &= \lim_{k \to +\infty} \int_0^{\vartheta_k} \varrho^{\smash{G^\prime_\theta}}_v \big( \nu_0(\theta) \big) \d\theta
        \\
            &= \lim_{k \to +\infty} \int_{[0, \vartheta_k) \cap \inv{\nu_0}([0, 1])} \varrho^{\smash{G^\prime_\theta}}_v \big(\nu_0(\theta)\big) \d\theta
            + \int_{[0, \vartheta_k) \setminus \inv{\nu_0}([0, 1])} \frac{\varrho^{\smash{G^\prime_\theta}}_v\big(\nu_0(\theta)\big)}{\nu_0(\theta)} \cdot \nu_0(\theta) \d\theta
        \\
            &\le \max_{\substack{G^\prime = (V, A^\prime,  A^*)\\\text{shortest path network}\\\text{with resetting}}} \varrho^{G^\prime}_v(1) \cdot \int_{0}^{\vartheta} \max\set[\big]{1, \nu_0(\theta)} \d\theta
            < +\infty .
        \end{align*}
        Hence, $\hat{\ell} \in \mathcal{L}$ is well-defined and $\ell^{(k)} \preceq \hat{\ell}$ for all $k \in K$.
        Zorn's lemma yields a maximal element $\ell \in \mathcal{L}$.
        \cref{dynamicequilibrium:extension} shows that the domain of $\ell$ has to be $\R$.
        By \cref{thm:differentialequation}, $\ell$ is the earliest times function of a dynamic equilibrium.
    \end{proof}

    \begin{theorem}[Uniqueness of right-monotone dynamic equilibria]
        \label{theorem:dynamicequilibrium:uniqueness}
        Let the inflow rate $\nu_0$ be right-monotone.
        If there are two right-monotone dynamic equilibria, then their earliest times functions agree.
    \end{theorem}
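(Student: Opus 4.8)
By \cref{thm:differentialequation}, the earliest times function of a dynamic equilibrium is exactly a locally absolutely continuous solution of \eqref{eq:differentialequation}. So let $\ell$ and $\hat{\ell}$ be the earliest times functions of two right-monotone dynamic equilibria for the common right-monotone inflow rate $\nu_0$; it suffices to show $\ell\equiv\hat{\ell}$. Both start from the same data, $\ell(\vartheta)=\hat{\ell}(\vartheta)=(\vartheta+\ell^0_v)_{v\in V}$ for $\vartheta\le 0$. I would set $\vartheta^* := \sup\set{\vartheta\ge 0\colon \ell\equiv\hat{\ell}\text{ on }(-\infty,\vartheta]}\in\Rp\cup\set{+\infty}$. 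Continuity of $\ell$ and $\hat{\ell}$ gives $\vartheta^*\ge 0$ and, if $\vartheta^*<+\infty$, also $\ell(\vartheta^*)=\hat{\ell}(\vartheta^*)$; in particular the induced arc sets $A^\prime_{\vartheta^*}$ and $A^*_{\vartheta^*}$ agree for $\ell$ and $\hat{\ell}$, so both equilibria share the same shortest path graph $G^\prime_{\vartheta^*}$ at time $\vartheta^*$. The proof then reduces to showing that $\vartheta^*$ cannot be finite.

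Assume $\vartheta^*<+\infty$. The crucial ingredient is a \emph{local uniqueness} statement: any two right-monotone solutions of \eqref{eq:differentialequation} agreeing on $(-\infty,\vartheta^*]$ agree on $(-\infty,\vartheta^*+\alpha]$ for some $\alpha>0$, contradicting the maximality of $\vartheta^*$. To prove it I would compare each of $\ell,\hat{\ell}$ with the explicit extension constructed in the proof of \cref{dynamicequilibrium:extension}, namely the one obtained by \enquote{freezing} $G^\prime_{\vartheta^*}$, i.e.\ $\theta\mapsto\ell(\vartheta^*)+\int_{\vartheta^*}^{\theta}\varrho^{G^\prime_{\vartheta^*}}(\nu_0)$. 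Rerunning the arc-by-arc analysis of that proof, one shows that for small $\varepsilon>0$ the shortest path graph $G^\prime_\theta$ along a right-monotone solution is constant on $(\vartheta^*,\vartheta^*+\varepsilon)$ and equal to the canonical \enquote{next} graph $(V,A^\prime,A^*)$ with $A^\prime=\liminf_{\varepsilon\to 0+}A^\prime_{\vartheta^*+\varepsilon}$ and $A^*=\liminf_{\varepsilon\to 0+}A^*_{\vartheta^*+\varepsilon}$: for each \enquote{ambiguous} arc $a=(v,w)\in A^\prime_{\vartheta^*}\setminus A^*_{\vartheta^*}$ the gap $\theta\mapsto\ell_w(\theta)-\ell_v(\theta)-\tau_a$ is the primitive of $\big(\varrho^{G^\prime_{\vartheta^*}}_w-\varrho^{G^\prime_{\vartheta^*}}_v\big)\circ\nu_0$, which is right-monotone by \cref{lem:monotone:composition} (a piecewise linear function composed with a right-monotone one), so \cref{lem:monotone:compatibility} forces this gap to have a definite sign immediately to the right of $\vartheta^*$, and that sign places $a$ in exactly one of $A^*$, $A^\prime\setminus A^*$, or outside $A^\prime$ — precisely the case distinction of the proof of \cref{dynamicequilibrium:extension}. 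Since the resulting graph is the same for $\ell$ and $\hat{\ell}$, \cref{corollary:normalizedthinflow:uniqueness} yields $\tfrac{\text{d}\ell}{\text{d}\vartheta}=\varrho^{(V,A^\prime,A^*)}(\nu_0)=\tfrac{\text{d}\hat{\ell}}{\text{d}\vartheta}$ almost everywhere on $(\vartheta^*,\vartheta^*+\alpha)$, and with $\ell(\vartheta^*)=\hat{\ell}(\vartheta^*)$ this gives $\ell\equiv\hat{\ell}$ there. Hence $\vartheta^*=+\infty$ and $\ell\equiv\hat{\ell}$ on $\R$.

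The step I expect to be the main obstacle is exactly this local uniqueness: the proof of \cref{dynamicequilibrium:extension} analyzes only the one explicitly constructed extension, whereas an arbitrary right-monotone solution of \eqref{eq:differentialequation} could a priori have a shortest path graph varying over $(\vartheta^*,\vartheta^*+\varepsilon)$, and the sign test $\big(\varrho^{G^\prime}_w-\varrho^{G^\prime}_v\big)\circ\nu_0\gtrless 0$ that decides an ambiguous arc is evaluated over the very candidate graph $G^\prime$ one is trying to pin down. I would handle this by a maximal-subinterval (bootstrapping) argument together with processing the ambiguous arcs in a topological-type order, as in the bullet points of the proof of \cref{dynamicequilibrium:extension}, and by invoking the monotonicity of the thin-flow labels from \cref{theorem:normalizedthinflow:monotonicity} to compare two competing candidate graphs and rule out that one strictly refines the other; this is where the right-monotonicity of the equilibria is genuinely used.
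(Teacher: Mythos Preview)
Your global set-up (define $\vartheta^*$ as the supremum of the agreement set, use continuity to get $\ell(\vartheta^*)=\hat\ell(\vartheta^*)$, then prove local uniqueness to the right) is correct and matches the paper. The gap is exactly where you say it is, and your proposed resolution does not close it.

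The circularity is real: for an \emph{arbitrary} right-monotone solution $\ell$ of \eqref{eq:differentialequation}, the derivative of the gap $\ell_w-\ell_v-\tau_a$ is $\varrho^{G'_\theta}_w(\nu_0(\theta))-\varrho^{G'_\theta}_v(\nu_0(\theta))$ with the \emph{varying} graph $G'_\theta$, not $(\varrho^{G'_{\vartheta^*}}_w-\varrho^{G'_{\vartheta^*}}_v)\circ\nu_0$; so \cref{lem:monotone:composition,lem:monotone:compatibility} do not apply as you invoke them. Your fix via \cref{theorem:normalizedthinflow:monotonicity} is not workable as stated: that result compares labels for different flow values in a \emph{fixed} shortest path graph with resetting, and says nothing about how labels compare across two different graphs $(V,A',A^*)$, which is precisely what ``comparing two competing candidate graphs'' would require. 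The bootstrapping and topological-order sketch is too vague to carry the argument.

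The paper breaks the circularity by using the hypothesis ``right-monotone dynamic equilibrium'' at the level of the \emph{flow and queue functions}, not at the level of the label ODE. Because each $f^+_a$ is right-monotone, \eqref{eq:queuedynamics} makes each $z_a$ right-monotone; composing with the monotone, continuous $\ell_v$ gives that $f^+_a\circ\ell_v$ and $z_a\circ\ell_v$ are right-monotone. Hence, to the right of $\vartheta$, the support set $A':=\{a:f^+_a\circ\ell_v>0\text{ near the right}\}$ and the resetting set $A^*:=\liminf A^*_{\vartheta+\varepsilon}$ are constant on some $(\vartheta,\vartheta+\delta)$, with no reference whatsoever to the yet-unknown graph $G'_\theta$. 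On that interval the derivative $(f^+_a\circ\ell_v)\cdot\ell_v'$ is a normalized thin flow of value $\nu_0$ in the \emph{fixed} graph $(V,A',A^*)$ with labels $\tfrac{\text{d}\ell}{\text{d}\vartheta}$, so \cref{corollary:normalizedthinflow:uniqueness} gives $\tfrac{\text{d}\ell}{\text{d}\vartheta}=\varrho^{(V,A',A^*)}\circ\nu_0$ and hence an explicit integral formula for $\ell$ on $(\vartheta,\vartheta+\delta)$. This is the step your proposal is missing: exploit right-monotonicity of $(f^+,f^-)$ directly on $f^+_a$ and $z_a$ to stabilise the support and resetting sets, instead of trying to read them off from the labels.
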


    \begin{proof}
        Let $\Theta \subseteq \R$ be the set on which the earliest times functions of every right-monotone equilibrium with inflow rate $\nu_0$ agree.
        By definition $\Rn \subseteq \Theta$.
        Since earliest times functions are continuous, $\Theta$ is a closed set.
        We need to show that $\Theta = \R$.

        Let $(f^+, f^-)$ be a dynamic equilibrium for inflow rate $\nu_0$ such that $f^+_a$ is right-monotone for all $a = (v, w) \in A$.
        Let $\ell$ be the corresponding earliest times function.
        For almost every $\vartheta \in \R$, $z_a$ is increasing at $\vartheta$ if and only if $f^+_a(\vartheta) > \nu_a$.
        Therefore, $z_a$ is right-monotone as well.
        (Looking into the proof of) \cref{lem:monotone:composition} yields right-monotonicity of $f^+_a \circ \ell_v$ and $z_a \circ \ell_v$ since $\ell_v$ is continuous, hence, locally bounded, and non-decreasing.

        Let $\vartheta \in \Theta$.
        We will see that the support of $\left( f^+_a\big(\ell_v(\vartheta + \varepsilon)\big) \right)_{a = (v, w) \in A}$ and $A^*_{\vartheta + \varepsilon}$ are respectively the same for almost every small enough $\varepsilon > 0$.
        Define the set
        \begin{equation*}
            A^\prime := \set[\big]{a = (v, w) \in A\colon \exists\, \delta > 0\colon f^+_a \circ \ell_v > 0 \text{ a.e. on } (\vartheta, \vartheta + \delta)} .
        \end{equation*}
        For $a \in A \setminus A^\prime$, the right-monotonicity of $f^+_a \circ \ell_v$ implies $f^+_a\big( \ell_v(\vartheta + \varepsilon)\big) = 0$ for almost every small enough $\varepsilon > 0$.
        Thus, the support of $\left( f^+_a\big(\ell_v(\vartheta + \varepsilon)\big) \right)_{a = (v, w) \in A}$ is exactly $A^\prime$.
        Similarly, define the set
        \begin{equation*}
            A^* := \liminf_{\varepsilon \to 0+} A^*_{\vartheta + \varepsilon} = \set[\big]{a = (v, w) \in A\colon \exists\, \delta > 0\colon z_a \circ \ell_v > 0 \text{ on } (\vartheta, \vartheta + \delta)} .
        \end{equation*}
        As $z_a \circ \ell_v$ is right-monotone, $A^* = A^*_{\vartheta + \varepsilon}$ for small enough $\varepsilon > 0$.
        Thus for almost every small enough $\varepsilon > 0$, $\big(f^+_a\big(\ell_v(\vartheta + \varepsilon)\big) \cdot \tfrac{\d \ell_v}{\d \vartheta}(\vartheta + \varepsilon)\big)_{a = (v, w) \in A}$ is a normalized thin flow of value $\nu_0(\vartheta + \varepsilon)$ in $G^\prime := (V, A^\prime, A^*)$ with corresponding labels $\tfrac{\d \ell}{\d \vartheta^+}(\vartheta + \varepsilon)$ .
        Hence, \cref{thm:differentialequation} implies
        \begin{equation*}
            \ell(\vartheta + \varepsilon) = \ell(\vartheta) + \int_{\vartheta}^{\vartheta + \varepsilon} \varrho^{\smash{G^\prime}}\big(\nu_0(\theta)\big) \d \theta .
        \end{equation*}
        Since $(f^+, f^-)$ was chosen arbitrarily, $[\vartheta, \vartheta + \varepsilon] \subseteq \Theta$.
        As a consequence, $\Theta = \R$.
    \end{proof}

    The following example shows that the $\alpha$-extension may fail if $\nu_0$ is not right-monotone.

    \begin{example}[$\alpha$-extension for non-right-monotone inflow rate]
        We consider the graph $G = (V, A)$ which consists only of the source $s$, the sink $t$, and two parallel arcs $a, b$ from $s$ to $t$ with capacity $\nu_a = \nu_{b} = 1$ and transit time $\tau_a = 0$ and $\tau_{b} = 1$, see \cref{fig:example:graph}.
        Note that the network can be transformed to an equivalent instance without multi-arcs by introducing additional vertices on arcs.
        The inflow rate is depicted in \cref{fig:example:inflow} and given by the function
        \begin{equation*}
            \nu_0\colon \R \to \Rp, \quad \vartheta \mapsto \begin{cases}
                0 & \text{for } \vartheta \in (-\infty, 0) \\
                2 & \text{for } \vartheta \in [0, 1] \\
                0 & \text{for } \vartheta \in [1 + 2^{-k-1}, 1 + 2^{-k}), k \in 2 \N + 1 \\
                2 & \text{for } \vartheta \in [1 + 2^{-k-1}, 1 + 2^{-k}), k \in 2 \N \\
                2 & \text{for } \vartheta \in [2, +\infty)
            \end{cases} .
        \end{equation*}
        Note that $\nu_0 \in L^1_\text{loc}(\Rp)$ is not right-monotone at $\vartheta = 1$.
        A dynamic equilibrium is given by the earliest time functions $\ell_s\colon \R \to \R, \vartheta \mapsto \vartheta$ and
        \begin{equation*}
            \ell_t\colon \R \to \R, \quad \vartheta \mapsto \begin{cases}
                \vartheta &\text{for } \vartheta \in (-\infty, 0) \\
                2 \vartheta &\text{for } \vartheta \in [0, 1] \\
                2 + 2^{-k-1} & \text{for } \vartheta \in [1 + 2^{-k-1}, 1 + 2^{-k}), k \in 2 \N - 1 \\
                2 \vartheta - 3 \cdot 2^{-k-2} & \text{for } \vartheta \in [1 + 2^{-k-1}, 1 + 3 \cdot 2^{-k-2}), k \in 2 \N \\
                \vartheta + 1 & \text{for } \vartheta \in [1 + 3 \cdot 2^{-k-2}, 1 + 2^{-k}), k \in 2 \N \\
                \vartheta + 1 &\text{for } \vartheta \in [2, +\infty)
            \end{cases} .
        \end{equation*}
        The graph of $\ell_t$ is shown in \cref{fig:example:label}.
        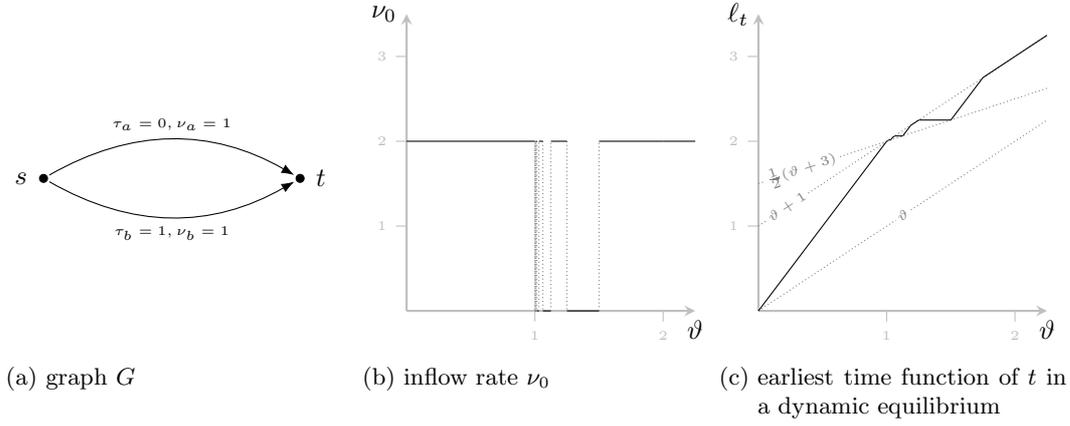
\begin{figure}[t]
            \begin{subfigure}[b]{0.3\textwidth}
                \begin{tikzpicture}[x=48pt,y=32pt]
                    \fill [white] (0, -2) rectangle (2, 2);
                    \node[vertex] (s) at (0,  0) [label={left:$s$}] {};
                    \node[vertex] (t) at (2,  0) [label={right:$t$}] {};
                    \draw[arc] (s) to[out= 30,in= 150] node[above,sloped] {\tiny $\tau_a = 0, \nu_a = 1$} (t);
                    \draw[arc] (s) to[out=-30,in=-150] node[below,sloped] {\tiny $\tau_{b} = 1, \nu_{b} = 1$} (t);
                \end{tikzpicture}
                \caption{graph $G$\\\phantom{y}}
                \label{fig:example:graph}
            \end{subfigure}
            \begin{subfigure}[b]{0.3\textwidth}
                \begin{tikzpicture}[x=48pt, y=32pt]
                    \draw[thick,->,>=stealth,lightgray] (0,0) -- +(2.25, 0) node[below,black] {$\vartheta$};
                    \foreach \t in {1, 2} {\draw[lightgray] (\t, 0) -- +(0, -4pt) node[below] {\tiny\t};};
                    \draw[thick,->,>=stealth,lightgray] (0,0) -- +(0, 3.5) node[left,black] {$\nu_0$};
                    \foreach \t in {1, 2, 3} {\draw[lightgray] (0, \t) -- +(-4pt, 0) node[left] {\tiny\t};};
                    \draw (0,2) -- (1,2);
                    \foreach \j in {0, ..., 4}{
                        \draw[densely dotted,gray] ({1 + pow(2, -2*\j-2)},2) -- ({1 + pow(2, -2*\j-2)},0);
                        \draw ({1 + pow(2, -2*\j-2)},0) -- ({1 + pow(2, -2*\j-1)},0);
                        \draw[densely dotted,gray] ({1 + pow(2, -2*\j-1)},0) -- ({1 + pow(2, -2*\j-1)},2);
                        \draw ({1 + pow(2, -2*\j-1)},2) -- ({1 + pow(2, -2*\j)},2);
                    }
                    \draw (2,2) -- (2.25,2);
                \end{tikzpicture}
                \caption{inflow rate $\nu_0$\\\phantom{y}}
                \label{fig:example:inflow}
            \end{subfigure}
            \begin{subfigure}[b]{0.3\textwidth}
                \begin{tikzpicture}[x=48pt, y=32pt]
                    \draw[thick,->,>=stealth,lightgray] (0,0) -- (2.25, 0) node[below,black] {$\vartheta$};
                    \foreach \t in {1, 2} {\draw[lightgray] (\t, 0) -- +(0, -4pt) node[below] {\tiny\t};};
                    \draw[thick,->,>=stealth,lightgray] (0,0) -- (0, 3.5) node[left,black] {$\ell_t$};
                    \foreach \t in {1, 2, 3} {\draw[lightgray] (0, \t) -- +(-4pt, 0) node[left] {\tiny\t};};
                    \draw[densely dotted,gray] (0,1) -- node[sloped,pos=0.1,gray,rectangle,fill=white,inner sep=1pt] {\tiny $\vartheta + 1$} (2.25,3.25);
                    \draw[densely dotted,gray] (0,1.5) -- node[sloped,pos=0.15,gray,rectangle,fill=white,inner sep=1pt] {\tiny $\tfrac{1}{2} (\vartheta + 3)$} (2.25,2.625);
                    \draw[densely dotted,gray] (0,0) -- node[sloped,pos=0.5,gray,rectangle,fill=white,inner sep=1pt] {\tiny $\vartheta$} (2.25,2.25);
                    \draw (0,0) -- (1,2);
                    \foreach \j in {0, ..., 4}{
                        \draw ({1 +     pow(2, -2*\j-2)}, {2 +     pow(2, -2*\j-2)}) -- ({1 +     pow(2, -2*\j-1)}, {2 +     pow(2, -2*\j-2)});
                        \draw ({1 +     pow(2, -2*\j-1)}, {2 +     pow(2, -2*\j-2)}) -- ({1 + 3 * pow(2, -2*\j-2)}, {2 + 3 * pow(2, -2*\j-2)});
                        \draw ({1 + 3 * pow(2, -2*\j-2)}, {2 + 3 * pow(2, -2*\j-2)}) -- ({1 +     pow(2, -2*\j)},   {2 +     pow(2, -2*\j)});
                    }
                    \draw (2,3) -- (2.25,3.25);
                \end{tikzpicture}
                \caption{earliest time function of $t$ in a dynamic equilibrium}
                \label{fig:example:label}
            \end{subfigure}
            \caption{Example with non-right-monotone inflow rate for which the $\alpha$-extension does not work.}
        \end{figure}
        A queue grows on $a$ up to time $1$, when $b$ becomes active, i.e., $A^\prime_1 = \set{a, b}$.
        After time $1$, the following three phases repeat for every $k \in 2 \N$.
        For times $\vartheta \in [1 + 2^{-k-2}, 1 + 2^{-k-1})$, there is no inflow, the set of active arcs is $A^\prime_\vartheta = \set{a}$, and the queue on $a$ shrinks.
        At time $1 + 2^{-k-1}$ the inflow sets in again.
        For times $\vartheta \in [1 + 2^{-k-1}, 1 + 3 \cdot 2^{-k-2})$, the set of active arcs is still $A^\prime_\vartheta = \set{a}$, but the queue on $a$ is growing.
        At time $1 + 3 \cdot 2^{-k-2}$, $b$ gets active.
        For times $\vartheta \in [1 + 3 \cdot 2^{-k-2}, 1 + 2^{-k})$, the set of active arcs is $A^\prime_\vartheta = \set{a, b}$ and the queue lengths stay constant.
        At time $1 + 2^{-k}$, the inflow stops again, and the three phases repeat.
        In particular, $G^\prime_{\vartheta}$ is not constant on $(1, 1 + \varepsilon)$ for any $\varepsilon > 0$.
\\
        Applying an $\alpha$-extension at time $1$ would yield that for small $\varepsilon > 0$
        \begin{equation*}
            \ell_t\colon \Rp \to \Rp, \quad 1 + \varepsilon \mapsto \begin{cases}
                1 + \tfrac{1}{3} 2^{-k} & \text{for } \varepsilon \in [2^{-k-1}, 2^{-k}), k \in 2 \N - 1 \\
                1 + \varepsilon - \tfrac{1}{3} 2^{-k} & \text{for } \varepsilon \in [2^{-k-1}, 2^{-k}), k \in 2 \N
            \end{cases} .
        \end{equation*}
        Then, $A^\prime_{1 + \varepsilon} = \set{a}$ for small $\varepsilon > 0$.
        But \eqref{eq:differentialequation} is not fulfilled at times $1 + \varepsilon$ for $\varepsilon \in [2^{-k-1}, 2^{-k}), k \in 2 \N$.
    \end{example}

    \section{Normalized thin flows with resetting in series-parallel\\graphs}
    \label{section:seriesparallel}

    The nature of the proof of existence makes the problem of computing a normalized thin flow with resetting part of the complexity class PPAD as introduced by \citet{P1994}.
    It remains open whether this computational problem can be solved in polynomial time for general acyclic arc sets $A^* \subseteq A^\prime \subseteq A$.
    So far, the complexity is known only for the cases $A^* = A^\prime$ and $A^* = \emptyset$.
    For $A^* = A^\prime$, the complementarity problem in \cref{theorem:normalizedthinflow:linearcomplementarityproblem} reduces to a system of linear equations and, hence, can be solved efficiently.
    For $A^* = \emptyset$, the problem is also solvable in polynomial time as shown by \citet{KS2011}.
    Instead of restricting the set $A^*$, we address the computation of normalized thin flows with resetting for graphs $(V, A^\prime)$ that are two-terminal series-parallel (see~\cite{BLS1999}).

    \begin{definition}[Two-terminal directed graph]
        A directed acyclic (multi)graph $G = (V, A)$ is two-terminal if it has a unique source $s_G$ and a unique sink $t_G$, i.e.,
        $\set{v \in V\colon \delta^-(v) = \emptyset} = \set{s_G}$ and
        $\set{v \in V\colon \delta^+(v) = \emptyset} = \set{t_G}$.
    \end{definition}

    \begin{definition}[Two-terminal series-parallel directed graph]
        Let $G_1 = (V_1, A_1)$, $G_2 = (V_2, A_2)$, and $G = (V, A)$ be two-terminal directed graphs.
        $G$ is the \emph{series composition} of $G_1$ and $G_2$ if
        $V_1 \cup V_2 = V$, $V_1 \cap V_2 = \set{t_{G_1}}$, $t_{G_1} = s_{G_2}$, and $A_1 \cupdot A_2 = A$.
        We write $G = G_1 \series G_2$.
        $G$ is the \emph{parallel composition} of $G_1$ and $G_2$ if
        $V_1 \cup V_2 = V$, $V_1 \cap V_2 = \set{s_{G_1}, t_{G_1}}$, $s_{G_1} = s_{G_2}$, $t_{G_1} = t_{G_2}$, and $A_1 \cupdot A_2 = A$.
        We write $G = G_1 \parallel G_2$.
        The class of \emph{two-terminal series-parallel directed graphs} is the smallest set of graphs which contains the graph(s) with two vertices and a single arc in-between, and is closed with respect to series and parallel composition.
    \end{definition}

    The following two lemmas show how the function $\varrho^{G^\prime}$ relates to the functions $\varrho^{\smash{G^\prime_1}}$ and $\varrho^{\smash{G^\prime_2}}$ for the series composition $G^\prime = G^\prime_1 \series G^\prime_2$ as well as the parallel composition $G^\prime = G^\prime_1 \parallel G^\prime_2$.
    Restricting a thin flow with resetting in $G^\prime$ to $G^\prime_1$ or $G^\prime_2$ yields a thin flow with resetting in the respective graph.
    The other way round, the conditions for a flow to be a thin flow with resetting in $G^\prime$ are roughly a combination of the conditions for flows in $G^\prime_1$ and $G^\prime_2$.
    The only additional requirement is to have one common label $\ell^\prime_v$ at every common vertex $v$ of $G^\prime_1$ and $G^\prime_2$.
    For the series composition this synchronization of labels is straightforward.
    For the parallel composition, it is achieved by splitting the total amount of flow appropriately between $G^\prime_1$ or $G^\prime_2$.

    \begin{lemma}[Series composition]
        \label{lem:series}
        Let $G^\prime = (V, A^\prime)$ be a two-terminal directed acyclic graph.
        Further, let $V_1, V_2 \subseteq V$ be vertex sets and set $G^\prime_i := G^\prime[V_i]$ for $i = 1, 2$.
        If $G^\prime = G^\prime_1 * G^\prime_2$ and $n_1, n_2, n$ are the numbers of breakpoints of $\varrho^{G^\prime_1}, \varrho^{G^\prime_2}, \varrho^{G^\prime}$, respectively, then $n \le n_1 + n_2$.
    \end{lemma}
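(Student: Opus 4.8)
The plan is to single out the cut vertex $m := t_{G^\prime_1} = s_{G^\prime_2}$ shared by $G^\prime_1$ and $G^\prime_2$ (with $s := s_{G^\prime_1}$ and $t := t_{G^\prime_2}$ the source and sink of $G^\prime$), to express $\varrho^{G^\prime}$ through $\varrho^{G^\prime_1}$ and $\varrho^{G^\prime_2}$ by splitting a thin flow at $m$, and then to count breakpoints. First I would note that in a series composition every vertex of $V_1 \setminus \set{m}$ has all its incident arcs inside $A^\prime_1$, and $\delta^-(m) \subseteq A^\prime_1$ while $\delta^+(m) \subseteq A^\prime_2$. Hence, for a normalized thin $s$--$t$-flow with resetting $x^\prime$ in $G^\prime$ of value $\nu_0^\prime$ with corresponding labels $\ell^\prime$ and $\ell^\prime_s = 1$, a routine flow-conservation argument shows that $(x^\prime, \ell^\prime)$ restricted to $V_1$ is a normalized thin $s$--$m$-flow with resetting in $G^\prime_1$ of value $\nu_0^\prime$ with label $1$ at $s$, and restricted to $V_2$ is a normalized thin $m$--$t$-flow with resetting in $G^\prime_2$ of value $\nu_0^\prime$ with label $\ell^\prime_m$ at $m$. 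By uniqueness of labels (\cref{corollary:normalizedthinflow:uniqueness}), $\ell^\prime_v = \varrho^{G^\prime_1}_v(\nu_0^\prime)$ for $v \in V_1$; in particular $\ell^\prime_m = g(\nu_0^\prime)$ with $g := \varrho^{G^\prime_1}_m$, and $g(\nu_0^\prime) > 0$ for $\nu_0^\prime > 0$ because some arc $a = (v, m)$ then carries $x^\prime_a > 0$, forcing $\ell^\prime_m \ge x^\prime_a / \nu_a > 0$. Using the scaling identity \cref{lem:normalizedthinflow:parametric:monotonicity} \ref{lem:normalizedthinflow:parametric:monotonicity:scaling} for $G^\prime_2$ with label $g(\nu_0^\prime)$ at $m$ and uniqueness of labels once more, I obtain for all $\nu_0^\prime \ge 0$
\[
    \varrho^{G^\prime}_v(\nu_0^\prime) = \varrho^{G^\prime_1}_v(\nu_0^\prime) \text{ for } v \in V_1 , \qquad \varrho^{G^\prime}_v(\nu_0^\prime) = g(\nu_0^\prime) \cdot \varrho^{G^\prime_2}_v\bigl(h(\nu_0^\prime)\bigr) \text{ for } v \in V_2 ,
\]
where $h(\nu_0^\prime) := \nu_0^\prime / g(\nu_0^\prime)$, extended by continuity to $\nu_0^\prime = 0$ (all functions involved are continuous by \cref{lem:normalizedthinflow:parametric}).

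For the count, let $B_1, B_2, B$ be the sets of breakpoints of the piecewise linear functions $\varrho^{G^\prime_1}, \varrho^{G^\prime_2}, \varrho^{G^\prime}$, so $n_i = \card{B_i}$ and $n = \card{B}$. The $V_1$-coordinates of $\varrho^{G^\prime}$ are exactly $\varrho^{G^\prime_1}$ and thus contribute precisely the points of $B_1$. For $v \in V_2$, on an interval where $g$ is linear, say $g(\nu_0^\prime) = a\nu_0^\prime + b$ with $a\nu_0^\prime + b > 0$, and where $\varrho^{G^\prime_2}_v$ is linear at $h(\nu_0^\prime)$, say $\varrho^{G^\prime_2}_v(u) = cu + d$, the formula above becomes $\varrho^{G^\prime}_v(\nu_0^\prime) = (a\nu_0^\prime + b)\bigl(c\tfrac{\nu_0^\prime}{a\nu_0^\prime + b} + d\bigr) = (c + ad)\nu_0^\prime + bd$, which is linear; hence $B \subseteq B_1 \cup \set{\nu_0^\prime\colon h(\nu_0^\prime) \in B_2}$. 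The key observation now is that $h$ is monotonically non-decreasing, since by \cref{lem:normalizedthinflow:parametric:monotonicity} \ref{lem:normalizedthinflow:parametric:monotonicity:monotonicity} the map $\nu_0^\prime \mapsto g(\nu_0^\prime)/\nu_0^\prime$ is non-increasing and $h$ is its reciprocal. Therefore, for every $u \in B_2$ the preimage $h^{-1}(u)$ is an interval; if it is non-degenerate then $h$ is constant on it, which forces $g(\nu_0^\prime) = \nu_0^\prime/u$ (a scaling through the origin) on its interior, so that $\varrho^{G^\prime}_v$ is linear there and contributes no breakpoint, while the endpoints of $h^{-1}(u)$ are breakpoints of $g$ and hence already lie in $B_1$. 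Consequently each $u \in B_2$ accounts for at most one point of $\set{\nu_0^\prime\colon h(\nu_0^\prime) \in B_2} \setminus B_1$, so $n = \card{B} \le \card{B_1} + n_2 = n_1 + n_2$.

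I expect the main obstacle to be this last counting step: making rigorous that stitching together the linear pieces of $g$, the M\"obius-type map $h$, and the linear pieces of $\varrho^{G^\prime_2}$ genuinely yields linear pieces of $\varrho^{G^\prime}$, and that neither the intervals of constancy of $h$ (where $g$ is a scaling through the origin) nor the possibly degenerate value $g(0)$ produce extra breakpoints. The monotonicity of $h$, inherited from \cref{lem:normalizedthinflow:parametric:monotonicity}, is exactly what keeps the preimage of each breakpoint of $\varrho^{G^\prime_2}$ from splitting into several new breakpoints.
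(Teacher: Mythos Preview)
Your proof is correct and follows essentially the same route as the paper: you derive the identical decomposition formula $\varrho^{G'}_v(\nu_0') = g(\nu_0')\cdot\varrho^{G'_2}_v\bigl(h(\nu_0')\bigr)$ via the scaling identity of \cref{lem:normalizedthinflow:parametric:monotonicity}\ref{lem:normalizedthinflow:parametric:monotonicity:scaling}, and then use the monotonicity of $h(\nu_0') = \nu_0'/g(\nu_0')$ from \cref{lem:normalizedthinflow:parametric:monotonicity}\ref{lem:normalizedthinflow:parametric:monotonicity:monotonicity} to bound how many preimages under $h$ each breakpoint of $\varrho^{G'_2}$ can generate. Your counting step is more explicit than the paper's (you spell out the linear computation and treat the constancy intervals of $h$ separately), but the argument is the same, and your concerns about extra breakpoints from intervals of constancy or from $\nu_0'=0$ are already resolved by what you wrote.
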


    \begin{proof}
        Let $r \in V$ be such that $V_1 \cap V_2 = \set{r}$.
        As discussed in the paragraph before \cref{lem:series}, the conditions of \cref{definition:normalizedthinflow} for $G^\prime_1$ and $G^\prime_2$  also appear for $G^\prime$.
        Applying \cref{lem:normalizedthinflow:parametric:monotonicity} yields that, for $\nu^\prime_0 > 0$
        \begin{alignat*}{2}
            \varrho^{G^\prime}_v(\nu^\prime_0) &= \varrho^{\smash{G^\prime_1}}_v(\nu^\prime_0) &\quad&\text{if } v \in V_1 \text{ and}
        \\
            \varrho^{G^\prime}_v(\nu^\prime_0) &= \varrho^{\smash{G^\prime_1}}_r(\nu^\prime_0) \cdot \varrho^{\smash{G^\prime_2}}_v\left( \frac{\nu^\prime_0}{\varrho^{\vphantom{G^\prime}\smash{G^\prime_1}}_r(\nu^\prime_0)} \right) &\quad&\text{if } v \in V_2 .
        \end{alignat*}
        For $v \in V_2$, $\varrho^{G^\prime}_v$ can have a breakpoint at $\nu^\prime_0 > 0$ only if $\varrho^{\smash{G^\prime_1}}_r$ does, or $\varrho^{\smash{G^\prime_2}}_v$ has a breakpoint at $\frac{\nu^\prime_0}{\varrho^{\vphantom{G^\prime}\smash{G^\prime_1}}_r(\nu^\prime_0)}$ and the function $\frac{\nu^\prime_0}{\varrho^{\vphantom{G^\prime}\smash{G^\prime_1}}_r(\nu^\prime_0)}$ is not constant around $\nu^\prime_0$.
        \Cref{lem:normalizedthinflow:parametric:monotonicity} shows that $\frac{\nu^\prime_0}{\varrho^{\vphantom{G^\prime}\smash{G^\prime_1}}_r(\nu^\prime_0)}$ is monotone in $\nu^\prime_0$.
        Therefore, $n \le n_1 + n_2$ follows.
        See \cref{fig:series} for an illustration.
    \end{proof}

    \begin{lemma}[Parallel composition]
        \label{lem:parallel}
        Let $G^\prime = (V, A^\prime)$ be a two-terminal directed acyclic graph.
        Further, let $V_1, V_2 \subseteq V$ be vertex sets and set $G^\prime_i := G^\prime[V_i]$ for $i = 1, 2$.
        If $G^\prime = G^\prime_1 \parallel G^\prime_2$ and $n_1, n_2, n$ are the numbers of breakpoints of $\varrho^{G^\prime_1}, \varrho^{G^\prime_2}, \varrho^{G^\prime}$, respectively, then $n \le n_1 + n_2 + 1$.
    \end{lemma}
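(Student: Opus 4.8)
The plan is to decompose a normalized thin flow with resetting in $G^\prime$ into flows in $G^\prime_1$ and $G^\prime_2$ glued along the shared terminals $s$ and $t$, following the discussion preceding \cref{lem:series}. Since $V_1 \cap V_2 = \set{s, t}$ and no arc of $A^\prime$ runs between $V_1 \setminus \set{s, t}$ and $V_2 \setminus \set{s, t}$, the restriction of a normalized thin $s$--$t$-flow $x^\prime$ with resetting in $G^\prime$ of value $\nu_0^\prime$ and label $1$ at $s$ to the arcs inside $V_1$ satisfies every condition of \cref{definition:normalizedthinflow} for $G^\prime_1$ at every vertex except possibly the sink $t$: at an internal $v \in V_1 \setminus \set{s, t}$ the in-arcs, and hence the defining relations, coincide in $G^\prime_1$ and $G^\prime$; at $t$ only the normalization inequality can fail, and it does not once some arc of $G^\prime_1$ into $t$ carries flow. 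Flow conservation at $s$ gives $\nu_0^\prime = \mu_1 + \mu_2$ with $\mu_i$ the amount routed through $G^\prime_i$, and the common label at $t$ enforces $\varrho^{G^\prime_1}_t(\mu_1) = \varrho^{G^\prime_2}_t(\mu_2)$ whenever $\mu_1, \mu_2 > 0$; conversely, two such flows with matching label at $t$ combine into one in $G^\prime$. Invoking \cref{thm:normalizedthinflow:existence}, \cref{corollary:normalizedthinflow:uniqueness} and continuity of the $\varrho^{G^\prime_i}_t$ (to handle $\mu_i = 0$), this yields
\[
  \varrho^{G^\prime}_t(\nu_0^\prime) = \max_{\mu_1 + \mu_2 = \nu_0^\prime,\ \mu_1, \mu_2 \ge 0} \min\set{\varrho^{G^\prime_1}_t(\mu_1),\, \varrho^{G^\prime_2}_t(\mu_2)} ,
\]
and, for an optimal split $(\mu_1^*(\nu_0^\prime), \mu_2^*(\nu_0^\prime))$, the identities $\varrho^{G^\prime}_v = \varrho^{G^\prime_i}_v \circ \mu_i^*$ for all $v \in V_i \setminus \set{t}$.

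By \cref{lem:normalizedthinflow:parametric} the flows admit a continuous, piecewise-linear selection $\chi^{G^\prime}$, so I would take $\mu_i^*(\nu_0^\prime)$ to be the amount $\chi^{G^\prime}(\nu_0^\prime)$ sends through $G^\prime_i$; then $\mu_1^*, \mu_2^*$ are continuous, piecewise linear and sum to $\nu_0^\prime$. \cref{theorem:normalizedthinflow:monotonicity} lets one moreover choose $\mu_1^*$ (hence $\mu_2^*$) monotonically non-decreasing, since a decrease of $\mu_1^*$ would drop the label $G^\prime_1$ assigns to $t$ while $\nu_0^\prime$ grows. Thus $\nu_0^\prime \mapsto (\mu_1^*, \mu_2^*)$ traces a continuous, piecewise-linear path in $\Rp^2$, monotone in both coordinates and issuing from the origin. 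As the $\mu_i^*$ are non-decreasing, once both are positive they remain positive, so there is at most one threshold $\nu^{\mathrm{tr}} \in [0, \infty]$ at which the set of branches carrying flow switches, from a single branch on $[0, \nu^{\mathrm{tr}})$ (where some $\mu_i^*$ vanishes) to both branches on $(\nu^{\mathrm{tr}}, \infty)$ (where $\varrho^{G^\prime_1}_t \circ \mu_1^* = \varrho^{G^\prime_2}_t \circ \mu_2^*$).

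To count breakpoints I would mark the $n_1$ breakpoints of $\varrho^{G^\prime_1}$ on the $\mu_1$-axis and the $n_2$ breakpoints of $\varrho^{G^\prime_2}$ on the $\mu_2$-axis, cutting $\Rp^2$ into grid cells; a path monotone in both coordinates crosses these grid lines at most $n_1 + n_2$ times, so $(\mu_1^*, \mu_2^*)$ visits at most $n_1 + n_2 + 1$ cells. On any one cell all components of $\varrho^{G^\prime_1}$ and $\varrho^{G^\prime_2}$ are affine in their arguments, and so is each $\mu_i^*$ in $\nu_0^\prime$: in the both-branch regime $\mu_1^*$ has slope $c_2 / (c_1 + c_2)$ with $c_i$ the slope of $\varrho^{G^\prime_i}_t$ -- constant on the cell -- while in the single-branch regime one $\mu_i^*$ is constant. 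Hence $\varrho^{G^\prime}$, equal to $\varrho^{G^\prime_i}_v \circ \mu_i^*$ on $V_i \setminus \set{t}$ and to $\min\set{\varrho^{G^\prime_1}_t \circ \mu_1^*, \varrho^{G^\prime_2}_t \circ \mu_2^*}$ at $t$, is affine on each cell, except on the unique cell whose interior contains $\nu^{\mathrm{tr}}$, where it may acquire one extra breakpoint. So every breakpoint of $\varrho^{G^\prime}$ sits at one of the at most $n_1 + n_2$ cell boundaries or at $\nu^{\mathrm{tr}}$, giving $n \le n_1 + n_2 + 1$; the parallel composition of one resetting and one non-resetting $s$--$t$-arc of unit capacity (where $n_1 + n_2 = 1$ but $\varrho^{G^\prime}$ has two breakpoints) shows the bound is tight. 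The step I expect to be hardest is the rigorous gluing of the first paragraph -- checking precisely that the restriction to $G^\prime_i$ violates only the normalization at $t$ and only when $\mu_i = 0$, that the label $t$ receives from a branch carrying no flow is exactly $\varrho^{G^\prime_i}_t(0)$ (using that the value-$0$ flow has $\set{0, 1}$-valued labels determined by reachability of resetting arcs), and that a continuous, monotone optimal split can be extracted from $\chi^{G^\prime}$; once the slopes of the $\mu_i^*$ are known to be constant on each cell, the counting is routine.
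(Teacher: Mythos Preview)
Your overall strategy---split $\nu_0'$ between the two branches, then count how many times the split $(\mu_1^*,\mu_2^*)$ crosses the grid of breakpoints of $\varrho^{G^\prime_1}$ and $\varrho^{G^\prime_2}$---is the same idea the paper uses, and the max--min formula together with the tightness example are correct. The paper packages the argument differently: instead of pulling the split out of $\chi^{G^\prime}$, it builds the split functions $\nu_1,\nu_2$ \emph{explicitly} by writing down the (generalized) inverses $\ubar{\nu}_i(\rho),\bar{\nu}_i(\rho)$ of $\varrho^{G^\prime_i}_t$, evaluating them at the finite value set $R=\{\varrho^{G^\prime_i}_t(\text{breakpoints}),\varrho^{G^\prime_i}_t(0)\}$, and interpolating linearly on the resulting abscissae $\Nu$. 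This gives, for free, that $\nu_1,\nu_2$ are monotone, piecewise linear, and have breakpoints only inside $\Nu$, with $\card{\Nu}\le n_1+n_2+2$; the ``$+1$'' in the bound comes from the two boundary values $\varrho^{G^\prime_i}_t(0)\in R$, one of which lands at $\nu_0'=0$.

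Where your argument has a genuine gap is precisely in the construction of $\mu_i^*$. Taking $\mu_i^*$ as the $G^\prime_i$-throughput of $\chi^{G^\prime}$ gives you continuous piecewise linearity but \emph{no} control on its breakpoints: \cref{lem:normalizedthinflow:parametric} does not bound them. Your attempt to recover the slope $c_2/(c_1+c_2)$ on each cell fails exactly when $c_1=c_2=0$ (both $\varrho^{G^\prime_i}_t$ flat on the cell), where the split is genuinely non-unique and $\chi^{G^\prime}$ could wiggle, introducing spurious breakpoints in $\varrho^{G^\prime}_v=\varrho^{G^\prime_i}_v\circ\mu_i^*$ for internal vertices $v$ whose $\varrho^{G^\prime_i}_v$ is \emph{not} flat on that cell. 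The monotonicity justification has the same weakness: ``a decrease of $\mu_1^*$ would drop the label $G^\prime_1$ assigns to $t$'' is false on flat pieces of $\varrho^{G^\prime_1}_t$, so \cref{theorem:normalizedthinflow:monotonicity} alone does not force the particular selection coming from $\chi^{G^\prime}$ to be monotone. The cleanest repair is to abandon $\chi^{G^\prime}$ here and construct $\mu_i^*$ directly as the paper does (or, equivalently, set $\mu_1^*(\nu_0')=\ubar{\nu}_1(\varrho^{G^\prime}_t(\nu_0'))$ on the both-branch regime and interpolate across the flat pieces); once that is done, your grid-crossing count goes through verbatim.
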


    \begin{proof}
        Considering \cref{definition:normalizedthinflow} or the complementarity problem from \cref{theorem:normalizedthinflow:linearcomplementarityproblem} reveals that for every $\nu^\prime_0 \ge 0$
        \begin{alignat*}{2}
            \varrho^{\smash{G^\prime}}_t(\nu^\prime_0) &= \min_{i \in \set{1, 2}} \varrho^{\smash{G^\prime_i}}_t(\nu^\prime_i) \\
            \varrho^{\smash{G^\prime}}_t(\nu^\prime_0) &= \varrho^{\smash{G^\prime_i}}_t(\nu^\prime_i) &\quad&\text{for } i \in \set{1, 2} \text{ with } \nu^\prime_i > 0 \\
            \varrho^{\smash{G^\prime}}_v(\nu^\prime_0) &= \varrho^{\smash{G^\prime_i}}_v(\nu^\prime_i) &\quad&\text{for } i \in \set{1, 2}, v \in V_i \setminus \set{t} \\
            \nu^\prime_1 + \nu^\prime_2 &= \nu^\prime_0 \\
            \nu^\prime_1, \nu^\prime_2 &\ge 0 .
        \end{alignat*}
        By \cref{corollary:normalizedthinflow:uniqueness}, $\varrho^{\smash{G^\prime}}_t(\nu^\prime_0)$ is uniquely determined by this non-linear complementarity problem for given $\nu^\prime_0$, $\varrho^{\smash{G^\prime_1}}_t$, and $\varrho^{\smash{G^\prime_2}}_t$.
        However, $\nu^\prime_1$ and $\nu^\prime_2$ are generally not uniquely determined.
        This fact is independent of the equations for all $v \in V \setminus \set{t}$.
        Hence, we ignore those for the time being and find functions $\nu_1\colon \Rp \to \Rp$ and $\nu_2\colon \Rp \to \Rp$ which are piecewise linear and describe a solution $\nu_1(\nu^\prime_0), \nu_2(\nu^\prime_0)$ for every $\nu^\prime_0 \ge 0$.
        The idea is to define them on a discrete set for which they are unique solutions and interpolate linearly.

        For $i \in \set{1, 2}$, define the functions $\ubar{\nu}_i\colon \Rp \to \Rp$ and $\bar{\nu}_i\colon \Rp \to \Rp$ by
        \begin{equation*}
            \ubar{\nu}_i(\rho) := \min \set[\big]{\nu^\prime_i \ge 0 \colon \varrho^{\smash{G^\prime_i}}_t(\nu^\prime_i) \ge \rho} \quad \text{ and } \quad
            \bar{\nu}_i(\rho) := \inf \set[\big]{\nu^\prime_i \ge 0 \colon \varrho^{\smash{G^\prime_i}}_t(\nu^\prime_i) > \rho } .
        \end{equation*}
        Then for $\rho \ge \varrho^{\smash{G^\prime_i}}_t(0)$, the preimage of $\rho$ under the function $\varrho^{\smash{G^\prime_i}}_t$ is exactly the interval $\left[\ubar{\nu}_i(\rho), \bar{\nu}_i(\rho)\right]$.
        For $\rho < \varrho^{\smash{G^\prime_i}}_t(0)$, we get $\ubar{\nu}_i(\rho) = \bar{\nu}_i(\rho) = 0$.
        Note that $\bar{\nu}_i(\rho) < \ubar{\nu}_i(\hat{\rho})$ for all $\varrho^{\smash{G^\prime_i}}_t(0) \le \rho < \hat{\rho}$ due to the monotonicity of $\varrho^{\smash{G^\prime_i}}_t$.
        \\
        Without loss of generality, we can assume that $\varrho^{\smash{G^\prime_1}}_t(0) \le \varrho^{\smash{G^\prime_2}}_t(0)$.
        For $\nu_0^\prime \ge 0$, there is a unique $\rho \ge \varrho^{\smash{G^\prime_1}}_t(0)$ such that $\nu_0^\prime \in \big[ \ubar{\nu}_1(\rho) + \ubar{\nu}_2(\rho), \bar{\nu}_1(\rho) + \bar{\nu}_2(\rho) \big]$.
        If $\rho < \varrho^{\smash{G^\prime_2}}_t(0)$, then $\rho = \varrho^{\smash{G^\prime_1}}_t(\nu_0^\prime)$.
        Otherwise, $\rho$ is the value at the intersection of the graphs of $\varrho^{\smash{G^\prime_1}}_t$ and $\varrho^{\smash{G^\prime_2}}_t(\nu_0^\prime - \cdot)$, see \cref{fig:parallel}.
        The set of solutions $\big(\varrho^{\smash{G^\prime}}_t(\nu^\prime_0), \nu_1^\prime, \nu_2^\prime \big)$ to the above complementarity problem can be written as
        \begin{equation*}
            \set[\big]{(\rho, \nu_1^\prime, \nu_2^\prime) \in \Rp \times \left[\ubar{\nu}_1(\rho), \bar{\nu}_1(\rho)\right] \times \left[\ubar{\nu}_2(\rho), \bar{\nu}_2(\rho)\right]\colon \nu^\prime_1 + \nu^\prime_2 = \nu^\prime_0, \rho \ge \varrho^{\smash{G^\prime_1}}_t(0) } .
        \end{equation*}
        Thus, a solution $(\rho, \nu_1^\prime, \nu_2^\prime)$ is unique if and only if $\nu^\prime_0 = \ubar{\nu}_1(\rho) + \ubar{\nu}_2(\rho)$ or $\nu^\prime_0 = \bar{\nu}_1(\rho) + \bar{\nu}_2(\rho)$.

        Let $R := \set[\big]{\varrho^{\smash{G^\prime_1}}_t(0), \varrho^{\smash{G^\prime_2}}_t(0)} \cup \set[\big]{\varrho^{\smash{G^\prime_i}}_t(\nu^\prime_0) \colon i \in \set{1, 2} \text{ and } \nu^\prime_0 \ge 0 \text{ is breakpoint of } \varrho^{\smash{G^\prime_i}}_t }$ be the values at breakpoints of $\varrho^{\smash{G^\prime_1}}_t$ and $\varrho^{\smash{G^\prime_2}}_t$ (including the border of the domain).
        By the above, $\nu_1$ and $\nu_2$ are in particular uniquely determined on $\Nu := \set{\ubar{\nu}_1(\rho) + \ubar{\nu}_2(\rho)\colon \rho \in R} \cup \set{\bar{\nu}_1(\rho) + \bar{\nu}_2(\rho)\colon \rho \in R}$.
        Hence, for every $i \in \set{1, 2}$ and $\rho \in R$, we set
        \begin{equation*}
            \nu_i\big( \ubar{\nu}_1(\rho) + \ubar{\nu}_2(\rho) \big) := \ubar{\nu}_i(\rho)
            \quad \text{and} \quad
            \nu_i\big( \bar{\nu}_1(\rho) + \bar{\nu}_2(\rho) \big) := \bar{\nu}_i(\rho) .
        \end{equation*}
        These definitions are extended to $\Rp$ by linear interpolation.
        Then for $i = 1, 2$, the composition $\varrho^{\smash{G^\prime_i}}_t \circ \nu_i$ is piecewise linear with breakpoints only in $\Nu$.
        In particular, all breakpoints of $\varrho^{\smash{G^\prime}}_t$ lie in~$\Nu$.
        By construction, $\nu_1(\nu^\prime_0)$ and $\nu_2(\nu^\prime_0)$ define a solution for every $\nu^\prime_0 \in N$.
        Linearity in-between the breakpoints generalizes this to all $\nu^\prime_0 \in \Rp$.

        The above allows to bound the number of breakpoints of $\varrho^{G^\prime}_t$.
        For $\rho \in R$, the strict inequality $\ubar{\nu}_1(\rho) + \ubar{\nu}_2(\rho) < \bar{\nu}_1(\rho) + \bar{\nu}_2(\rho)$ holds only if there is $i \in \set{1, 2}$ such that $\ubar{\nu}_i(\rho) < \bar{\nu}_i(\rho)$ and, thus, $\ubar{\nu}_i(\rho)$ and $\bar{\nu}_i(\rho)$ are two breakpoints of $\varrho^{\smash{G^\prime_i}}_t$ with value $\rho$.
        This shows the inequality $\card{\Nu} \le n_1 + n_2 + 2$, where the constant is accounting for the border of the domain.
        As $0 \in \Nu$ due to $\varrho^{\smash{G^\prime_1}}_t(0) \in R$, it follows that $\varrho^{G^\prime}_t$ has at most $n_1 + n_2 + 1$ breakpoints.
        Repeating the counting more carefully allows to extend this bound to the number of breakpoints of $\varrho^{G^\prime}$.
        The set $\Nu$ only contains breakpoints which are based on breakpoints of $\varrho^{\smash{G^\prime_1}}_t$ and $\varrho^{\smash{G^\prime_2}}_t$.
        For $i = 1, 2$, any breakpoint of $\varrho^{\smash{G^\prime_i}}$ at which $\varrho^{\smash{G^\prime_i}}_t$ is differentiable leads to at most one breakpoint of $\varrho^{\smash{G^\prime_i}} \circ \nu_i$ additional to those in $\Nu$.
        Consequently, $\varrho^{G^\prime}$ does not have more than $n_1 + n_2 + 1$ breakpoints.
    \end{proof}

    \begin{figure}
        \centering
        \begin{subfigure}[b]{\textwidth}
            \captionsetup{width=\textwidth}
            \begin{tikzpicture}[x=3.4pt, y=16pt,breakpoint/.style={circle,fill=black,inner sep=0pt,minimum width=2pt}]
                \foreach \y in {1,2,2.3,2.5,2.6} {\draw[densely dotted,lightgray] (0, \y) -- (118, \y);};
                \begin{scope}[shift={(0, 0)}]
                    \draw[thick,->,>=stealth,lightgray] (0,0) -- +(24, 0) node[below=2pt,black] {$\nu^\prime_0$};
                    \draw[thick,->,>=stealth,lightgray] (0,0) -- +(0, 4) node[left=-2pt,black,anchor=north east] {$\varrho^{\smash{G^\prime_1}}_r$};
                    \draw[densely dotted,gray] (15, 0) node[below] {\small$\hat{\nu}_0$} -- ++(0, 4);
                    \coordinate (b0) at (0, 0);
                    \coordinate[breakpoint] (b1) at (4, 1);
                    \coordinate[breakpoint] (b2) at (6, 1);
                    \coordinate[breakpoint] (b3) at (12, 2);
                    \coordinate[breakpoint] (b4) at (17, 2.5);
                    \coordinate (b5) at (24, 3.5);
                    \draw (b0) -- (b1) -- (b2) -- (b3) -- (b4) -- (b5);
                \end{scope}
                \begin{scope}[shift={(35, 0)}]
                    \draw[thick,->,>=stealth,lightgray] (0,0) -- +(24, 0) node[below=2pt,black] {$\nu^\prime_0$};
                    \draw[thick,->,>=stealth,lightgray] (0,0) -- +(0, 4) node[left=-2pt,black,anchor=north east] {$\varrho^{\smash{G^\prime_2}}_t$};
                    \draw[densely dotted,gray] (6.5, 0) node[below] {\small$\tfrac{\hat{\nu}_0}{\varrho^{\vphantom{G^\prime}\smash{G^\prime_1}}_r(\hat{\nu}_0)}$} -- ++(0, 4);
                    \coordinate (b0) at (0, 1);
                    \coordinate[breakpoint] (b1) at (6.5, 1);
                    \coordinate (b2) at (24, 3.7);
                    \draw (b0) -- (b1) -- (b2);
                \end{scope}
                \begin{scope}[shift={(70, 0)}]
                    \draw[thick,->,>=stealth,lightgray] (0,0) -- +(48, 0) node[below=2pt,black] {$\nu^\prime_0$};
                    \draw[thick,->,>=stealth,lightgray] (0,0) -- +(0, 4) node[left=-3pt,black,anchor=north east] {$\varrho^{\smash{G^\prime}}_t$};
                    \draw[densely dotted,gray] (15, 0) node[below] {\small$\hat{\nu}_0$} -- ++(0, 4);
                    \coordinate (b0) at (0, 0);
                    \coordinate[breakpoint] (b1) at (4, 1);
                    \coordinate[breakpoint] (b2) at (6, 1);
                    \coordinate[breakpoint] (b3) at (12, 2);
                    \coordinate[breakpoint] (b4) at (15, 2.3);
                    \coordinate[breakpoint] (b5) at (17, 2.6);
                    \coordinate (b6) at (24, 3.7);
                    \draw (b0) -- (b1) -- (b2) -- (b3) -- (b4) -- (b5) -- (b6);
                \end{scope}
            \end{tikzpicture}
            \caption{Series composition $G^\prime = G^\prime_1 \series G^\prime_2$ at common vertex $r$. The breakpoint of $\varrho^{\smash{G^\prime}}_t$ at $\hat{\nu}_0$ results from the breakpoint of $\varrho^{\smash{G^\prime_2}}_t$.}
            \label{fig:series}
        \end{subfigure}
        \par\medskip
        \begin{subfigure}[b]{\textwidth}
            \captionsetup{width=\textwidth}
            \begin{tikzpicture}[x=3.4pt, y=16pt,breakpoint/.style={circle,fill=black,inner sep=0pt,minimum width=2pt}]
                \foreach \y in {1,2,2.5,4} {\draw[densely dotted,lightgray] (0, \y) -- (118, \y);};
                \node[left] at (0, 2.5) {\small$\rho$};
                \begin{scope}[shift={(0, 0)}]
                    \draw[thick,->,>=stealth,lightgray] (0,0) -- +(24, 0) node[below=2pt,black] {$\nu^\prime_1$};
                    \draw[thick,->,>=stealth,lightgray] (0,0) -- +(0, 6) node[left=-2pt,black,anchor=north east] {$\varrho^{\smash{G^\prime_1}}_t$};
                    \draw[densely dotted,gray] (17, 0) node[below] {\small$\nu_1(\hat{\nu}_0)$} -- (17, 6);
                    \coordinate (b0) at (0, 0);
                    \coordinate[breakpoint] (b1) at (4, 1);
                    \coordinate[breakpoint] (b2) at (6, 1);
                    \coordinate[breakpoint] (b3) at (12, 2);
                    \coordinate[breakpoint] (b4) at (17, 2.5);
                    \coordinate (b5) at (24, 3.5);
                    \draw (b0) -- (b1) -- (b2) -- (b3) -- (b4) -- (b5);
                \end{scope}
                \begin{scope}[shift={(35, 0)}]
                    \draw[thick,->,>=stealth,lightgray] (0,0) -- +(24, 0) node[below=2pt,black] {$\nu^\prime_2$};
                    \draw[thick,->,>=stealth,lightgray] (0,0) -- +(0, 6) node[left=-2pt,black,anchor=north east] {$\varrho^{\smash{G^\prime_2}}_t$};
                    \draw[densely dotted,gray] (8.5, 0) node[below] {\small$\nu_2(\hat{\nu}_0)$} -- (8.5, 6);
                    \coordinate (b0) at (0, 0);
                    \coordinate[breakpoint] (b1) at (6, 2);
                    \coordinate[breakpoint] (b2) at (16, 4);
                    \coordinate (b3) at (24, 6);
                    \draw (b0) -- (b1) -- (b2) -- (b3);
                \end{scope}
                \begin{scope}[shift={(70, 0)}]
                    \draw[thick,->,>=stealth,lightgray] (0,0) -- +(48, 0) node[below=2pt,black] {$\nu^\prime_0$};
                    \draw[thick,->,>=stealth,lightgray] (0,0) -- +(0, 6) node[left=-3pt,black,anchor=north east] {$\varrho^{\smash{G^\prime}}_t$};
                    \foreach \x in {17,25.5} {\draw[densely dotted,lightgray] (\x, 0) -- (\x, 6);};
                    \begin{scope}[shift={(0, 0)},gray]
                        \coordinate (b0) at (0, 0);
                        \coordinate (b1) at (4, 1);
                        \coordinate (b2) at (6, 1);
                        \coordinate (b3) at (12, 2);
                        \coordinate (b4) at (17, 2.5);
                        \coordinate (b5) at (41.5, 6);
                        \draw (b0) -- (b1) -- (b2) -- (b3) -- (b4) -- node[midway,sloped,xshift=16pt,yshift=8pt] {\tiny$\varrho^{\smash{G^\prime_1}}_t(\nu^\prime_0)$} (b5);
                    \end{scope}
                    \begin{scope}[gray]
                        \coordinate (b0) at ({25.5 - 0}, 0);
                        \coordinate (b1) at ({25.5 - 6}, 2);
                        \coordinate (b2) at ({25.5 - 16}, 4);
                        \coordinate (b3) at ({25.5 - 24}, 6);
                        \draw (b0) node[below] {\small$\hat{\nu}_0$} -- (b1) -- node[midway,sloped,xshift=-20pt,yshift=8pt] {\tiny$\varrho^{\smash{G^\prime_2}}_t(\hat{\nu}_0 - \nu^\prime_0)$} (b2) -- (b3);
                    \end{scope}
                    \coordinate (b0) at (0, 0);
                    \coordinate[breakpoint] (b1) at (7, 1);
                    \coordinate[breakpoint] (b2) at (9, 1);
                    \coordinate[breakpoint] (b3) at (18, 2);
                    \coordinate[breakpoint] (b4) at (25.5, 2.5);
                    \coordinate[breakpoint] (b5) at (43.3, 4);
                    \coordinate (b6) at (48, 4.44);
                    \draw (b0) -- (b1) -- (b2) -- (b3) -- (b4) -- (b5) -- (b6);
                    \draw [decorate,decoration={brace,amplitude=4pt,raise=0,mirror}] (25.5,5.5) -- (17,5.5) node[black,midway,yshift=10pt] {\tiny$\nu_2(\hat{\nu}_0)$};
                    \draw [decorate,decoration={brace,amplitude=4pt,raise=0,mirror}] (17,5.5) -- (0,5.5) node[black,midway,yshift=10pt] {\tiny$\nu_1(\hat{\nu}_0)$};
                \end{scope}
            \end{tikzpicture}
            \caption{Parallel composition $G^\prime = G^\prime_1 \parallel G^\prime_2$. The breakpoint $\hat{\nu}_0$ of $\varrho^{\smash{G^\prime}}_t$ is based on the intersection of $\varrho^{\smash{G^\prime_1}}_t(\nu^\prime_0)$ and $\varrho^{\smash{G^\prime_2}}_t(\hat{\nu}_0 - \nu^\prime_0)$ at a breakpoint of $\varrho^{\smash{G^\prime_1}}_t$.}
            \label{fig:parallel}
        \end{subfigure}
        \caption{Examples for a series and a parallel composition.}
        \label{fig:seriesparallel}
    \end{figure}
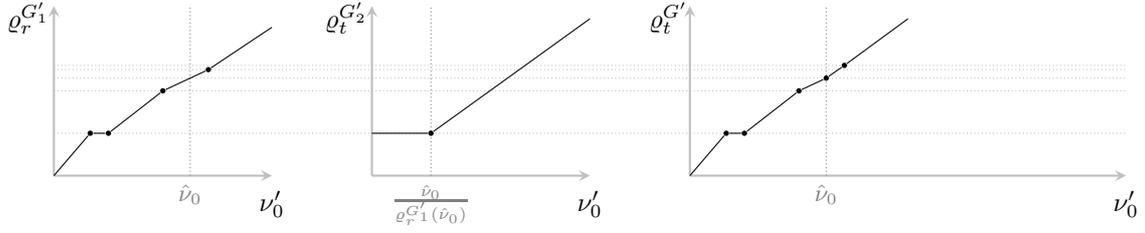
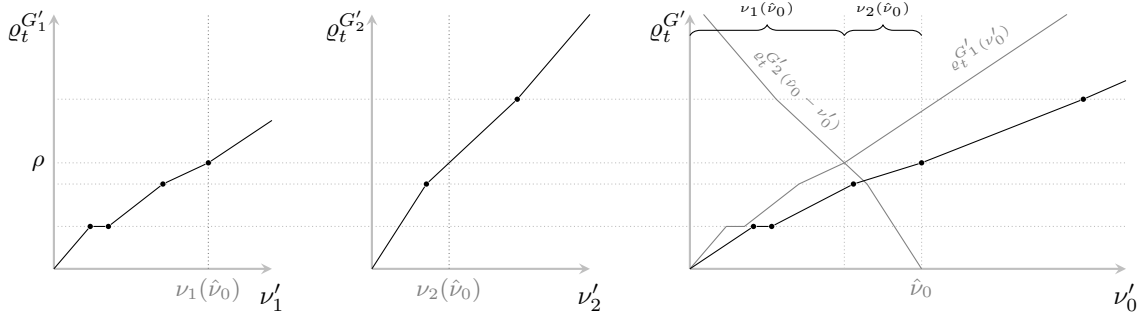

    Understanding both compositions that series-parallel graphs are based on allows us to compute the function $\varrho^{G^\prime}$ for these graphs recursively.
    Our algorithm not only computes a normalized thin flow with resetting for one single flow value, but for all flow values simultaneously.
    As seen in \cref{section:evolution}, this is necessary in order to compute dynamic equilibria for more general inflow rates than piecewise constant functions.

    \begin{theorem}[Labels in series-parallel graphs]
        \label{theorem:thinflow:twoterminalgraph}
        Let $G^\prime := (V, A^\prime)$ be a two-terminal series-parallel directed graph and $A^* \subseteq A^\prime$.
        Then $\varrho^{G^\prime}$ has at most $2 \card{A^\prime} - \card{A^*} - \card{V} + 1$ many breakpoints and can be computed in polynomial time.
    \end{theorem}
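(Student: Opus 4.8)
The plan is to induct on the recursive structure of the two-terminal series-parallel graph $G^\prime$, following its decomposition tree whose leaves are single arcs and whose internal nodes are series or parallel compositions (and whose two terminals are exactly the source $s$ and sink $t$ used throughout). For the base case, take a single arc $a$ from $s$ to $t$, so $\card{V} = 2$ and $\card{A^\prime} = 1$. By \cref{definition:normalizedthinflow} and \cref{lem:normalizedthinflow:parametric}, $\varrho^{G^\prime}_t(\nu_0^\prime) = \nu_0^\prime/\nu_a$ if $a \in A^*$, which has no breakpoint, matching $2\cdot 1 - 1 - 2 + 1 = 0$; and $\varrho^{G^\prime}_t(\nu_0^\prime) = \max\set{1, \nu_0^\prime/\nu_a}$ if $a \notin A^*$, which has a single breakpoint at $\nu_0^\prime = \nu_a$, matching $2\cdot 1 - 0 - 2 + 1 = 1$. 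This settles the leaves.

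For the inductive step, suppose $G^\prime = G^\prime_1 \series G^\prime_2$ or $G^\prime = G^\prime_1 \parallel G^\prime_2$ with $A^* = A_1^* \cupdot A_2^*$, where $G^\prime_i = (V_i, A^\prime_i)$ already satisfies the bound with $n_i \le 2\card{A^\prime_i} - \card{A_i^*} - \card{V_i} + 1$ breakpoints. In the series case $\card{V} = \card{V_1} + \card{V_2} - 1$ and $\card{A^\prime} = \card{A^\prime_1} + \card{A^\prime_2}$, so \cref{lem:series} gives $n \le n_1 + n_2 \le 2\card{A^\prime} - \card{A^*} - (\card{V}+1) + 2 = 2\card{A^\prime} - \card{A^*} - \card{V} + 1$. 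In the parallel case $\card{V} = \card{V_1} + \card{V_2} - 2$, and \cref{lem:parallel} gives $n \le n_1 + n_2 + 1 \le 2\card{A^\prime} - \card{A^*} - (\card{V}+2) + 3 = 2\card{A^\prime} - \card{A^*} - \card{V} + 1$. So the bound telescopes cleanly through both composition types: the extra $+1$ in \cref{lem:parallel} is exactly absorbed by the one extra shared vertex of a parallel composition relative to a series composition. This closes the induction on the number of breakpoints.

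For the complexity claim, I would recall that the decomposition tree of a two-terminal series-parallel (multi)graph can be built in polynomial (indeed linear) time (see~\cite{BLS1999}) and has $O(\card{A^\prime})$ nodes. Then $\varrho^{G^\prime}$ is computed bottom-up: at a leaf it is written down directly as above, and at an internal node the proofs of \cref{lem:series,lem:parallel} are constructive — they express $\varrho^{G^\prime}$ through $\varrho^{G^\prime_1}$, $\varrho^{G^\prime_2}$, and, in the parallel case, the explicitly built splitting functions $\nu_1,\nu_2$ obtained from the breakpoints of $\varrho^{\smash{G^\prime_1}}_t$ and $\varrho^{\smash{G^\prime_2}}_t$ together with the maps $\ubar{\nu}_i,\bar{\nu}_i$. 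Each combination amounts to merging sorted breakpoint lists, evaluating piecewise-linear functions, and computing finitely many intersection points, all polynomial in the sizes of the two child representations. By the breakpoint bound just proved, every intermediate function, kept in reduced form, has $O(\card{A^\prime})$ breakpoints and $O(\card{V})$ components, so representations do not blow up; summing over the $O(\card{A^\prime})$ nodes gives an overall polynomial running time.

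The main obstacle I expect is the algorithmic bookkeeping rather than the inequality: one must check that the constructive versions of \cref{lem:series,lem:parallel} can be carried out for the full vector-valued function $\varrho^{G^\prime}$ (all vertices, not only the sink) and that the reduced piecewise-linear representations — breakpoint lists, slopes, and whatever is needed to recover $\chi^{G^\prime}$ — are maintained without the number of pieces accumulating over the recursion. Degenerate situations (flow value $0$, weakly connected components of constant label as in \cref{example:normalizedthinflow:nonuniqueness}, arcs that are never active) must also be handled, but \cref{lem:normalizedthinflow:parametric,corollary:normalizedthinflow:uniqueness} supply the well-definedness that makes this routine.
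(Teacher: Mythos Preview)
Your proposal is correct and follows essentially the same approach as the paper: structural induction on the series-parallel decomposition, with the base case of a single arc and the inductive step invoking \cref{lem:series,lem:parallel} together with the vertex-count identities $\card{V} = \card{V_1} + \card{V_2} - 1$ (series) and $\card{V} = \card{V_1} + \card{V_2} - 2$ (parallel). Your write-up is somewhat more explicit in the arithmetic and in the algorithmic bookkeeping for the polynomial-time claim, but the argument is the same.
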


    \begin{proof}
        We prove the claim by a structural induction on two-terminal series-parallel graphs.
        For the base case, assume $V = \set{s, t}$ and $A^\prime$ contains a single arc $a = (s, t)$.
        Then, $\varrho^{G^\prime}_s \equiv 1$ and $\varrho^{G^\prime}_t \equiv \varrho^a(1, \cdot)$ as defined in \cref{definition:normalizedthinflow}.
        Hence, $\varrho^{G^\prime}$ has exactly $1 - \card{A^*} = 2 \card{A^\prime} - \card{A^*} - \card{V} + 1$ many breakpoints.
        Assume there are $V_1, V_2 \subseteq V$ such that $G^\prime = G^\prime[V_1] \series G^\prime[V_2]$.
        The statement for $G^\prime$ follows immediately from the induction hypothesis and \cref{lem:series} as $\card{V} = \card{V_1} + \card{V_2} - 1$.
        Now, assume there are $V_1, V_2 \subseteq V$ such that $G^\prime = G^\prime[V_1] \parallel G^\prime[V_2]$.
        Then, the statement for $G^\prime$ follows immediately from the induction hypothesis and \cref{lem:parallel} as $\card{V} = \card{V_1} + \card{V_2} - 2$.

        Since all involved functions are piecewise linear with few breakpoints, they can be represented efficiently by their linear pieces.
        Regarding the proofs of \cref{lem:series,lem:parallel}, it becomes evident that in both cases $\varrho^{G^\prime}$ can be constructed from $\varrho^{G^\prime_1}$ and $\varrho^{G^\prime_2}$ efficiently.
        To compose $G^\prime$ from single arcs, $\card{A^\prime} - 1$ compositions are needed.
        In total, the function $\varrho^{G^\prime}$ can be computed in polynomial time in the size of the input $G^\prime$, $A^*$, and $(\nu_a)_{a \in A^\prime}$.
    \end{proof}

    If the corresponding labels $\ell^\prime$ of a normalized thin flow with resetting are known, computing flow values is not hard.
    The flow value $x^\prime_a$ on $a = (v, w) \in A^\prime$ lies in the interval $[0, \nu_a \ell^\prime_w]$.
    If $\ell^\prime_v > \ell^\prime_w$ and $a \not\in A^*$, then $x^\prime_a$ is zero.
    If $\ell^\prime_v < \ell^\prime_w$ or $a \in A^*$, then $x^\prime_a = \nu_a \ell^\prime_w$ holds.
    Finding a flow satisfying these conditions can be done by a simple flow computation.

    \begin{corollary}[Normalized thin flows in series-parallel graphs]
        Normalized thin flows with resetting in two-terminal series-parallel graphs can be computed in polynomial time.
    \end{corollary}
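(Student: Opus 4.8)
The plan is to reduce the task to two polynomial-time subroutines: computing the corresponding labels, and then reconstructing a compatible flow from them. Given a shortest path graph with resetting $(V, A^\prime, A^*)$ whose underlying graph $(V, A^\prime)$ is two-terminal series-parallel, a flow value $\nu^\prime_0 \ge 0$ and a label $\ell^\prime_0 \ge 0$ at $s$, I would first invoke \cref{theorem:thinflow:twoterminalgraph} to compute the continuous piecewise linear function $\varrho^{G^\prime}$ in polynomial time, represented by its polynomially many linear pieces. Evaluating it at $\nu^\prime_0/\ell^\prime_0$ and rescaling through \cref{lem:normalizedthinflow:parametric:monotonicity} \ref{lem:normalizedthinflow:parametric:monotonicity:scaling} yields $\ell^\prime := \ell^\prime_0 \cdot \varrho^{G^\prime}(\nu^\prime_0/\ell^\prime_0)$, which by \cref{corollary:normalizedthinflow:uniqueness} are the unique corresponding labels of a normalized thin $s$--$t$-flow with resetting of value $\nu^\prime_0$ and label $\ell^\prime_0$ at $s$ (the degenerate case $\ell^\prime_0 = 0$ forces $\nu^\prime_0 = 0$ and the zero flow). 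This step is a constant number of lookups in a piecewise linear function, hence polynomial.

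For the second phase I would use the classification recalled just before the corollary: on every arc $a = (v, w) \in A^\prime$ with $a \in A^*$ or $\ell^\prime_v < \ell^\prime_w$ the flow is forced to $\nu_a \ell^\prime_w$, on every $a \in A^\prime \setminus A^*$ with $\ell^\prime_v > \ell^\prime_w$ it is forced to $0$, and the only free arcs are those $a \in A^\prime \setminus A^*$ with $\ell^\prime_v = \ell^\prime_w$, on which $x^\prime_a$ may range over $[0, \nu_a \ell^\prime_w]$. The forced values fix, at each vertex, the net amount of flow that must be carried by the free arcs; together with the $s$--$t$-demand $\nu^\prime_0$ this is an instance of the feasible-flow problem on $(V, F)$ with zero lower bounds and capacities $\nu_a \ell^\prime_w$, where $F$ denotes the set of free arcs. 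It can be solved by a single max-flow computation and is feasible because by \cref{thm:normalizedthinflow:existence} a normalized thin flow with resetting of value $\nu^\prime_0$ and label $\ell^\prime_0$ at $s$ exists, its labels equal $\ell^\prime$ by \cref{corollary:normalizedthinflow:uniqueness}, and its restriction to $F$ is a feasible point.

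The step I expect to require the most care is verifying that the vector assembled from the forced values and any solution of the feasible-flow instance is indeed a normalized thin flow with resetting in the sense of \cref{definition:normalizedthinflow}: flow conservation is guaranteed by the construction of the feasibility instance, while the complementarity relation $x^\prime_a > 0 \Rightarrow \ell^\prime_w = \varrho^a(\ell^\prime_v, x^\prime_a)$ holds because the forced values were chosen so that the corresponding equalities hold wherever flow may be positive, and on a free arc $\ell^\prime_v = \ell^\prime_w$ makes $\varrho^a(\ell^\prime_v, x^\prime_a) = \ell^\prime_w$ automatically. This is largely bookkeeping rather than a genuine obstacle; should one prefer to avoid the reconstruction altogether, the alternative is to propagate the piecewise linear flow function $\chi^{G^\prime}$ of \cref{lem:normalizedthinflow:parametric} recursively through the series and parallel compositions together with $\varrho^{G^\prime}$, exactly as the proofs of \cref{lem:series,lem:parallel} already do for the label and flow-value splits.
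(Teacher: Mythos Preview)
Your proposal is correct and follows essentially the same two-step approach the paper takes: compute the labels via \cref{theorem:thinflow:twoterminalgraph} (with the rescaling of \cref{lem:normalizedthinflow:parametric:monotonicity}), then recover a compatible flow from the arc classification by a single feasible-flow computation, exactly as the paragraph preceding the corollary outlines. One small slip: the parenthetical that $\ell^\prime_0 = 0$ forces $\nu^\prime_0 = 0$ is not true (\cref{thm:normalizedthinflow:existence} covers this case too); handle it instead by observing that for $\ell^\prime_s = 0$ every $\varrho^a$ collapses to $x^\prime_a/\nu_a$, so \eqref{eq:linearcomplementarityproblem} becomes a linear system, or simply take the limit $\ell^\prime_0 \to 0^+$ in the scaling formula.
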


    \section{Conclusion and outlook}
    \label{section:conclusion}

    We examine normalized thin flows with resetting as a parametric problem in dependency on the flow value.
    The results allow us to give a constructive proof for the existence of dynamic equilibria for single-source single-sink networks with right-monotone inflow rate.
    Further, we obtain a polynomial time algorithm for computing thin flows with resetting on two-terminal series-parallel networks.
    The recursive approach that we take for the latter does not seem to generalize to arbitrary networks.
    A central aspect that we use in this recursion is that every considered subnetwork has a single source and a single sink.

    Major open questions on the model, like the Price of Anarchy and the number of phases, seem to require insights into the relation between the thin flows of subsequent phases.
    These, in turn, would need a better understanding of the dependency of thin flows with resetting on the sets of active and resetting arcs.

    The characterization of normalized thin flows with resetting by a linear complementarity problem in \cref{theorem:normalizedthinflow:linearcomplementarityproblem} allows us to deduce some basic properties of the functions $\varrho^{G^\prime}$.
    Beyond this, it opens the way to approach normalized thin flows with resetting through the existing machinery of linear complementarity problems.
    The properties of the presented linear complementarity problem as established in \cref{lemma:linearcomplementarityproblem:p0} and in the proof of \cref{thm:normalizedthinflow:existence} show that normalized thin flows with resetting can be computed in finitely many steps via Lemke's algorithm.
    While there is no immediate non-trivial bound on the number of steps in theory, it suggests an efficient method in practice.

    \section*{Acknowledgments}
    The author is grateful to Jos{\'e} Correa, Andr{\'e}s Cristi, Dario Frascaria, Neil Olver, Tim Oosterwijk, Leon Sering, Laura Vargas Koch, and Philipp Warode for fruitful discussions on the topic.
    He further would like to thank Ulf Friedrich for his help on measure theoretic aspects and Andreas S. Schulz for his helpful input.
    Finally, the author thanks two anonymous reviewers for their valuable comments which has led to an improved presentation of this article.

    \printbibliography

\end{document}